\DeclarePairedDelimiter\abs{\lvert}{\rvert}%
\DeclarePairedDelimiter\norm{\lVert}{\rVert}%
\DeclareMathOperator*{\esssup}{ess\,sup}
\newcommand{\<}{\langle}
\renewcommand{\>}{\rangle}
\let\predefinedC\c
\renewcommand{\c}[1]{{\mathcal{#1}}}
\renewcommand{\b}[1]{{\mathbb{#1}}}
\newcommand{\N}{{\mathbb{N}}}
\newcommand{\R}{{\mathbb{R}}}
\renewcommand{\P}{{\mathbb{P}}}
\DeclareMathOperator{\E}{{\mathbb{E}}}
\DeclareMathOperator{\Var}{Var}
\newcommand{\Q}{{\mathbb{Q}}}
\newcommand{\indep}{\mathrel{\perp\!\!\!\perp}}
\newcommand{\1}{\mathbbm{1}}
\newtheorem{theorem}{Theorem}[section]
\newtheorem{lemma}[theorem]{Lemma}
\newtheorem{remark}[theorem]{Remark}
\newtheorem{corollary}[theorem]{Corollary}
\newtheorem{definition}[theorem]{Definition}
\newtheorem{standingAssumption}{Standing assumption}
\newtheorem{example}[theorem]{Example}
\begin{document}

\title{Mean-field control problems with multi-dimensional singular controls\thanks{We thank Tiziano De Angelis, Giorgio Ferrari, and seminar participants at various conferences and seminars for valuable comments and suggestions. Financial support from the International Research Training Group (IRTG) 2544 {\sl Stochastic Analysis in Interaction} and the TRR 388 {\sl Rough Analysis, Stochastic Dynamics and Related Fields}, Project B05, is gratefully acknowledged.}}

    \author{Robert Denkert\thanks{Humboldt University Berlin, Department of Mathematics, Unter den Linden 6, 10099 Berlin} \quad \quad Ulrich Horst\thanks{Humboldt University Berlin, Department of Mathematics and School of Business and Economics, Unter den Linden 6, 10099 Berlin}}
    
    \maketitle
    
\vspace{-2.6mm}
\begin{abstract}
We consider mean-field control problems with mul\-ti-di\-men\-sion\-al singular controls. A key challenge when analysing singular controls are jump costs. When controls are one-dimensional, jump costs are most naturally computed by linear interpolation. When the controls are multi-dimensional the situation is more complex, especially when the model parameters depend on an additional mean-field interaction term, in which case one needs to ``jointly'' and ``consistently'' interpolate jumps both on a distributional and a pathwise level. This is achieved by introducing the novel concept of two-layer parametrisations of stochastic processes. Two-layer parametrisations allow us to equivalently rewrite rewards in terms of continuous functions of parametrisations of the control process and to derive an explicit representation of rewards in terms of minimal jump costs. From this we derive a dynamic programming principle for {mean-field control problems} with multi-dimensional singular controls. Under the additional assumption that the value function is continuous we characterise the value function as the minimal super-solution to a certain quasi-variational inequality in the Wasserstein space.
\end{abstract}
    
    {\bf AMS Subject Classification:}{ 93E20, 91B70, 60H30}
    
    {\bf Keywords:}{ mean-field control, singular controls, quasi-variational inequality }

\maketitle


\section{Introduction and problem formulation}


We consider the {mean-field singular control problem} of maximizing a reward function of the form
\begin{equation}\label{reward}
    J(\xi) = \E\bigg[\int_0^T f(t,m_t,X_t,\xi_t) dt + g(m_T,X_T,\xi_T) - \int_{[0,T]} c(t,m_t,X_t,\xi_t) d\xi_t \bigg]
\end{equation}
over the set of adapted, non-decreasing, $l$-dimensional $(l \in \N)$ càdlàg processes $\xi$ subject to the $d$-dimensional $(d \in \N)$ state dynamics 
\begin{equation} \label{eq diffusion x}
    dX_t = b(t,m_t,X_t,\xi_t) dt + \sigma(t,m_t,X_t,\xi_t) dW_t + \gamma(t) d\xi_t.
\end{equation}
Here, $T \in (0,\infty)$ is the terminal time, $W$ is an $m$-dimensional Brownian motion on some filtered probability space $(\Omega,\c F,\b F,\P)$, and
\begin{equation} \label{eq mt}
    m_t\coloneqq \P_{(X_t,\xi_t)} 
\end{equation}
denotes the joint law of the state-control process $(X_t,\xi_t)$ at time $t \in [0,T]$. {The $d\xi_t$-integral in \eqref{reward} is to be understood as the sum of component-wise Riemann-Stieltjes integrals.}

We allow for a dependence of the running, terminal and the jump cost as well as the drift and diffusion coefficient of the state process on both the current state-control pair and their joint distribution. So called extended Mean-field (or McKean-Vlasov) control problems with only regular controls have been studied by many authors including \cite{acciaio_extended_2019,pham_bellman_2018,djete_extended_2022,buckdahn_partial_2021}. 

{Singular control problems involving mean-field terms} where the state dynamics and the running and terminal payoffs depend only on the law of the state process and the jump cost depends only on time have been considered in the context of mean-field games (MFGs) by many authors. \cite{fu_mean_2017} studied MFGs with singular and regular controls; ergodic and supermodular MFGs where the players only interact through the reward functional but not the state dynamics have recently been studied in \cite{cao_stationary_2022,dianetti_unifying_2022};  selected MFGs with singular controls and their respective $N$-player approximations were studied in, e.g.\@ \cite{campi_mean-field_2022,cao_approximation_2023,cao_mfgs_2022,guo_stochastic_2019}. 

{
The dynamic programming principle (DPP) and the characterisation of the value function via quasi-variational inequalities (QVIs) for classical singular control problems were first established using probabilistic techniques by Haussmann and Suo \cite{haussmann_singular_1995,haussmann_singular_1995-1}. Subsequent research as in \cite{federico_characterization_2014} focused on characterising the resulting optimal free boundaries arising from such QVIs. De Angelis and Milazzo \cite{de_angelis_dynamic_2023} provided a rigorous DPP derivation for a large class of problems combining singular control with classical control and discretionary stopping. Žitković \cite{zitkovic_dynamic_2014} developed an abstract dynamic programming framework using controlled Markov families and martingale measures. Talbi, Touzi and Zhang \cite{talbi2021dynamic,talbi2022viscosity} considered the closely related optimal stopping problems in a mean-field setting, establishing a DPP and QVI characterisation of the value function.}

Introducing an additional dependence of the cost functions on the current state, the control and the law of the state-control process is straightforward for continuous controls. The situation is very different when singular controls are allowed. Even in the absence of any mean-interaction the reward function \eqref{reward} may no longer be upper semi-continuous when the controls are allowed to be discontinuous; approximating a discontinuity in the control may lead to a lower cost than using the control itself.
\cite{alvarez2000singular} gives an explicit example where approximating jumps in a singular control by a number of consecutive smaller jumps generates a higher payoff than the limiting singular control; the author calls such approximating strategies \emph{chattering strategies}.\footnote{Different classes of mean-field control problems with this kind of cost structure have been investigated in \cite{hu2017singular,agram_singular_2019} using the stochastic maximum principle approach.}

To avoid chattering-type strategies it seems natural to modify the reward functional in such a way that singular controls generate the same reward as approximating strategies. In \emph{one-dimensional} singular control problems ($\xi \in \R$) without mean-field interaction, jumps are most naturally approximated by linear interpolation. This leads to reward functionals for the form 
\begin{equation}\label{eq reward functional one-dimensional non-mean-field case}
\begin{split}
    J(\xi) & ~ = \E\bigg[\int_0^T f(t,X_t) dt + g(X_T) - \int_0^T c(t,X_t) d\xi_t^c  \\
    & \qquad \qquad - \sum_{t\in [0,T]} \int_0^{\xi_t - \xi_{t-}} c(t,X_{t-} + \gamma(t) \zeta) d\zeta \bigg],
\end{split}
\end{equation}
that has been considered by many authors including \cite{min_singular_1987,zhu_generalized_1992,taksar_infinite-dimensional_1997,dufour_singular_2004,de2018stochastic}. Here, 
$\xi^c$ denotes the continuous part of $\xi$ and the third term on the right-hand side corresponds to an interpolation of the jump in space at the jump time. 

{Whereas the cost structure \eqref{eq reward functional one-dimensional non-mean-field case} is well-understood in one dimension, many practical problems are most naturally analyzed in a multi-dimensional setting. A typical example is optimal liquidation of multi-asset portfolios with singular controls (as in e.g.~\cite{HN}) and cross market impact (as in e.g.~\cite{HX}) where different assets can be traded simultaneously and where buying/selling one asset, say Apple, increases/decreases prices and/or inverse market depths\footnote{Market depth is a standard liquidity measure. It essentially measures the liquidity (number if limit orders) standing at different price levels of a limit order book. A high market depth means high liquidity and hence small impact of market orders on asset prices.} not only of the traded asset but also of correlated assets from the same sector, say Microsoft.\footnote{Such cross-dependencies are well documented in the financial economics literature.} In such models, $X$ usually denotes a vector of best bid and ask prices and portfolio holdings (hence a $3n$-dimensional vector if $n$ assets can be traded), $\xi$ denotes a vector of aggregate buying and selling activities (a $2n$-dimensional vector), $c$ represents  buy- and sell-side inverse market depths of the different assets, the $d\xi$-integral in \eqref{eq diffusion x} describes the impact of own trading on best bid/ask price dynamics and portfolio holdings, and the singular cost term $\int c \cdot d\xi_t = \int \sum_{i=1}^l c^{(i)} d\xi^{(i)}_t$ in \eqref{reward} represents market impact costs. Fu \cite{fu_extended_2022} considers a liquidation models that allows for simultaneous trading of multiple assets but, unlike our framework, does not allow for dependencies in impact costs. 
Our mean-field term $m_t = \P_{(X_t, \xi_t)}$ allows for an additional dependence of asset prices and markets depths aggregate market behaviour as in \cite{FHX}.}  

\subsection{Jump interpolation in higher dimensions}

{Dependencies in the cost functions substantially complicate the definition of the reward functional compared to the one-dimensional case \eqref{eq reward functional one-dimensional non-mean-field case}.} In the one-dimensional case, the interpolation of jumps is unique up to a reparametrisation of time. In the multi-dimensional case considered in this paper the situation is very different, as there are many ways in which jumps can be interpolated. {When jumps can occur simultaneously across different dimensions (e.g. if different assets can be traded simultaneously) and if the cost coefficient $c$ depends on the control (e.g. if market depths depend on trading strategies), then the way jumps are interpolated matters and different interpolations may result in different costs. In particular, extending the one-dimensional formula \eqref{eq reward functional one-dimensional non-mean-field case} using simple linear interpolation (as in \cite{zhu_generalized_1992}) proves insufficient in multi-dimensional settings, as illustrated by the following Example \ref{example path dependent jump costs in 2 d}.}

\begin{example}\label{example path dependent jump costs in 2 d}
{Consider a simple deterministic case where the control process $\xi \in D([0,2]; \R^2)$ is given by $\xi_t = (0,0)$ for $t<1$ and $\xi_t = (1,1)$ for $t\geq 1$. Let the singular cost cost function be $c(\xi) = \xi^{(2)}$ for $\xi = (\xi^{(1)},\xi^{(2)})$. 
\begin{itemize}
    \item \textbf{Linear Interpolation:} Consider the sequence $(\xi^n)_n$ defined by convolution with a uniform kernel:
    \begin{equation}\label{eq:approx1}
        \xi^n_s := n \int_{s-\frac 1 n}^s \xi_r dr.
    \end{equation}
    For $s \in [1, 1+\frac 1 n]$, this yields the path $\xi^n_s = (n(s-1), n(s-1))$, corresponding to a linear traversal of the segment from $(0,0)$ to $(1,1)$ over the time interval $[1, 1+\frac 1 n]$. The associated costs are then given by the integral over this interval,
    \begin{equation}
    \begin{split}
        &\int_1^{1+\frac 1 n} \Big( \xi^{n,(2)}_s d\xi^{n,(1)}_s + \xi^{n,(2)}_s d\xi^{n,(2)}_s \Big) \\
        &= \int_1^{1+\frac 1 n} \Big( n(s-1) (n ds) + n(s-1) (n ds) \Big) 
        = 2n^2 \int_1^{1+\frac 1 n} (s-1) ds 
        = 1.
    \end{split}
    \end{equation}
    Thus the limiting cost for this sequence is 1.    
    This value corresponds to the cost calculated along the straight line segment representing the jump, consistent with the approach in \cite{zhu_generalized_1992}.
    \item \textbf{Alternative Interpolation:} Consider instead the sequence $(\tilde{\xi}^n)_n$ defined component-wise by
    \begin{equation}\label{eq:approx2}
        \tilde{\xi}^n_s \coloneqq \bigg( n \int_{s-\frac 1 n}^s \xi^{(1)}_r dr \, , \, n \int_{s-\frac 2 n}^{s-\frac 1 n} \xi^{(2)}_r dr \bigg).
    \end{equation}
    The path $\tilde{\xi}^n_s$ first moves from $(0,0)$ to $(1,0)$ over the time interval $[1, 1+\frac 1 n]$, and then moves from $(1,0)$ to $(1,1)$ over the interval $[1+\frac 1 n, 1+\frac 2 n]$. The associated costs are 
    \begin{equation}
    \begin{split}
        & \int_1^{1+\frac 1 n} \tilde{\xi}^{n,(2)}_s d\tilde{\xi}^{n,(1)}_s + \int_{1+\frac 1 n}^{1+\frac 2 n} \tilde{\xi}^{n,(2)}_s d\tilde{\xi}^{n,(2)}_s \\
        &= \int_1^{1+\frac 1 n} 0 \cdot (n ds) + \int_{1+\frac 1 n}^{1+\frac 2 n} n(s-(1+\tfrac 1 n)) (nds)  = \frac 1 2.\\
    \end{split}
    \end{equation}
    Thus the limiting cost for this sequence is $\frac 1 2$. This value corresponds to the cost accrued along the path interpolating the jump from (0,0) to (1,1) via the point (1,0).
\end{itemize}}
\end{example}

The situation is even more complex when the state dynamics and/or the reward function depend on a mean-field term, {as illustrated by Example \ref{example mean-field dependent jump costs} below}. In {general, we will need to} -- in a well-defined sense -- ``jointly'' and ``consistently'' interpolate jumps both on a distributional level and on a pathwise level for each trajectory of the control process separately. 

\begin{example}\label{example mean-field dependent jump costs}
{Consider a one-dimensional mean-field scenario where all particles jump simultaneously, $\xi_t = 0$ for $t<1$ and $\xi_t = 1$ for $t \ge 1$, so the measure flow jumps from $m_{1-} = \delta_0$ to $m_1 = \delta_1$. Let the singular control cost function depend only on the measure, specifically let $c(m) = \int x^2 \,m(dx)$. Then the costs accumulated during the jump at time $t=1$ can be represented as $\int_0^1 c(m_{\text{interp}}(p)) dp$, where $p \in [0,1]$ parametrises the progress of the jump and $m_{\text{interp}}(p)$ is an interpolation path from $\delta_0$ to $\delta_1$. We compare two canonical interpolation paths:
\begin{itemize}    \item \textbf{Displacement Interpolation (Geodesic):} Consider the path $[0,1]\ni p \mapsto \delta_p$. Then $c(\delta_p) = p^2$, and the associated cost is
    \[
        \int_0^1 c(\delta_p) dp = \int_0^1 p^2 \, dp = \frac{1}{3}.
    \]
    \item \textbf{Mixture Interpolation:} Consider the path $[0,1]\ni p \mapsto (1-p)\delta_0 + p\delta_1$. It is not difficult to show that $c((1-p)\delta_0 + p\delta_1) = p$. The associated cost is
    \[
        \int_0^1 c((1-p)\delta_0 + p\delta_1) dp = \int_0^1 p \, dp = \frac{1}{2}.
    \]
\end{itemize}
In particular, the costs associated with a jump in the measure flow depend on the chosen interpolation path on the distributional level when the control costs $c$ depends on $m$.}
\end{example}

\subsection{Weak solution}
In what follows we describe our control problem in greater detail. We adopt the weak formulation as our solution concept and choose probability measures on the canonical path space instead of adapted stochastic processes as our controls. For a given initial time $t\in [0,T]$ -- and to capture the possibility of jumps at the initial time -- we fix some $\varepsilon > 0$ and choose the set
\[
    D^0 \coloneqq  D^0([t,T];\R^d\times\R^l) \subseteq D([t-\varepsilon,T];\R^d\times \R^l)
\]
of all càdlàg $\R^d \times \R^l$-valued functions on $[t-\varepsilon,T]$ that are constant on $[t-\varepsilon,{t})$ as the canonical path space for our state-control process. 

The path space will be equipped with the weak $M_1$ ($WM_1$) topology; this topology has been successfully utilised in the context of singular control by many authors, including \cite{cohen2021singular,fu_mean_2017,fu_extended_2022}. 
We denote by $\mathbb F^{X,\xi}$ the filtration generated by the canonical process $(X,\xi)$, and for any measurable space $(E,\mathcal E)$ we denote by $\c P_2(E)$ the set of all probability measures on $E$ with finite second moment, equipped with the 2-Wasserstein topology. We use the shorthand notation $\c P_2 = \c P_2(\R^d\times\R^l)$. {Finally, given random variables $(Y_1,\dotso,Y_k)$, we write $\P_{(Y_1,\dotso,Y_k)} \coloneqq \P\circ (Y_1,\dotso,Y_k)^{-1}$ for the joint law of $(Y_1,\dotso,Y_k)$ under $\P$.}

The weak solution concept adopted in this paper requires the following notion of a weak solution to the SDE \eqref{eq diffusion x}.

\begin{definition}\label{definition weak solution sde}
    We consider the canonical space
    \[
    \tilde\Omega\coloneqq D^0 {([t,T];\R^d\times\R^l)}\times C([t,T];\R^m)
    \]
    equipped with the Borel $\sigma$-algebra, the canonical process $(\tilde X,\tilde\xi,\tilde W)$ and the natural filtration $\b F^{\tilde X,\tilde\xi,\tilde W}$. A probability measure $\tilde\P \in \c P_2(\tilde\Omega)$ is a called weak solution to our SDE \eqref{eq diffusion x} if under $\tilde\P$ the following holds:
    \begin{enumerate}[label=(\roman*)]
        \item $\tilde W$ is an $\b F^{\tilde X,\tilde\xi,\tilde W}$-Brownian motion,
        \item $\tilde X$ follows the dynamics \eqref{eq diffusion x} driven by the control $\tilde\xi$ and the Brownian motion $\tilde W$ on $[t-,T]$
        \footnote{This should be understood as $\tilde X_t = \tilde X_{t-} + \gamma(t) (\tilde\xi_t-\tilde\xi_{t-})$ and $(\tilde X,\tilde\xi,\tilde W)$ satisfying the SDE \eqref{eq diffusion x} on $[t,T]$.}
        with
        \[
        m_u \coloneqq \tilde\P_{(\tilde X_u,\tilde\xi_u)} \quad \mbox{for all}  \quad u\in [t-,T].
        \]
\end{enumerate}
\end{definition}

For a given initial distribution we define the set of admissible controls as the set of all probability measures on the path space $D^0$ such that (i) the law of the canonical process $(X,\xi)$ on $D^0$ under this measure equals the marginal distribution of a weak solution to our SDE and (ii) the control process is almost surely non-decreasing. More precisely, the set of admissible controls is defined as follows. 

\begin{definition}\label{definition admissible controls} 
For any pair $(t,m) \in [0,T] \times \c P_2$ the set $\c P(t,m)$  of admissible controls is given by the set of all probability measures $\P\in \c P_2(D^0)$ on the path space $D^0$ that satisfy the following properties:
    \begin{enumerate}[label=(\roman*)]
        \item $\P_{(X_{t-},\xi_{t-})} \eqqcolon m_{t-} = m$,
        \item $\xi$ is non-decreasing on $[t-,T]$ $\P$-a.s.,
        \item $\P = \P_{(X,\xi)} = \tilde\P_{(\tilde X,\tilde\xi)}$ is the marginal law of a weak solution $\tilde\P$ to the SDE \eqref{eq diffusion x} on $[t-,T]$.
    \end{enumerate}
\end{definition}

\begin{remark}
{Under natural assumptions on the data, the set $\c P(t,m)$ is non-empty for any pair $(t,m)\in [0,T]\times\c P_2$. Indeed, choosing the constant control $\xi \equiv \xi_{t-}$ on $[t-,T]$ reduces the SDE \eqref{eq diffusion x} to a standard McKean-Vlasov SDE for $X$, for which the existence of a solution is well-known, see e.g.\@ \cite[Theorem 4.21]{carmona2018probabilistic}.}
\end{remark}

To define the reward function we denote by $\c C(t,m)\subseteq \c P(t,m)$ the subset of continuous controls, that is the subset of probability measures $\P \in \c P_2(D^0)$ such that the function $u\mapsto \xi_u(\omega)$ -- and thus also the function $u\mapsto X_u(\omega)$ -- is continuous $\P$-almost surely. For continuous controls $\P \in \c C(t,m)$ the reward function 
\begin{equation}
    J_{\c C}(t,m,\P) \coloneqq \E^\P\bigg[\int_t^T f(u,m_u,X_u,\xi_u) du + g(m_T,X_T,\xi_T) - \int_t^T c(u,m_u,X_u,\xi_u) d\xi_u \bigg]
\end{equation}
is well defined. {Under mild conditions on the model parameters this function turns out to be continuous with respect to the Wasserstein distance}. 

We prove that the set $\c L(t,m) \subseteq \c C(t,m)$ of continuous controls of bounded velocity is dense in the set of admissible controls $\c P(t,m)$ viewed as a subset of $\c P_2(D^0)$. A bounded velocity control is a probability measure $\P \in \c P(t,m)$ such that the the process $\xi$ is almost surely of the form
\begin{equation}
\xi_s = \xi_{t-} + \int_t^s u_r dr,\qquad s\in [t,T],
\end{equation}
for some bounded, $\b F^{X,\xi}$-adapted,  non-negative process $(u_s)_{s\in [t,T]}$. To avoid chattering strategies we extend the reward function from continuous to general singular controls as the maximal reward that can be obtained by continuous approximation of admissible controls. Specifically, for $\P \in P(t,m)$ we set  
\begin{equation}\label{eq reward function general definition}
    J(t,m,\P) \coloneqq {\sup_{\substack{\P^n\to \P\text{ in }\c P_2(D^0)\\(\P^n)_n\subseteq\c C(t,m)}} \limsup_{n\to\infty}} \  J_{\c C}(t,m,\P^n),
\end{equation}
where the limit superior is taken over all approximating sequences of continuous (or, equivalently, bounded velocity) controls. 

Our goal is to derive a dynamic programming principle (DPP) and to characterise the value function
\[
    V(t,m) \coloneqq \sup_{\P\in \c P(t,m)} J(t,m,\P)
\]
in terms of a quasi-variational inequality (QVI) on the Wasserstein space. For this, the representation \eqref{eq reward function general definition} of the reward function -- albeit natural -- turns out to be very inconvenient.

\subsection{Main results}

To establish our DPP and the QVI characterisation of the value function we establish an alternative and more transparent representation of the reward function in terns of minimal jump costs akin to the representation \eqref{eq reward functional one-dimensional non-mean-field case} in the one-dimensional case. This requires a ``joint'' and ``consistent'' interpolation of jumps both on a distributional and a pathwise level. The interpolation can potentially be achieved in many ways; the challenge is to find an interpolation that results in a ``convenient'' and ``transparent'' cost structure. A minimal ``transparency'' requirement  is that the interpolation reduces to \eqref{eq reward functional one-dimensional non-mean-field case} in the one-dimensional case.

Our approach is based on the novel concept of two-layer parametrisations. The {key} idea is {to ensure continuity in the state-control process by introducing two successive reparametrisations}. 
{First we} interpolate jumps at the distributional level by deterministically rescaling the state-control process in time in such a way that the measure flow associated with the rescaled process {becomes} continuous. {Importantly,} the law of the rescaled process {must still} correspond in a well-defined manner to a weak solution (on a different time-scale) of the SDE \eqref{eq diffusion x}. 
This \emph{first-layer parametrisation} results in a state-control process that is continuous on a distributional but not necessarily on a pathwise level.

{Second, to ensure} continuity on a pathwise level, we {introduce a second} -- this time random -- reparametrisation in time that rescales any realisation of the first-layer process separately. The rescaling again needs to be done in a way that is consistent with the idea of the resulting \emph{second-layer parametrisation} being a weak solution (again on a different time-scale) of the SDE \eqref{eq diffusion x}.  A subtle, yet important difference to the first layer parametrisation is the fact that the new time scale is different for different realisations of the state-control process so that we can no longer use the distribution of the reparametrised process as the mean-field term in our dynamics. Instead, we will use the reparametrised measure flow from the previous layer. Undoing the said transformations on both layers we recover the original singular control.
      
Having introduced the abstract notion of a two-layer parametrisation we prove that approximating a singular control by continuous ones is equivalent to the existence of a two-layer parametrisation of the singular control that can be approximated by a sequence of continuous two-layer parametrisations of the approximating continuous controls. In other words, working with (weak) controls and working with parametrisations is equivalent. This allows us to represent our reward function in terms of  two-layer parametrisations. 

The new representation has several major advantages that justify our choice of interpolation. First, the new reward functional is continuous, which the original one (defined on the level of controls) was not. Second, the new representation of the reward function allows us to compute minimal jump costs. The proof uses the fact that discontinuities can be classified as discontinuities of the first kind that materialise only on a pathwise level where only a negligible set of ``particles'' jump at a given time, and discontinuities of the second kind where a non-negligible number of particles jumps simultaneously at the same time and which hence materialise both on a pathwise and a distributional level. Specifically, we establish a representation of the reward function of the form
\begin{equation}
\begin{split}
    &J(t,m,\P)\\
    &= \E^\P\bigg[\int_t^T f(u,m_u,X_u,\xi_u) du + g(m_T,X_T,\xi_T) - \sum_{J^d_{[t,T]}(m)} C_{\c P_2}(u,m_{u-},m_u)\\
    &\ \ -\sum_{J^c_{[t,T]}(m)\cap J^d_{[t,T]}(\xi)} C_{D^0}(u,m_u,X_{u-},\xi_{u-},\xi_u) - \int_{J^c_{[t,T]}(m)\cap J^c_{[t,T]}(\xi)} c(u,m_u,X_u,\xi_u) d\xi_u\bigg],
\end{split}
\end{equation}
where the functions $C_{\c P_2}$ and $C_{D^0}$ describe the minimal jump cost on the distributional and pathwise level, respectively, and for any function $h$ the set $J^{d/c}_{[t,T]}(h)$ denotes the points of (\mbox{dis-})\allowbreak continuity of $h$ on the time interval $[t,T]$. If the cost terms are independent of the mean-field term and the controls are one dimensional, then the above functional reduces to \eqref{eq reward functional one-dimensional non-mean-field case}. 

The above representation of the reward function easily allows us to establish the following DPP for multi-dimensional singular control problems:
\begin{equation}
\begin{split}
        &V(t,m)\\
    &= \sup_{\P \in \c P(t,m)} \E^\P\bigg[ V(s,m^\P_{s-}) + \int_t^s f(u,m^\P_u,X_u,\xi_u) du - \sum_{J^d_{[t,s)}(m^\P)} C_{\c P_2}(u,m^\P_{u-}, m^\P_u)\\
    &\ -\sum_{J^c_{[t,s)}(m^\P)\cap J^d_{[t,s)}(\xi)} C_{D^0}(u,m^\P_u,X_{u-},\xi_{u-},\xi_u)
    - \int_{J^c_{[t,s)}(m^\P)\cap J^c_{[t,s)}(\xi)} c(u,m^\P_u,X_u,\xi_u) d\xi_u\bigg].
\end{split}
\end{equation}
In the benchmark case where the volatility is independent of the mean-field term we can use a comparison result for regular control problems obtained by \cite{cosso2021master}  along with the DPP to characterise the value function as the minimal viscosity supersolution to a quasi-variational inequality (QVI) in the Wasserstein space. 

We assume throughout that the coefficients of our SDE \eqref{eq diffusion x} and the reward functions satisfy the following standing assumptions.

\begin{standingAssumption}\label{assumptions A}
The coefficients
\[
    b:[0,T] \times \c P_2 \times \R^d \times \R^l \to \R^d,
    \quad \sigma: [0,T] \times \c P_2 \times \R^d \times \R^l \to \R^{d \times m}
    \quad \text{and} \quad \gamma:[0,T] \to \R^{d \times l}
\]
satisfy the following standing assumptions,
\begin{enumerate}[resume,label=(C\arabic*)]
    \item The functions $b$ and $\sigma$ are continuous in $t$ and Lipschitz continuous in $m,x,\xi$ uniformly in $t$,
    \item \label{assumption gamma continuous} The function $\gamma$ is continuous.
\end{enumerate}
Further, the reward and cost functions 
\[
    f:[0,T] \times \c P_2 \times \R^d \times \R^l \to \R,
    \quad g: \c P_2 \times \R^d \times \R^l \to \R
    \quad \text{and}\quad c:[0,T] \times \c P_2 \times \R^d \times \R^l \to \R^{1 \times l}
\] 
satisfy the following standing assumptions,
\begin{enumerate}[label=(A\arabic*)]
    \item \label{assumption f,c continuous} The functions $f$ and $c$ are locally uniformly continuous.
\footnote{{Since the Wasserstein space $\c P_2$ is not locally compact, locally uniform continuity is slightly stronger than pure continuity.}}
    \item \label{assumption g continuous} The function $g$ is continuous.
    \item \label{assumption c linear growth} The function $c$ is of linear growth in $x,\xi$ locally uniformly in $m$, uniformly in $t$.
    \item \label{assumption f,g quadratic growth} The functions $f$ and $g$ are of quadratic growth in $x,\xi$ locally uniformly in $m$, uniformly in $t$.
\end{enumerate}
\end{standingAssumption}

The rest of this paper is organised as follows.  In Section \ref{section parametrisations}, we introduce two-layer parametrisations {of singular controls and connect them to approximating sequences of continuous controls. Building upon this framework, Section \ref{section reward functional} defines the reward functional in terms of two-layer parametrisations and derives its explicit representation involving minimal jump costs. Finally, Section \ref{section applications} applies these results to establish a dynamic programming principle in Section \ref{section dpp} and, under additional assumptions, to characterise the value function in terms of a quasi-variational inequality in Section \ref{section qvi}.}

\section{Parametrisations of the state-control process}\label{section parametrisations}

This section {introduces the key technical tool of our paper: two-layer parametrisations. Such parametrisations will be used in Section \ref{section reward functional} to} derive a more convenient representation  of the reward functional.

    To motivate the subsequent analysis, let $(\P^n)_n$ be a sequence of continuous controls that converges to a continuous control $\P$ in $\c P_2(D^0)$. Then $(\P^n)_n$ converges also in $\c P_2(C([0,T]{;}\R^d\times\R^l))$ and hence   
    \[
    J_{\c C}(t,m,\P^n)\to J_{\c C}(t,m,\P).
    \]
    For continuous controls $\P\in \c C(t,m)$ we can hence compute the reward directly in terms of the measure $\P$; there is no need to consider all possible approximations. 
    
    In what follows we introduce a technique that allows us to do the same for singular controls. To this end, we equip the state space $D^0$ with the $WM_1$ topology that is defined as follows. 

\begin{definition}[{\cite[Chapter 12]{whitt2002stochastic}}]\label{definition wm1 parametrisation whitt}
    The \emph{thick graph} $G_y$ of a càdlàg path $y = (x,\xi)\in D^0$ is given by
    \begin{equation}
        \begin{split}
    G_y \coloneqq \bigl\{(z,s) \in \R^{d+l}\times [t,T] \bigm\vert z^i \in [y^i(s-),y^i(s)] {\ (\text{resp. }[y^i(s), y^i(s-)]\text{ if }y^i(s) <  y^i(s-))}\\
    \text{for all components }i = 1,\dotso,d+l \bigr\},
        \end{split}
    \end{equation}
    and equipped with the order relation
    \[
    (z_1,s_1) \leq (z_2,s_2)\text{ if }\begin{cases}s_1 < s_2,\text{ or }\\s_1 = s_2\text{ and }|y^i(s-) - z^i_1| \leq |y^i(s-) - z^i_2|\text{ for all }i=1,\dotso,d+l.\end{cases}
    \]
    A \emph{(weak) parametric representation} or \emph{parametrisation of $y$} is a continuous non-decreasing (with respect to the above order relation) mapping $(\hat y,\hat r):[0,1] \to G_y $ with
    \[
    \hat y(0) = y(t-),\qquad \hat y(1) = y(T),\qquad \hat r(0) = t,\qquad \hat r(1) = T.
    \]
    For $y,z\in D^0$ we define
    \[
        d_w(y,z) \coloneqq \sup_{\substack{(\hat y,\hat r)\text{ parametrisation of }y\\(\hat z,\hat s)\text{ parametrisation of }z}} \{\norm{\hat y - \hat z}_{{\infty}} \lor     \norm{\hat r - \hat s}_{{\infty}}\},
    \]
    and say that $x_n\to x$ in the $WM_1$ topology if and only if $d_w(x,x_n) \to 0$. 
\end{definition}    

\begin{example}
{
Let $y = (x, \xi) \in D^0([0, 2]; \mathbb{R}^2)$ be the càdlàg process defined by  
\[
x_s = \xi_s = 
\begin{cases}
0, & s \in [0-, 1), \\
2, & s \in [1, 2],
\end{cases}
\]
with a common jump at $s = 1$.
Then the following are both valid parametrisations 
of $y$: 
\begin{enumerate}[label=(\roman*)]
    \item \emph{Simultaneous interpolation of components}: Both $\hat x^1$ and $\hat\xi^2$ jointly jump along the diagonal  
    \[
    (\hat{x}^1_\lambda, \hat{\xi}^1_\lambda, \hat{r}^1_\lambda) = 
    \begin{cases}
    (0, 0, 3\lambda), & \lambda \in [0, \tfrac{1}{3}], \\
    \big(6\lambda - 2, 6\lambda - 2, 1\big), & \lambda \in (\tfrac{1}{3}, \tfrac{2}{3}], \\
    (2, 2, 3\lambda - 1), & \lambda \in (\tfrac{2}{3}, 1].
    \end{cases}
    \]
    \item \emph{Sequential interpolation of components}: First $\hat x^2$ jumps, then $\hat\xi^2$ jumps  
    \[
    (\hat{x}^2_\lambda, \hat{\xi}^2_\lambda, \hat{r}^2_\lambda) = 
    \begin{cases}
    (0, 0, 3\lambda), & \lambda \in [0, \tfrac{1}{3}], \\
    \big(6\lambda - 2, 0, 1\big), & \lambda \in (\tfrac{1}{3}, \tfrac{1}{2}], \\
    \big(2, 6\lambda - 3, 1\big), & \lambda \in (\tfrac{1}{2}, \tfrac{2}{3}], \\
    (2, 2, 3\lambda - 1), & \lambda \in (\tfrac{2}{3}, 1].
    \end{cases}
    \]
\end{enumerate}
}
\end{example}


\subsection{Motivation and heuristics}\label{subsection motivation parametrisations}

Let us start with a heuristic motivation and switch to the strong formulation of the control problem for simplicity. That is, we assume for the moment that our controls are stochastic processes $\xi$ instead {of} probability measures on the canonical space, and that we optimise the reward functional $J(t,m,(X,\xi))$ introduced in \eqref{reward}. Let us further assume that the functions
\[
    f(t,m,x,\xi) \coloneqq f(t,x,\xi),\quad g(m,x,\xi) \coloneqq g(x,\xi),\quad c(t,m,x,\xi) \coloneqq c(t,x,\xi)
\]
are independent of the mean-field term $m$ and let $(X^n,\xi^n)$ be a sequence of continuous state-control processes that converge to a possibly singular limit $(X,\xi)$ in the $WM_1$ topology a.s.~and in $L^2$. 

The a.s.~convergence in the $WM_1$ topology implies that for almost every $\omega \in \Omega$ and $n\in\N$ there exist sufficiently close parametrisations of $(X^n(\omega),\xi^n(\omega))$ and $(X(\omega),\xi(\omega))$. Although the convergence does not guarantee that the parametrisations can be chosen independently of $n \in \mathbb N$, results like the one established in \cite{cohen2021singular} suggest that there might exists a fixed limit parametrisation
\[
    (\bar X(\omega),\bar\xi(\omega),\bar r(\omega)) \quad \mbox{of} \quad (X(\omega),\xi(\omega))
\]
and suitable parametrisations 
\[
    (\bar X^n(\omega),\bar\xi^n(\omega),\bar r^n(\omega)) \quad \mbox{of} \quad (X^n(\omega),\xi^n(\omega)) 
\]    
in the sense of Definition \ref{definition wm1 parametrisation whitt} such that
\[
(\bar X^n,\bar \xi^n,\bar r^n) \to (\bar X,\bar\xi,\bar r) \text{ in } C\big([t,T]{;}\R^d\times\R^l\times\R\big)\quad\text{in $L^2$ and a.s.}
\]

Assuming that this can indeed be achieved in our case, and since the processes $(X^n,\xi^n)$ are continuous and therefore by definition $X^n_{\bar r^n_u} = \bar X^n_u$ and $\xi^n_{\bar r^n_u} = \bar\xi^n_u$ for all $u\in [0,1]$,
we see that
\begin{equation}
\label{eq motivation reward functional parametrisation}
\begin{split}
J_{\c C}(t,m,(X^n,\xi^n))
&= \E\bigg[\int_t^T f(u,X^n_u,\xi^n_u) du + g(X^n_T,\xi^n_T) - \int_t^T c(u,X^n_u,\xi^n_u) d\xi^n_u \bigg]\\
&= \E\bigg[\int_0^1 f(\bar r^n_u,X^n_{\bar r^n_u},\xi^n_{\bar r^n_u}) d\bar r^n_u + g(\bar X^n_{{1}},\bar \xi^n_{{1}}) - \int_0^1 c(\bar r^n_u,X^n_{\bar r^n_u},\xi^n_{\bar r^n_u}) d\xi^n_{\bar r^n_u} \bigg]\\
&= \E\bigg[\int_0^1 f(\bar r^n_u,\bar X^n_u,\bar \xi^n_u) d\bar r^n_u + g(\bar X^n_{{1}},\bar \xi^n_{{1}}) - \int_0^1 c(\bar r^n_u,\bar X^n_u,\bar \xi^n_u) d\bar \xi^n_u \bigg]\\
&\to \E\bigg[\int_0^1 f(\bar r_u,\bar X_u,\bar \xi_u) d\bar r_u + g(\bar X_{{1}},\bar \xi_{{1}}) - \int_0^1 c(\bar r_u,\bar X_u,\bar \xi_u) d\bar \xi_u \bigg].
\end{split}
\end{equation}
Defining by 
\[
    J(\bar X,\bar \xi,\bar r) \coloneqq
    \E\bigg[\int_0^1 f(\bar r_u,\bar X_u,\bar \xi_u) d\bar r_u + g(\bar X_{{1}},\bar \xi_{{1}}) - \int_0^1 c(\bar r_u,\bar X_u,\bar \xi_u) d\bar \xi_u \bigg]
\]
the cost of a parametrisation $(\bar X,\bar \xi,\bar r)$ suggests that
\begin{equation}\label{eq reward functional motivation supremum parametrisation}
    J(t,m,(X,\xi)) = \sup_{\substack{(\bar X,\bar \xi,\bar r)\text{ \emph{achievable}}\\\text{parametrisation on }[t,T]}} J(\bar X,\bar\xi,\bar r).
\end{equation}
By an \emph{achievable} parametrisation we mean that we can find an approximating sequence of continuous controls with parametrisations that converges to the said parametrisation.

It will turn out that an achievable parametrisation $(\bar X,\bar \xi,\bar r)$ of a given control $(X,\xi)$ is uniquely determined by $(\bar\xi,\bar r)$. 
If the control $\xi$ is one-dimensional, as in e.g.\@ \cite{min_singular_1987,zhu_generalized_1992,taksar_infinite-dimensional_1997,dufour_singular_2004,de2018stochastic}, then there exists only one parametrisation $(\bar\xi,\bar r)$ of $\xi$ in the sense of Definition \ref{definition wm1 parametrisation whitt} up to a reparametrisation of time.\footnote{{This uniqueness also implies that parametrisation is achievable, since it is well-known that every singular control $\xi$ can be approximated in the $WM_1$ topology by a sequence of continuous controls, which then converge to the unique parametrisation of $\xi$. For example, one may define $\xi^n_u= n\int_{(u-\frac 1 n)\lor t}^{u}\xi_s\,ds$ for $u\in [t,T]$, so that each $\xi^n$ is continuous and non-decreasing.}} 
Denoting by $\xi^c$ the continuous part of $\xi$, and by $r_u \coloneqq \inf\{v\in [0,1] | \bar r_v > u\} \land 1$ the inverse of $\bar r$
this implies that in this case
\begin{align*}
J(t,m,(X,\xi)) & = J(\bar X,\bar\xi,\bar r) \\
&= \E\bigg[\int_0^1 f(\bar r_u,\bar X_u,\bar \xi_u) d\bar r_u + g(\bar X_1,\bar \xi_1) - \int_0^1 c(\bar r_u,\bar X_u,\bar \xi_u) d\bar \xi_u \bigg]\\
&= {\E\bigg[\int_t^T f(u,X_u,\xi_u) du + g(X_T,\xi_T) - \int_{\{v \mid \xi_{\bar r_v -} = \xi_{\bar r_v}\}} c(\bar r_v,\bar X_v,\bar\xi_v) d\bar\xi_v}\\*
&\qquad {- \int_{\{v | \xi_{\bar r_v-}\not= \xi_{\bar r_v}\}} c(\bar r_v,\bar X_v,\bar\xi_v) d\bar\xi_v \bigg]}\\
&= {\E\bigg[\int_t^T f(u,X_u,\xi_u) du + g(X_T,\xi_T) - \int_{\{v \mid \xi_{\bar r_v -} = \xi_{\bar r_v}\}} c(\bar r_v,X_{\bar r_v},\xi_{\bar r_v}) d\xi_{\bar r_v}}\\*
&\qquad {- \int_{\{v | v\in [\bar r_{u-},\bar r_u], u\in [t,T], \xi_{u-}\not= \xi_{u}\}}  c(\bar r_v,\bar X_v,\bar\xi_v) d\bar\xi_v  \bigg]}\\
&= {\E\bigg[\int_t^T f(u,X_u,\xi_u) du + g(X_T,\xi_T) - \int_{\{u \mid \xi_{u-} = \xi_u\}} c(u,X_u,\xi_u) d\xi_u}\\*
&\qquad {- \sum_{\xi_{u-}\not= \xi_{u}} \int_{{r_{u-}}}^{{r_u}} c(u,\bar X_{r_u -} + \gamma(u)(\bar\xi_v-\bar\xi_{r_u-}),\bar\xi_v) d\bar\xi_v}\\
&= {\E}\bigg[\int_t^T f(u,X_u,\xi_u) du + g(X_T,\xi_T) - \int_t^T c(u,X_u,\xi_u) d\xi^c_u\\*
&\qquad- \sum_{\xi_{u-}\not= \xi_u} \int_{\xi_{u-}}^{\xi_u} c(u,X_{u-} + \gamma(u) (\zeta - \xi_{u-}), \zeta) d\zeta \bigg].
\end{align*}
In particular, in the one-dimensional case there is no need to consider suprema over achievable parametrisations and we retrieve the cost structure that has already been studied in \cite{min_singular_1987,zhu_generalized_1992,taksar_infinite-dimensional_1997,dufour_singular_2004,de2018stochastic}.

The main challenge in establishing a similar representation in our setting is that our controls are multi-dimensional and that our cost functions additionally depend on a mean-field term. Since the above parametrisations were constructed for each $\omega$ separately, we can not expect the reparametrised joint law $m_{\bar r_u}$ to coincide with the law ${\P_{(\bar X_u,\bar\xi_u)}}$ of the reparametrised processes, {and thus in general $$\P_{(X^n_{\bar r^n_u},\xi^n_{\bar r^n_u})} = m^n_{\bar r^n_u} \not\to m_{\bar r_u} = \P_{(X_{\bar r_u},\xi_{\bar r_u})}.$$ In particular, if we allow our singular cost term $c(t,m,x,\xi)$ to additionally depend on $m$, in the above illustration the reward functional $J_C(t,m,(X^n,\xi^n))$ will not converge to $J(\bar X,\bar\xi,\bar r)$ as $n\to \infty$.}
To overcome this problem, we introduce two-layer parametrisations. The first layer parametrises the $\c P_2$-valued function $u\mapsto m_u = \P_{(X_u,\xi_u)} $ - and hence jointly for all $\omega$. The second layer parametrises the paths of the resulting state-control process for each $\omega$ separately as in the motivation above, suitably adapted to our weak solution concept. 

We formally introduce the concept of two-layer parametrisations of singular controls $\P\in\c P(t,m)$ in the next section. 

\subsection{Two-layer parametrisations}

\subsubsection{First layer}\label{subsubsection first layer}

Our first objective {is} to introduce $\c P_2$-parametrisations that turn the discontinuous mapping $u\mapsto m_u$ into a continuous measure flow $u\mapsto \hat m_u \coloneqq {\hat\P_{(\hat X_u,\hat\xi_u)}}$ by introducing a new time scale $\hat r$ with corresponding state processes $(\hat X,\hat\xi)$. The reparametrisation takes place jointly for all $\omega$ and so the new time scale $\hat r$ is deterministic, and hence so is the inverse time change
\begin{equation}
\label{eq inverse r time change definition}
    r_s \coloneqq \inf \{u\in [0,1] \mid \hat r_u > s\}\land 1.
\end{equation}
The inverse time change $r$ maps {first layer processes $(\hat X,\hat\xi)$ back to the processes $(X,\xi)$ on the original time scale} via
\begin{equation}
(X_u,\xi_u) = (\hat X_{r_u},\hat\xi_{r_u}),\quad\text{for all }u\in [t-,T].
\label{eq subsubsection first layer inverse time change layer transformation}
\end{equation}

The canonical space for the first-layer state-control-dynamics will be $D^0([0,1];\R^d\times\R^l)$; the canonical process on this space will be denoted $(\hat X,\hat\xi)$; our time change will be a deterministic, non-decreasing element of $C([0,1];[t,T])$. {Given a probability measure $\hat\P$ on their canonical space, the} measure flow of the first layer process $(\hat X,\hat\xi)$ will be denoted
\[
    \hat m_s\coloneqq {\hat\P}_{(\hat X_s,\hat\xi_s)}.
\]

\begin{remark}
By construction, the inverse time scale $r$ will be discontinuous at the jump times of $u\mapsto m_u$. The ``extra time interval'' $[r_{u-},r_u]$ on the new time scale allows us to continuously execute the jump from $m_{u-}$ to $m_u$ via $v\mapsto \hat m_v$.
\end{remark}

We are particularly interested in approximating processes that satisfy the SDE \eqref{eq diffusion x} by continuous controls that also satisfy the SDE \eqref{eq diffusion x}. As a result, we require our parametrisations to also satisfy the SDE \eqref{eq diffusion x}, albeit on the time scale $\hat r$ of the parametrisation.\footnote{On the intervals $[r_{u-},r_u]$, we temporarily halt the original state-control dynamics and only execute the jump term.} 
We define weak solutions for the time changed SDE in the spirit of Definition \ref{definition weak solution sde}. The rescaling will lead to a slightly more complex structure of the canonical filtration for the Brownian motion, though. 

\begin{definition}\label{definition weak solution W2 time changed SDE}
    We consider the canonical space
    \[
    \hat\Omega\coloneqq D^0([0,1];\R^d\times\R^l)\times C({[t,T]};\R^m)
    \]
    equipped with the Borel $\sigma$-algebra {and} the canonical process $(\hat X,\hat\xi,\hat W)$. {Undoing the time change of $\hat r$, we consider} the filtration $\hat{\b F}$ {on the original time horizon $[t,T]$} given by
    \[
    \hat{\c F}_u \coloneqq \c F^{\hat W}_u \lor \c F^{\hat X,\hat\xi}_{r_u},\quad{u\in [t,T]},
    \]
    {with $r$ defined by \eqref{eq inverse r time change definition}.}
    We call a tuple $(\hat\P,\hat r) \in \c P_2(\hat\Omega)\times C([0,1];[t,T])$ a weak solution to the following time changed SDE 
    \begin{equation}\label{eq weak solution W2 time changed SDE}
        d\hat X_u = b(\hat r_u,\hat m_u,\hat X_u,\hat\xi_u) d\hat r_u + \sigma(\hat r_u,\hat m_u,\hat X_u,\hat\xi_u) d\hat W_{\hat r_u} + \gamma(\hat r_u) d\hat\xi_u
    \end{equation}
    on the time scale $\hat r$ if and only if the following holds under $\hat\P$:
    \begin{enumerate}[label=(\roman*)]
        \item $\hat r_0 = t$, $\hat r_1 = T$,
        \item $\hat r$ is non-decreasing on $[0,1]$,
        \item $\hat W$ is an $\hat{\b F}$-Brownian motion,
        \item $\hat X$ follows the dynamics \eqref{eq weak solution W2 time changed SDE} driven by the control $\hat\xi$ and the Brownian motion $\hat W$ under the time change $\hat r$ on $[0-,1]$ with $\hat m_u \coloneqq \hat\P_{(\hat X_u,\hat\xi_u)}$ for all $u\in [0-,1]$.
        \footnote{This should be understood as $\hat X_0 = \hat X_{0-} + \gamma(\hat r_0) (\hat\xi_0-\hat\xi_{0-})$ and $(\hat X,\hat\xi,\hat W)$ satisfying the SDE \eqref{eq weak solution W2 time changed SDE} on $[0,1]$.}
    \end{enumerate}
\end{definition}

Having introduced the notion of a time-changed weak solution, we are now ready to introduce the following concept of $\c P_2$-parametrisations of admissible controls.  

\begin{definition}\label{definition W2 SDE parametrisation}
    A \emph{$\c P_2$-parametrisation} of an admissible control $\P\in \c P(t,m)$ is a tuple 
    \[
    (\hat\P,\hat r) \in \c P_2(D^0([0,1]; \R^d\times\R^l))\times C([0,1];[t,T])
    \]
    such that the following holds: 
    \begin{enumerate}[label=(\roman*)]
        \item $\hat\P_{(\hat X_{r_u},\hat \xi_{r_u})_{u\in [0-,1]}} = \P$,
        \item $\hat\xi$ is non-decreasing on $[0-,1]$ $\hat\P$-a.s.,
        \item the measure flow $[0-,1]\ni u\mapsto \hat m_u\coloneqq \hat \P_{(\hat X_u,\hat\xi_u)} \in\c P_2$ is continuous,
        \item $(\hat\P,\hat r)$ is the marginal law of a weak solution to the time changed SDE \eqref{eq weak solution W2 time changed SDE} under the time change $\hat r$ on $[0-,1]$.
    \end{enumerate}
    If $\hat r$ is additionally Lipschitz continuous, we call $(\hat\P,\hat r)$ a \emph{Lipschitz $\c P_2$-parametrisation}.
\end{definition}

\begin{lemma}\label{lemma existence W2 parametrisations}
{
    For every admissible control $\P\in\c P(t,m)$, there exists a $\c P_2$-parametrisation $(\hat\P,\hat r)$ of $\P$.
}
\end{lemma}
\begin{proof}
{
A $\mathcal{P}_2$-parametrisation can be constructed by introducing a time change that inserts linear interpolations at the jump times of the measure flow $t \mapsto m_t = \P_{(X_t, \xi_t)}$. At each jump time $s\in [t,T]$ of $(X,\xi)$, we consider the linear interpolation given by
\[
[0,1] \ni \lambda \mapsto (\lambda X_s + (1-\lambda) X_{s-},\  \lambda \xi_{s} + (1-\lambda) \xi_{s-}).
\]
A precise construction of the time change $\hat r$ and the resulting process $(\hat{X}, \hat{\xi})$ to ensure that $(\hat{\P}, \hat{r})$ satisfies the conditions of a $\mathcal{P}_2$-parametrisation is detailed in Step 1 of the proof of Theorem \ref{theorem reward functional alternative form}. That proof uses $\varepsilon$-optimal interpolations; for this lemma, the simpler linear interpolations above suffice.
}
\end{proof}

While the previously introduced parametrisation guarantees a continuous measure flow, we need to further reparametrise our process on an pathwise level to also obtain continuous sample paths. We hence proceed to our second layer parametrisation. 

\subsubsection{Second layer}\label{subsubsection second layer}
Our next goal is to make the paths $u\mapsto (\hat X_u,\hat\xi_u)$ continuous almost surely by introducing a second time scale $\bar s$ and corresponding processes $(\bar X,\bar\xi)$ on this new time scale. In contrast to the first layer, the time flow $\bar s$ will depend on $\omega$, and thus its right-continuous inverse
\begin{equation}
\label{eq inverse s time change definition}
s_u(\omega) \coloneqq \inf\{v\in [0,1]\mid \bar s_v(\omega) > u\}\land 1
\end{equation}
will be random as well. As for the first layer, the inverse time change $s$ maps {second layer processes $(\bar X,\bar\xi)$ back to first layer processes $(\hat X,\hat \xi)$} via
\begin{equation}
(\hat X_u,\hat\xi_u) = (\bar X_{s_u},\bar\xi_{s_u}),\qquad\text{for all }u\in [0-,1].
\label{eq subsubsection second layer inverse time change layer transformation}
\end{equation}
We {continue working} with the weak formulation and thus with probability measures on the corresponding canonical space $C([0,1];\R^d\times\R^l\times [0,1])$, equipped with the canonical processes $(\bar X,\bar\xi,\bar s)$. 

{Our focus remains on} processes that satisfy our SDE \eqref{eq diffusion x} {and thus} additionally assume that the parametrisations satisfy a corresponding SDE using the new time scales $\hat r$ and $\bar s$. We will again start by introducing the notion of weak solutions for this time changed SDE. The subtle, yet important difference to the first layer parametrisation is the fact that the time scale is now different for different $\omega$ so that we can no longer use the current distribution of the reparametrised process $(\bar X,\bar\xi)$ as the mean-field term in our dynamics. Instead, we will use the process $\hat m$ from the previous layer. 

With a slight abuse of notation we denote the canonical processes for the canonical state space of the weak solution by $(\bar X,\bar\xi,\bar W,\bar s)$ and the weak solution, a probability measure on this space, by $\bar\P$. 

\begin{definition}\label{definition weak solution two-layer time changed SDE}
    We consider the canonical space 
    \[
    \bar\Omega \coloneqq C([0,1];\R^d\times\R^l{\times [0,1])\times C([t,T];\R^m)}
    \]
    equipped with the Borel $\sigma$-algebra {and} the canonical process $(\bar X,\bar\xi,{\bar s,\bar W})$. {Undoing the time change of $\hat r$ and $\bar s$, we consider} the filtration $\bar{\b F}$ {on the original time horizon $[t,T]$} given by 
    \[
    \bar{\c F}_u \coloneqq \c F^{\bar W}_u \lor \c F^{\bar X,\bar\xi,\bar s}_{s_{r_u}},\quad{u\in [t,T]},
    \]
    {with $r$ and $s$ defined by \eqref{eq inverse r time change definition} and \eqref{eq inverse s time change definition}.}
    We call a tuple $(\bar \P,\hat r) \in \c P_2(\bar\Omega)\times C([0,1];[t,T])$ a weak solution to the following time changed SDE 
    \begin{equation}\label{eq weak solution two-layer time changed SDE}
        d\bar X_u = b(\hat r_{\bar s_u},\hat m_{\bar s_u},\bar X_u,\bar\xi_u) d\hat r_{\bar s_u} + \sigma(\hat r_{\bar s_u},\hat m_{\bar s_u},\bar X_u,\bar\xi_u) d\bar W_{\hat r_{\bar s_u}} + \gamma(\hat r_{\bar s_u}) d\bar\xi_u,
    \end{equation}
    under the time changes $\hat r$ and $\bar s$ if and only if under $\bar\P$ the following holds:
    \begin{enumerate}[label=(\roman*)]
        \item {$\hat r_0 = t$, $\hat r_1 = T$,}
        \item {$\bar s_0=0$, $\bar s_1=1$ $\bar\P$-a.s.,}
        \item $\hat r$ and $\bar s$ are non-decreasing $\bar\P$-a.s.,
        \item $\bar W$ is an $\bar{\b F}$-Brownian motion,
        \item $\bar X$ follows the dynamics \eqref{eq weak solution two-layer time changed SDE} driven by the control $\bar\xi$ and the Brownian motion $\bar W$ under the time changes $\hat r$ and $\bar s$ on $[0,1]$ with $\hat m_u \coloneqq \bar\P_{(\bar X_{s_u},\bar\xi_{s_u})}$ for all $u\in [0,1]$.
    \end{enumerate}
\end{definition}

We are now ready to introduce the notion of a two-layer parametrisation. 

\begin{definition}\label{definition two-layer SDE parametrisation}
    A \emph{two-layer parametrisation} of an admissible control $\P\in\c P(t,m)$ is a tuple 
    \[
    \big((\hat r,\hat m),\bar\P \big)\in C([0,1];[t,T]\times\c P_2)\times \c P_2(C([0,1];\R^d\times\R^l\times [0,1]))
    \]
    such that the following holds:
    \begin{enumerate}[label=(\roman*)]
        \item $(\hat\P,\hat r) \coloneqq (\bar\P_{(\bar X_{s_u},\bar \xi_{s_u})_{u\in [0-,1]}},\hat r)$ is a $\c P_2$-parametrisation of $\P$,
        \item $\hat m_u = \hat\P_{(\hat X_u,\hat\xi_u)} = \bar\P_{(\bar X_{s_u},\bar\xi_{s_u})}$ for all $u\in [0-,1]$,
        \item $\bar\xi$ is non-decreasing $\bar\P$-a.s.,
        \item $(\bar\P,\hat r)$ is the marginal law of a weak solution to the time changed SDE \eqref{eq weak solution two-layer time changed SDE} under the time changes $\hat r$ and $\bar s$.
    \end{enumerate}
    If further $(\bar\xi,\bar s)$ are Lipschitz continuous, uniformly for $\bar\P$-almost all $\bar\omega$ and $(\hat\P,\hat r)$ is also a Lipschitz $\c P_2$-parametrisation, then we call $((\hat r,\hat m),\bar\P)$ a \emph{Lipschitz two-layer parametrisation}.
\end{definition}

\begin{lemma}\label{lemma existence two layer parametrisations}
{
    For every admissible control $\P\in\c P(t,m)$, there exists a two-layer parametrisation $((\hat m,\hat r),\bar\P)$ of $\P$.
}
\end{lemma}
\begin{proof}
{
Given a $\mathcal{P}_2$-parametrisation $(\hat{\P}, \hat{r})$ of $\P$ (which exists by Lemma \ref{lemma existence W2 parametrisations}), a two-layer parametrisation can be constructed by introducing a further time change that inserts linear interpolations at the jump times of the process $(\hat{X}, \hat{\xi})$. At each jump time $\tau(\hat\omega)$ of the first layer process $(\hat X(\hat\omega),\xi(\hat\omega))$, we consider the pathwise linear interpolation given by
\[
[0,1] \ni \lambda \mapsto (\lambda\hat X_{\tau(\hat\omega)}(\hat\omega) + (1-\lambda)\hat X_{\tau(\hat\omega)-}(\hat\omega),\  \lambda \hat \xi_{\tau(\hat\omega)}(\hat\omega) + (1-\lambda) \hat \xi_{\tau(\hat\omega)-}(\hat\omega)).
\]
A precise construction of the time change $\bar{s}$ and the resulting processes $(\bar{X}, \bar{\xi})$, including the verification of measurability and the conditions for a two-layer parametrisation, is detailed in Step 2 of the proof of Lemma \ref{lemma reward functional lower bound}. That construction uses optimal interpolations; for this lemma, the simpler linear interpolations above suffice.
}
\end{proof}

\subsubsection{From two-layer parametrisations back to controls}\label{subsubsection recover the previous layers from bar P}

Given a two-layer parametrisation $((\hat m,\hat r),\bar\P)$ we have seen in Section \ref{subsubsection second layer} that, additionally to the canonical processes $(\bar X,\bar\xi,\bar s)$ on $\bar\Omega$ under $\bar\P$, we can also recover the first layer via
\begin{equation}\label{eq two layer parametrisations hat given bar}
(\hat X_u,\hat\xi_u) \coloneqq (\bar X_{s_u},\bar\xi_{s_u}).
\end{equation}
From this, we have seen in Section \ref{subsubsection first layer} that we can also recover the original singular control via
\begin{equation}\label{eq two layer parametrisations tilde given hat}
(\tilde X_u,\tilde \xi_u) \coloneqq (\hat X_{r_u},\hat\xi_{r_u}) = (\bar X_{s_{r_u}},\bar\xi_{s_{r_u}}).
\end{equation}
Further, if we consider the extended space where $\bar W$ is an $\bar{\b F}$-Brownian motion such that $\bar\P$ is a weak solution to the time changed SDE \eqref{eq weak solution two-layer time changed SDE}, then $(\bar\P_{(\hat X,\hat\xi,\bar W)},\hat r)$ will also be a weak solution to the time changed SDE \eqref{eq weak solution W2 time changed SDE} corresponding to the $\c P_2$-parametrisation $(\hat\P,\hat r)$ and $\bar\P_{(\tilde X,\tilde\xi,\bar W)}$ will be a weak solution to the original SDE \eqref{eq diffusion x} corresponding to the singular control $\P$.
Combining the SDEs \eqref{eq diffusion x}, \eqref{eq weak solution W2 time changed SDE} and \eqref{eq weak solution two-layer time changed SDE} with \eqref{eq two layer parametrisations hat given bar} and \eqref{eq two layer parametrisations tilde given hat} also shows the following relation between the different layers,
\begin{equation}\label{eq two layer parametrisations X hat, X bar given xi hat, xi bar}
    \hat X_u = \tilde X_{\hat r_u -} + \gamma(\hat r_u) (\hat\xi_u - \tilde\xi_{\hat r_u -}),\qquad \bar X_u = \hat X_{\bar s_u -} + \gamma(\hat r_{\bar s_u}) (\bar\xi_u - \hat\xi_{\bar s_u -}).
\end{equation}

Recovering the previous layers with corresponding weak solutions will be helpful throughout the rest of the paper. In what follows, whenever we consider a two-layer parametrisation, we will always assume implicitly that the aforementioned construction has been carried out.

\subsection{Convergence of two-layer parametrisations}\label{subsection two layer convergence}
Let us now introduce the convergence concept for two-layer parametrisation that will allow us to construct a continuous reward function for two-layer parametrisations in the next section.

\begin{definition}
We say that a sequence $\big((\hat m^n,\hat r^n),\bar\P^n\big)_{n \in \N}$ of two-layer parametrisations converges to a two-layer parametrisation $((\hat m,\hat r),\bar\P)$ on the time interval $[t,T]$, denoted
\[
((\hat m^n,\hat r^n),\bar\P^n) \to {(}(\hat m,\hat r),\bar\P),
\]
if the following holds:
\begin{enumerate}[label=(\roman*)]
    \item $(\hat m^n,\hat r^n) \to (\hat m,\hat r)$ in $C([0,1];\c P_2\times [t,T])$,
    \item $\bar\P^n\to \bar\P$ in $\c P_2(C([0,1];\R^d\times\R^l\times [0,1]))$.
\end{enumerate}
\end{definition}

Convergence of two-layer parametrisation implies also convergence of the underlying singular controls.
The next result shows that given an approximating sequence of continuous controls, we can obtain convergent two-layer parametrisations, similar to our motivation in Section \ref{subsection motivation parametrisations}. Its proof can be found in the {Section \ref{appendix proof of theorem construct limit parametrisation from convergent sequence}}.

\begin{theorem}\label{theorem construct limit parametrisation from convergent sequence}
    Let $(\P^n)_{n}\subseteq \c C(t,m)$ and $\P \in \c P(t,m)$ be such that
    \[
    \P^n\to\P\text{ in }\c P_2(D^0).
    \]
    Then there exists a subsequence $(\P^{n_k})_{k}$ with two-layer parametrisations $((\hat m^{n_k},\hat r^{n_k}), \bar\P^{n_k})$ and a two-layer parametrisation $((\hat m, \hat r), \bar\P )$ of $\P$ such that
    \[
    ( (\hat m^{n_k},\hat r^{n_k}),\bar\P^{n_k} ) \to ( (\hat m,\hat r),\bar\P ).
    \]
\end{theorem}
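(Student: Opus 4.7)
The plan is to construct the limiting two-layer parametrisation in two stages—first a deterministic first-layer time change that resolves the distributional jumps of $u\mapsto m_u$, then an $\omega$-dependent second-layer time change that resolves the pathwise jumps—and to obtain convergent parametrisations of the $\P^n$ by approximating those of the limit.

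First I would build the first-layer parametrisation of $\P$. The càdlàg measure flow $[t,T]\ni u\mapsto m_u\in\c P_2$ has at most countably many discontinuities $J^d_{[t,T]}(m)$; to each $u\in J^d_{[t,T]}(m)$ I assign a positive weight $\lambda_u$ with $\sum_u\lambda_u<\infty$, chosen so that $T-t+\sum_u\lambda_u$ is a convenient normalising factor, and define a non-decreasing map $\tau:[t,T]\to[0,1]$ by $\tau(s)=\bigl(s-t+\sum_{u\le s,u\in J^d_{[t,T]}(m)}\lambda_u\bigr)/L$. Its right-continuous inverse $\hat r$ has plateaus $[r_{u-},r_u]$ of length $\lambda_u/L$ exactly above the jump times of $m$; on each such plateau I define $\hat m$ by a continuous interpolation between $m_{u-}$ and $m_u$ (e.g.\ along a displacement-interpolation between couplings chosen to match the weak-solution structure of Definition~\ref{definition weak solution W2 time changed SDE}), and elsewhere set $\hat m_v=m_{\hat r_v}$. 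For the approximating continuous controls $\P^n$, whose measure flows $m^n$ are already continuous, I take $\hat r^n$ to be a smoothing of $\hat r$ tailored to $m^n$—specifically a convex combination of the identity time change and $\hat r$—and verify $(\hat r^n,\hat m^n)\to(\hat r,\hat m)$ uniformly in $C([0,1];[t,T]\times\c P_2)$ using the convergence of $m^n$ at continuity points of $m$ together with the tightness granted by $\P^n\to\P$ in $\c P_2(D^0)$.

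Second I would construct the pathwise second-layer parametrisation. By Skorokhod representation on $(\c P_2(D^0),WM_1)$ I pass to a subsequence and realise the canonical processes $(\tilde X^{n_k},\tilde\xi^{n_k})$ and $(\tilde X,\tilde\xi)$ on a common probability space so that $(\tilde X^{n_k},\tilde\xi^{n_k})\to(\tilde X,\tilde\xi)$ almost surely in $(D^0,WM_1)$ and in $L^2$. By Whitt's characterisation recalled in Definition~\ref{definition wm1 parametrisation whitt}, for each $\omega$ there exist parametric representations of the two paths whose sup-distance tends to $0$; a measurable selection argument (e.g.\ Kuratowski--Ryll-Nardzewski applied to the set-valued map sending $\omega$ to the compact set of optimal parametric representations) yields a jointly measurable choice. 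I then couple these pathwise parametrisations with the deterministic first-layer change so that the composed time change $s_{r_u}$ recovers the original time, defining $\bar X_u\coloneqq\hat X_{\bar s_u^-}+\gamma(\hat r_{\bar s_u})(\bar\xi_u-\hat\xi_{\bar s_u^-})$ in line with the relations~\eqref{eq two layer parametrisations X hat, X bar given xi hat, xi bar}; the law of $(\bar X,\bar\xi,\bar s)$ on the common space defines $\bar\P$, and analogously $\bar\P^{n_k}$ from the $\P^{n_k}$.

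Third, I would verify the axioms (i)--(iv) of Definition~\ref{definition two-layer SDE parametrisation}: the marginal condition and the consistency $\hat m_u=\bar\P_{(\bar X_{s_u},\bar\xi_{s_u})}$ follow from the construction; monotonicity of $\bar\xi$ is preserved by the parametric representations since $\hat\xi$ is non-decreasing; and the weak-solution property for $\bar\P$ with respect to the doubly time-changed SDE~\eqref{eq weak solution two-layer time changed SDE} follows from the original weak-solution property of $\P$ via a standard time-change of stochastic integrals, recast through the associated martingale problem against smooth test functions. Finally, convergence $\bar\P^{n_k}\to\bar\P$ in $\c P_2(C([0,1];\R^d\times\R^l\times[0,1]))$ follows from the almost-sure uniform convergence of $(\bar X^{n_k},\bar\xi^{n_k},\bar s^{n_k})$ together with uniform $L^2$-integrability inherited from $\P^{n_k}\to\P$ in $\c P_2(D^0)$. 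The principal obstacle is step two: the pathwise parametric representations must be chosen jointly measurably in $\omega$ and simultaneously calibrated both to the shared first-layer clock $\hat r$ (so the composition $s_{r_\cdot}$ reproduces the correct singular control) and to each other across $n$ (so that $\bar s^{n_k}\to\bar s$ uniformly in probability); reconciling this measurable selection with the weak-solution reformulation—where the driving Brownian motion must remain adapted to the re-scaled filtration $\bar{\mathbb F}$—is the delicate technical point.
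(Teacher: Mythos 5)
Your outline has the right shape (two layers, Skorokhod representation, compactness, verification of the weak-solution property), but the two central constructions are either backwards or not actually carried out, and these are substantive gaps rather than omitted details.

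First, your first-layer construction runs in the wrong direction. You build $\hat r$ for the limit $\P$ first, inserting plateaus above $J^d_{[t,T]}(m)$ and prescribing the interpolation $\hat m$ on each plateau (say by displacement interpolation), and then you propose to take $\hat r^n$ as ``a smoothing of $\hat r$ tailored to $m^n$''. The problem is that $\hat m^n_v = (m^n)_{\hat r^n_v}$ is forced upon you once $\hat r^n$ is chosen: it is the actual measure flow of the control $\P^n$, evaluated on the time scale $\hat r^n$. Nothing in the hypothesis $\P^n\to\P$ in $\c P_2(D^0)$ constrains how $m^n$ traverses the region between $m_{u-}$ and $m_u$ as $n\to\infty$, so there is no reason for $\hat m^n$ to converge in $C([0,1];\c P_2)$ to whatever interpolation you decided to insert for $\hat m$. (Prescribing the limit parametrisation first and then producing approximants is the content of Theorem~\ref{theorem two layer parametrisation bounded velocity approximation}, a different result.) The paper instead defines a canonical time change $\hat r^n$ from the data of $\P^n$ itself, $r^n_u = \bigl((u-t)+\|\xi^n_u-\xi^n_{t-}\|_{L^2}\bigr)/\bigl((T-t)+\|\xi^n_T-\xi^n_{t-}\|_{L^2}\bigr)$, proves uniform Lipschitz estimates and equicontinuity of the induced measure flows, extracts a convergent subsequence by Arzelà--Ascoli and Prokhorov, and then takes the limit $(\hat m,\hat r)$ to be whatever the subsequence converges to; the interpolation on the plateaus emerges from the sequence rather than being prescribed.

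Second, your second layer rests on a measurable selection of $WM_1$ parametric representations that you do not carry out and, as you yourself concede, is ``the principal obstacle''. The obstacle is real: a Kuratowski--Ryll-Nardzewski selection gives $\omega$-wise measurability of a parametrisation of each path, but it gives no control across $n$ (no reason for the selected $(\bar X^n,\bar\xi^n,\bar s^n)$ to converge uniformly), nor does it automatically produce a time change $\bar s^n$ that is adapted to $\b F^{\hat\xi^n}$ — and without that adaptedness the driving Brownian motion need not remain a Brownian motion with respect to the rescaled filtration $\bar{\b F}^n$, so the weak-solution property breaks. The paper avoids both issues simultaneously by the explicit choice $s^n_u(\check\omega) = \bigl(u + \arctan(\Var(\hat\xi^n(\check\omega),[0,u]))\bigr)/(1+\tfrac{\pi}{2})$: it is a measurable functional of the path, it is $\b F^{\hat\xi^n}$-adapted by construction, it gives the uniform modulus-of-continuity estimates needed for tightness, and the total-variation increments of $\bar\xi^n$ inherit a uniform Lipschitz bound. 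Your martingale-problem route to the weak-solution property is reasonable in principle, but it presupposes exactly this adaptedness, which your construction does not deliver.
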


While the previous result showed how to obtain a limit two-layer parametrisation from a given sequence of approximating continuous controls, the following theorem can be seen as establishing a converse result. Given a limit two-layer parametrisation, it allows us to construct an approximating sequence of two-layer parametrisations corresponding to continuous, bounded velocity controls. 
The proof can be found in the {Section \ref{appendix proof of theorem two layer parametrisation bounded velocity approximation}}.

\begin{theorem}\label{theorem two layer parametrisation bounded velocity approximation}
    For every two-layer parametrisation $((\hat m,\hat r),\bar\P)$ of a singular control $\P\in \c P(t,m)$ we can find a sequence of continuous bounded velocity controls $(\P^n)_{n}\subseteq \c L(t,m)$ with corresponding sequence of two-layer parametrisations $((\hat m^n,\hat r^n),\bar\P^n)_n$ such that
    \[
    ((\hat m^n,\hat r^n),\bar \P^n) \to ((\hat m,\hat r),\bar\P).
    \]
\end{theorem}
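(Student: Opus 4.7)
The plan is to construct the approximating sequence by simultaneously smoothing the three irregular constituents of the given two-layer parametrisation — the deterministic clock $\hat r$, the random clock $\bar s$, and the control path $\bar\xi$ — and then re-solving the time-changed McKean--Vlasov SDE \eqref{eq weak solution two-layer time changed SDE} with these regularised inputs. The two-layer framework is designed precisely so that the failure of $\P$ to be continuous and of bounded velocity is confined to (a) flat segments of $\hat r$ and $\bar s$, which translate via the inverses $r,s$ into jumps of the unravelled control $\tilde\xi=\bar\xi\circ s\circ r$, and (b) a lack of Lipschitz regularity of $\bar\xi$. Removing both obstructions at once produces a Lipschitz two-layer parametrisation, which via \eqref{eq two layer parametrisations tilde given hat} corresponds to a Lipschitz-in-time, hence bounded-velocity, control.

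Concretely, for each $n\in\N$ I would set $\hat r^n_u\coloneqq(1-\tfrac{1}{n})\,\tilde r^n_u+\tfrac{1}{n}(t+u(T-t))$ and $\bar s^n_u\coloneqq(1-\tfrac{1}{n})\,\tilde s^n_u+\tfrac{u}{n}$, where $\tilde r^n,\tilde s^n$ are piecewise-linear interpolations of $\hat r,\bar s$ at a mesh of size $1/n$ (so the first summand is already Lipschitz), while $\bar\xi$ would be mollified by a non-negative one-sided kernel on $[0,1]$ to produce a uniformly Lipschitz, non-decreasing approximation $\bar\xi^n\to\bar\xi$ in $C([0,1];\R^l)$, adjusted near the endpoints so that $\bar\xi^n_{0-}=\bar\xi_{0-}$ and the initial distribution $m$ is preserved. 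The $\tfrac{1}{n}$-identity terms give slopes bounded below by a strictly positive constant and preserve the boundary conditions, so the inverses $r^n,s^n$ are Lipschitz. With these Lipschitz inputs I would then solve the time-changed McKean--Vlasov SDE self-consistently for $\bar X^n$ and $\hat m^n_u\coloneqq\bar\P^n_{(\bar X^n_{s^n_u},\bar\xi^n_{s^n_u})}$ by a Banach fixed-point argument in $\c P_2$, using the Lipschitz hypothesis on $b,\sigma$ in $(m,x,\xi)$ and the uniform positive lower bound on the derivative of $\hat r^n\circ\bar s^n$ to keep the time-changed stochastic integral well defined. Unwinding via \eqref{eq two layer parametrisations hat given bar}--\eqref{eq two layer parametrisations tilde given hat} yields a control $\P^n\in\c L(t,m)$ of which $((\hat m^n,\hat r^n),\bar\P^n)$ is by construction a Lipschitz two-layer parametrisation.

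The convergence $((\hat m^n,\hat r^n),\bar\P^n)\to((\hat m,\hat r),\bar\P)$ then reduces to showing $\bar\P^n\to\bar\P$ in $\c P_2(C([0,1];\R^d\times\R^l\times[0,1]))$, since $\hat r^n\to\hat r$ is built in and $\hat m^n\to\hat m$ in $C([0,1];\c P_2)$ follows by continuity of the pushforward. This is the main obstacle: the coefficients of the time-changed SDE are evaluated along two distinct clocks that are both being perturbed, and the self-consistent mean-field term $\hat m^n$ must also converge. I plan to handle this through a Gronwall-type estimate on $\E^{\bar\P^n}\!\bigl[\sup_{u\leq\cdot}|\bar X^n_u-\bar X^\star_u|^2\bigr]$ for a synchronous coupling $\bar X^\star$ of $\bar\P$ against the same driving Brownian motion as $\bar X^n$, in which the clock discrepancies enter through the Lipschitz/continuous coefficients $b,\sigma,\gamma$ and through the $2$-Wasserstein distance between $\hat m^n$ and $\hat m$. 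The Lipschitz assumption on $b,\sigma$ yields the uniform second-moment bounds needed for $\c P_2$-convergence, and exploiting the explicit $1/n$-factor in the time-change regularisation closes the Gronwall loop; a secondary technical point is the careful treatment of the initial jump so that $\bar\xi^n_{0-}$ and the mean-field term at $u=0-$ are preserved in the limit.
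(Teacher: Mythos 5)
Your high-level architecture is the right one — perturb the time scales by identity terms so the inverse clocks $r^n,s^n$ become Lipschitz, unwind to get a bounded-velocity control, and run Gronwall to get $\c P_2$-convergence — and this is indeed how the paper proceeds. However, two load-bearing steps are missing or problematic, and one of your design choices creates extra work the paper deliberately avoids.

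First, the paper splits the argument into two lemmas precisely so that $\bar\xi$ never needs to be touched: Lemma~\ref{lemma two layer parametrisation lipschitz approximation} first \emph{truncates} $\bar\xi$ (component-wise at level $N$) and then perturbs $\hat r,\bar s$, producing a \emph{Lipschitz} two-layer parametrisation in which $\bar\xi$ is Lipschitz uniformly in $\bar\omega$; Lemma~\ref{lemma lipschitz two layer parametrisation bounded velocity approximation} then exploits that $\bar\xi\circ s^\delta\circ r^{\varepsilon,\delta}$ is automatically Lipschitz once $r^{\varepsilon,\delta},s^\delta$ are. Your plan mollifies $\bar\xi$ directly and skips the truncation. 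Without a truncation step, the Lipschitz constant of the mollified $\bar\xi^n(\bar\omega)$ scales with $\sup_u|\bar\xi_u(\bar\omega)|$, which is in $L^2$ but not $L^\infty$, so the unwound control $\tilde\xi^n$ will \emph{not} be in $\c L(t,m)$ — there is no $\bar\P$-a.s. uniform velocity bound. Mollifying $\bar\xi$ also changes its law, so the consistency requirement $\hat m^n_u=\bar\P^n_{(\bar X^n_{s^n_u},\bar\xi^n_{s^n_u})}$ is no longer free; the paper keeps $\bar\xi$ fixed exactly so that this is inherited from the original parametrisation. Second, and more fundamentally, the Gronwall estimate you sketch cannot even be set up without first constructing the driving Brownian motion. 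The diffusion terms to be compared are $\int\sigma\,d\bar W_{\hat r^n_{\bar s^n_u}}$ and $\int\sigma\,d\bar W_{\hat r_{\bar s_u}}$ — stochastic integrals against two \emph{different} time-changed processes. The paper's Step~2 of Lemma~\ref{lemma lipschitz two layer parametrisation bounded velocity approximation} turns both into integrals against a single new Brownian motion $\bar B$ via $d\bar W_{\hat r_{\bar s_u}}=\sqrt{\dot{\hat r}_{\bar s_u}\dot{\bar s}_u}\,d\bar B_u$, which requires adjoining independent noise on the sets $\{\dot{\hat r}=0\}$, $\{\dot{\bar s}=0\}$ and checking the martingale/adaptedness structure; only then does Itô isometry produce the $L^2$-difference of integrands that Gronwall can close. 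This construction, together with establishing $\dot{\hat r}^{\varepsilon,\delta}_{\bar s^\delta}\dot{\bar s}^\delta\to\dot{\hat r}_{\bar s}\dot{\bar s}$ a.e.\@ (which in the paper relies on $\hat r^\varepsilon$ being a \emph{smooth} mollification, not a piecewise-linear interpolant whose derivative is discontinuous) is where the real work lives, and your proposal does not address either point.
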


We conclude this section with the following corollary. {It} shows that the set of continuous controls is dense in the set of admissible controls. In particular, the reward function $J$ in \eqref{eq reward function general definition} is well-defined. 

\begin{corollary}{
For any initial data $(t,m)$ the set  $\c L(t,m)$ is dense in $\c P(t,m)$.}
\end{corollary}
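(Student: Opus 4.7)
The plan is to string together the two existence/approximation results already established for two-layer parametrisations. Fix $(t,m)\in[0,T]\times\c P_2$ and an arbitrary $\P\in\c P(t,m)$; the goal is to produce a sequence $(\P^n)_n\subseteq\c L(t,m)$ with $\P^n\to\P$ in $\c P_2(D^0)$.

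First I would invoke the existence of at least one two-layer parametrisation of $\P$. This is not stated as a standalone lemma, but it is exactly what is done in Step 1 and Step 2 of part (ii) of the proof of Theorem \ref{theorem reward functional alternative form}: starting from a weak solution $\tilde\P$ associated with $\P$, one inserts measurable interpolations at the countably many jump times of the deterministic measure flow $u\mapsto m_u$ (using the kernel construction via Ionescu--Tulcea), and then inserts pathwise interpolations at the countably many jump times of $\hat\xi$ (using the measurable selection from Lemma \ref{lemma measurable selection omega-wise jumps}). The construction requires no $\varepsilon$-optimality in either layer, so it goes through unchanged and yields some two-layer parametrisation $((\hat m,\hat r),\bar\P)$ of $\P$.

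Next I would apply Theorem \ref{theorem two layer parametrisation bounded velocity approximation} to $((\hat m,\hat r),\bar\P)$ to obtain a sequence $(\P^n)_n\subseteq\c L(t,m)$ of continuous, bounded velocity controls, together with two-layer parametrisations $((\hat m^n,\hat r^n),\bar\P^n)$ of $\P^n$, such that
\[
((\hat m^n,\hat r^n),\bar\P^n) \to ((\hat m,\hat r),\bar\P).
\]
Finally, as noted directly after the definition of convergence of two-layer parametrisations in Section \ref{subsection two layer convergence}, convergence of two-layer parametrisations implies convergence of the underlying singular controls: explicitly, $\P^n = \bar\P^n_{(\bar X_{s^n_{r^n_u}},\bar\xi_{s^n_{r^n_u}})_{u\in[t-,T]}}$ converges to $\bar\P_{(\bar X_{s_{r_u}},\bar\xi_{s_{r_u}})_{u\in[t-,T]}} = \P$ in $\c P_2(D^0)$, which yields the required approximation.

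The only step that requires any real work is verifying existence of a two-layer parametrisation of an arbitrary $\P\in\c P(t,m)$, and this is essentially a citation to the construction inside the proof of Theorem \ref{theorem reward functional alternative form}. Everything else is a direct appeal to Theorem \ref{theorem two layer parametrisation bounded velocity approximation} and the remark that convergence of two-layer parametrisations projects down to convergence of the controls.
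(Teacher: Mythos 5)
Your proposal is correct and follows the same route as the paper's proof: obtain a two-layer parametrisation of $\P$ from the construction in the proof of Theorem \ref{theorem reward functional alternative form}, apply Theorem \ref{theorem two layer parametrisation bounded velocity approximation} to produce approximating bounded-velocity controls with converging parametrisations, and observe that convergence of two-layer parametrisations projects down to convergence of the controls. Your version is slightly more explicit about the last projection step, but the argument is the same.
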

\begin{proof}
{Given an admissible control $\P\in\c P(t,m)$, Lemma \ref{lemma existence two layer parametrisations}} shows that we can construct a two-layer parametrisation of $\P$. {Then}, Theorem \ref{theorem two layer parametrisation bounded velocity approximation} implies the existence of a sequence of continuous, bounded velocity controls {$(\P^n)_n\subseteq\c L(t,m)$} approximating $\P$ together with its two-layer parametrisation, {and in particular satisfying $\P^n\to\P$.}
\end{proof}

\subsubsection{{Proof of Theorem \ref{theorem construct limit parametrisation from convergent sequence}}}\label{appendix proof of theorem construct limit parametrisation from convergent sequence}

    We start by setting up our probability space. To this end, let $(\tilde\P^n)_n\subseteq\c P_2(\tilde\Omega)$ be weak solutions to the SDE \eqref{eq diffusion x} corresponding to the admissible controls $(\P^n)_n$. We recall that $\tilde\Omega$ is equipped with the canonical process $(\tilde X,\tilde\xi,\tilde W)$. Then the sequence $(\tilde\P^n)_n$ is relatively compact in $\c P_2(\tilde\Omega)$ and we can w.l.o.g. assume that there exists some $\tilde\P\in \c P_2(\tilde\Omega)$ such that
    \[
    \tilde\P^n \to \tilde\P\qquad\text{in }\c P_2(\tilde\Omega).
    \]

    We already know that $\tilde\P_{(\tilde X,\tilde\xi)} = \P$. Additionally, since $(\tilde X,\tilde\xi,\tilde W)$ satisfies the SDE \eqref{eq diffusion x} under $\tilde\P^n$ for every $n\in\N$, we can conclude from \cite[Theorem 7.10]{Kurtz1996} that $(\tilde X,\tilde\xi,\tilde W)$ also satisfies \eqref{eq diffusion x} under $\tilde\P$.
    Moreover, since $\tilde W$ is an $\b F^{\tilde X,\tilde \xi,\tilde W}$-Brownian motion for all $\tilde\P^n$, it follows that $\tilde W$ will also be an $\b F^{\tilde X,\tilde\xi,\tilde W}$-Brownian motion under the limit measure $\tilde\P$. Thus $\tilde\P$ is a weak solution to the SDE \eqref{eq diffusion x} corresponding to the singular control $\P$.

    By Skorokhod's representation theorem there exists a common probability space $(\check\Omega,\check{\c F},\check\P)$ with processes
    \[
    (X^n,\xi^n,W^n) \to (X,\xi,W)\text{ in }{D^0\times C([t,T];\R^m)}\qquad\check\P\text{-a.s. and in }L^2,
    \]
    such that $\check\P_{(X^n,\xi^n,W^n)} = \tilde\P^n$ and $\check\P_{(X,\xi,W)} = \tilde\P$. Moreover, we introduce the following auxiliary processes:
    \[
    L^n_u \coloneqq \int_t^u b(v,m^n_v,X^n_v,\xi^n_v) dv + \int_t^u \sigma(v,m^n_v,X^n_v,\xi^n_v) dW^n_v,\qquad \Gamma^n_u \coloneqq \int_{[t-,u]} \gamma(v) d\xi^n_v,
    \]
    and
    \[
    L_u \coloneqq \int_t^u b(v,m_v,X_v,\xi_v) dv + \int_t^u \sigma(v,m_v,X_v,\xi_v) d W_v,\qquad \Gamma_u \coloneqq \int_{[t-,u]} \gamma(v) d\xi_v.
    \]
    By construction,
    \[
    \Gamma^n \to \Gamma\text{ in }D^0\qquad\check\P\text{-a.s. and in }L^2,
    \]
    and thus also
    \[
    L^n = X^n - X^n_{t-} - \Gamma^n \to X - X_{t-} - \Gamma = L\text{ in }D^0\qquad\check\P\text{-a.s. and in }L^2.
    \]
    Since $L^n$ and $L$ are continuous, the above convergence can be strengthened to 
    \begin{equation}\label{eq theorem 3.6 uniform convergence L}
    L^n \to L\text{ in }C([t,T];\R^d)\qquad\check\P\text{-a.s. and in }L^2.
    \end{equation}
    This completes our probabilistic setup, and we are now ready to construct the desired two-layer parametrisations of $(\P^n)_n$ and the limit $\P$. We constructing the two layers successively. Afterwards, we show that the limit is indeed a valid two-layer parametrisation of $\P$.
\begin{enumerate}[wide]
   
    \item[\textbf{ Step 1. \emph{The first layer.}}] To construct the desired parametrisations we start by constructing the first layer $(\hat \P,\hat r)$. To this end, we first introduce suitable $\c P_2$-parametrisations $(\hat \P^n,\hat r^n)$ of $\P^n$. Inspired by the construction in \cite{cohen2021singular},
    we start by defining $r^n$ as
    \[
    r^n_u \coloneqq \frac{(u-t) + \norm{\xi^n_u - \xi^n_{t-}}_{L^2(\check\P)}}{(T-t) + \norm{\xi^n_T - \xi^n_{t-}}_{L^2(\check\P)}}, \qquad\text{for all } u\in [t-,T].
    \]
    Since the control process $\xi^n$ is monotone and continuous, we see that $r^n$ is strictly increasing and continuous, and $r^n_{t-} = 0$, $r^n_T = 1$. Now we define the time change $\hat r^n$ as the inverse of $r^n$, so
    \[
    \hat r^n_u \coloneqq (r^n)^{-1}_u,
    \]
    and introduce the corresponding parametrised processes
    \[
    \hat X^n_u = X^n_{\hat r^n_u},\qquad \hat \xi^n_u = \xi^n_{\hat r^n_u},\qquad \hat L^n_u = L^n_{\hat r^n_u},\qquad \hat \Gamma^n_u = \Gamma^n_{\hat r^n_u},\qquad\text{for all } u\in [0,1].
    \]
    By construction, the pairs $(\check\P_{(\hat X^n,\hat\xi^n)},\hat r^n)$ are $\c P_2$-parametrisations of $\P^n$, since $\hat r^n$ is continuous and bijective and $W^n$ is an $\b F^{X^n,\xi^n,W^n}$-Brownian motion together with $\c F^{W^n}_u\lor \c F^{\hat X^n,\hat\xi^n}_{r^n_u} = \c F^{X^n,\xi^n,W^n}_u$.
    
    Next, we show that,
    \[
    (\check\P_{(\hat X^n,\hat\xi^n)},\hat r^n)_{n} \subseteq \c P_2(D^0)\times C([t,T];[0,1])
    \]
    is relatively compact. For the $\hat r^n$-component, the result follows from the Arzelà-Ascoli theorem since by monotonicity of $u\mapsto \norm{\xi^n_u-\xi^n_{t-}}_{L^2(\check\P)}$,
    \[
    |r^n(u) - r^n(v)| \geq \frac{|u-v|}{(T-t) + \norm{\xi^n_T - \xi^n_{t-}}_{L^2(\check\P)}},
    \]
    which implies that the inverse $\hat r^n$ is uniformly Lipschitz continuous {since $\xi^n\to\xi$ in $L^2$ 
    and in particular $\sup_n \norm{\xi^n_T - \xi^n_{t-}}_{L^2(\check\P)} < \infty$}. For $\check\P_{(\hat X^n,\hat\xi^n)}$, we note that since $\hat r^n$ is uniformly Lipschitz and $(\check\P_{(X^n,\xi^n)})_n\subseteq \c P_2(D^0)$ is relatively compact, it follows from \cite[Theorem 12.12.2]{whitt2002stochastic} that $(\check\P_{(\hat X^n,\hat\xi^n)})_n$ is also relatively compact in $\c P(D^0)$. Finally, since $(\check\P_{(X^n,\xi^n)})_n$ is $\c P_2$-tight, the sequence $(\check\P_{(\hat X^n,\hat\xi^n)})_n$ is $\c P_2$-tight as well; together with the previous result we obtain the relative compactness in $\c P_2(D^0)$.
    In particular, we can assume without loss of generality that
    \[
    (\check\P_{(\hat X^n,\hat\xi^n)},\hat r^n) \to (\hat\P,\hat r) { = (\hat\P_{(\hat X,\hat\xi)},\hat r)} \qquad\text{in }\c P_2(D^0)\times C([t,T];[0,1]),
    \]
    for some $(\hat\P,\hat r) \in \c P_2(D^0)\times C([t,T];[0,1])$.%
    \footnote{{$(\hat X,\hat\xi)$ denotes the canonical process on $D^0$, see also Definitions \ref{definition weak solution W2 time changed SDE} and \ref{definition W2 SDE parametrisation}.}}
    Note that since $\hat\xi^n$ is non-decreasing $\check\P$-a.s., this implies that $\hat\xi$ is also non-decreasing under $\hat\P$.
    
   Using the notation $\hat m^n_u\coloneqq \check\P_{(\hat X^n_u,\hat\xi^n_u)}$, our next step is to show that
    \[
    \bigl((\hat m^n_u)_{u\in [0,1]}\bigr)_n \subseteq C([0,1];\c P_2)
    \]
    is relatively compact. Again we want to apply Arzelà-Ascoli, see \cite[Theorem 3.4.20]{engelking1989general},
    and thus have to show equicontinuity and pointwise relative compactness. Regarding pointwise relative compactness, we note that for every $u\in [0,1]$
    \[
    (\hat m^n_u)_n = (\check\P_{(X^n_{\hat r^n_u},\xi^n_{\hat r^n_u})})_n,
    \]
    is relatively compact since $(\check\P_{(X^n,\xi^n)})_n$ is relatively compact in $\c P_2(D^0)$.
    For the equicontinuity, we note that under $\check\P$, by monotonicity, for $t-\leq v \leq u\leq T$,
    \begin{equation}\label{eq theorem 3.6 monotonicity of xi}
    \begin{split}
    \norm{\xi^n_u - \xi^n_v}_{L^2(\check\P)}^2
    &= \norm{\xi^n_u - \xi^n_{t-}}_{L^2(\check\P)}^2 - \norm{\xi^n_v - \xi^n_{t-}}_{L^2(\check\P)}^2 - 2 \<\xi^n_v - \xi^n_{t-},\xi^n_u - \xi^n_v\>_{L^2(\check\P)}\\
    &\leq \norm{\xi^n_u - \xi^n_{t-}}_{L^2(\check\P)}^2 - \norm{\xi^n_v - \xi^n_{t-}}_{L^2(\check\P)}^2,
    \end{split}
    \end{equation}
    and thus, for $0-\leq v \leq u\leq 1$,
    \begin{equation}\label{eq theorem 3.6 <= hat xi lipschitz}
    \begin{split}
        \norm{\hat\xi^n_u - \hat\xi^n_v}_{L^2(\check\P)}^2
        &= \norm{\xi^n_{\hat r^n_u} - \xi^n_{\hat r^n_v}}_{L^2(\check\P)}^2\\
        &\leq \norm{\xi^n_{\hat r^n_u} - \xi^n_{t-}}_{L^2(\check\P)}^2 - \norm{\xi^n_{\hat r^n_v} - \xi^n_{t-}}_{L^2(\check\P)}^2\\
        &= \Bigl|\bigl((T-t)+\norm{\xi^n_T - \xi^n_{t-}}_{L^2(\check\P)}\bigr) u - (\hat r^n_u-t)\Bigr|^2\\
        &\qquad- \Bigl|\bigl((T-t)+\norm{\xi^n_T - \xi^n_{t-}}_{L^2(\check\P)}\bigr) v - (\hat r^n_v-t)\Bigr|^2\\
        &\leq \Bigl|\bigl((T-t)+\norm{\xi^n_T - \xi^n_{t-}}_{L^2(\check\P)}\bigr) (u+v) - (\hat r^n_u-t) - (\hat r^n_v-t)\Bigr|\\
        &\qquad\cdot\Bigl|\bigl((T-t)+\norm{\xi^n_T - \xi^n_{t-}}_{L^2(\check\P)}\bigr) (u-v) - (\hat r^n_u-\hat r^n_v)\Bigr|\\
        &\leq C (u-v),
    \end{split}
    \end{equation}
    where $C$ is a constant that is independent of $u,v$ and $n$. Now using that $\hat r^n$ is Lipschitz uniformly in $n$, the time changed SDE \eqref{eq weak solution W2 time changed SDE} implies that the $\hat X^n$-component is also equicontinuous
    and thus
    \[
    \c W_2(\hat m^n_u,\hat m^n_v) \leq \norm{\hat X^n_u - \hat X^n_v}_{L^2(\check\P)}^2 + \norm{\hat \xi^n_u - \hat \xi^n_v}_{L^2(\check\P)}^2 \leq C(u-v).
    \]
    
    Therefore we can assume w.l.o.g. that $((\hat m^n_u)_{u\in [0,1]})_n$ converges in $C([0,1];\c P_2)$. Finally, using the uniqueness of the pointwise limits $(\hat m^n_u)_n$ for every $u\in [0,1]$, we get
    \[
    (\hat m^n_u)_{u\in [0,1]} \to (\hat m_u)_{u\in [0,1]} \coloneqq (\hat\P_{(\hat X_u,\hat\xi_u)})_{u\in [0,1]}\in C([0,1];\c P_2).
    \]
    
    \item[\textbf{ Step 2. \emph{The second layer.}}] Now we construct the second layer. We start by constructing suitable two-layer parametrisations $((\hat m^n,\hat r^n),\bar\P^n)$ for $\P^n$ building on the previously constructed $\c P_2$-parametrisations $(\hat \P^n,\hat r^n)$. 
    
    Since we reparametrise the process $\check\omega$-wise, we need to ensure that the resulting process remains measurable. Furthermore, we require the original Brownian motions $\tilde W^n$ to remain a Brownian motion for the reparametrised layer. For this we need to guarantee that the processes $s^n$ and $\bar s^n$ are adapted to the filtration $\b F^{\hat X^n,\hat\xi^n}$. This motivates our following definition, which is again inspired by \cite[Theorem 3.1]{cohen2021singular}.
    We define for each $\check\omega\in \check\Omega$
    \[
    s^n_u(\check\omega) \coloneqq \frac{u + \arctan(\Var(\hat\xi^n(\check\omega), [0,u]))}{1+\frac\pi 2},
    \]
    where $\Var$ denotes the total variation. Also as before we denote the inverse of $s^n(\check\omega)$ by
    \[
    \bar s^n_u(\check\omega)\coloneqq \inf\{v\in [0,1]\mid s^n_v(\check\omega) > u\}\land 1,
    \]
    and introduce the reparametrised processes
    \[
    \bar X^n_u(\check\omega) \coloneqq \hat X^n_{\bar s^n_u(\check\omega)}(\check\omega),\quad
    \bar\xi^n_u(\check\omega) \coloneqq \hat\xi^n_{\bar s^n_u(\check\omega)}(\check\omega),\quad
    \bar L^n_u(\check\omega) \coloneqq \hat L^n_{\bar s^n_u(\check\omega)}(\check\omega),\quad
    \bar \Gamma^n_u(\check\omega) \coloneqq \hat\Gamma^n_{\bar s^n_u(\check\omega)}(\check\omega).
    \]
    By construction both $s^n$ and $\bar s^n$ are $\b F^{\hat\xi^n}$-adapted. Since $\bar s^n$ is continuous and bijective, the fact that $W^n$ is an $\hat{\b F}^n$-Brownian motion, where
    \[
    \hat{\c F}^n_u \coloneqq \c F^{W^n}_u\lor \c F^{\hat X^n,\hat\xi^n}_{r^n_u},
    \]
    implies that it is also an $\bar{\b F}^n$-Brownian motion, where
    \[
    \bar{\c F}^n_u \coloneqq \c F^{W^n}_u\lor \c F^{\bar X^n,\bar\xi^n,\bar s^n}_{s^n_{r^n_u}}.
    \]
   
    The main difference to the previous ($\c P_2$-level) layer is the way we extract a suitable subsequence. It is not enough to find subsequences for each $\check\omega\in\check\Omega$, but we need to construct one subsequence that guarantee convergence in $\c P_2(C([0,1];\R^d\times\R^l\times [t,T]))$. {To this end, we prove that the sequence}
    \[
    (\bar\P^n)_n \coloneqq (\check\P_{(\bar X^n,\bar \xi^n,\bar s^n)})_n\subseteq \c P_2(C([0,1];\R^d\times\R^l\times [0,1]))
    \]
    is {relatively} compact.

    {We start with the $\bar\xi^n$- and $\bar s^n$-components; our main tool will be Arzelà-Ascoli's theorem.}
    Regarding the {required} equicontinuity, we first observe that due to the monotonicity of {$\bar s^n$, $\bar\xi^n$ and $\arctan$}, for all $n\in\N$,
    \begin{equation}
    \begin{split}
    &|\bar s^n_u(\check\omega) - \bar s^n_v(\check\omega)|\\
    &= {(1 + \tfrac\pi 2)| u - \arctan(\Var(\hat\xi^n(\check\omega),[0,\bar s^n_u(\check\omega)])) - v + \arctan(\Var(\hat\xi^n(\check\omega),[0,\bar s^n_v(\check\omega)]))|} \\
    &\leq (1+\tfrac\pi 2)|u-v|,\qquad\check\P\text{-a.s.},
    \end{split}
    \end{equation}
    {using that $u\mapsto \arctan(\Var(\hat\xi^n(\check\omega),[0,\bar s^n_u(\check\omega)]))$ is non-decreasing,} and also
    \begin{equation}
    \begin{split}
    &\bigl|\arctan(\Var(\bar\xi^n(\check\omega),[0,u])) - \arctan(\Var(\bar\xi^n(\check\omega),[0,v]))\bigr|\\
    &{= \bigl|\arctan(\Var(\hat\xi^n(\check\omega),[0,\bar s^n_u(\check\omega)]))  - \arctan(\Var(\hat\xi(\check\omega),[0,\bar s^n_v(\check\omega)]))\bigr|}\\
    &\leq \bigl|\bar s^n_u(\check\omega) + \arctan(\Var(\hat\xi^n(\check\omega),[0,\bar s^n_u(\check\omega)])) - \bar s^n_v(\check\omega) - \arctan(\Var(\hat\xi({\check\omega}),[0,\bar s^n_v(\check\omega)]))\bigr|\\ 
    &= (1 + \tfrac \pi 2) |u-v|.
    \end{split}
    \end{equation}
    
    Using that $\arctan$ is uniformly continuous on every compact interval together with the inequality
    \[
    |\bar\xi^n_u(\check\omega) - \bar\xi^n_v(\check\omega)| \leq |\Var(\bar\xi^n(\check\omega),[0,u]) - \Var(\bar\xi^n(\check\omega),[0,v])|,\qquad\check\P\text{-a.s.},
    \]
    we obtain the desired equicontinuity for the $\bar\xi^n$-component.
    {Noting that $\bar\xi^n_0 = \xi^n_{t-} \to \xi_{t-}$ in $L^2$ and $\bar s^n_0 = 0$, we also see that $(\P_{(\bar\xi^n_0,\bar s^n_0)})_n\subseteq \c P(\R^l\times [0,1])$ is tight.}
    Thus, Arzelà-Ascoli together with Prokhorov's theorem show that $(\check\P_{(\bar\xi^n,\bar s^n)})_n$ is relatively compact in the topology of weak convergence. To extend this result to the $\bar X^n$-component, we will use that, $\check\P$-a.s,
    \begin{equation}\label{eq theorem 3.6 composition bar X L Gamma}
    \bar X^n_u = X^n_{\hat r^n_{\bar s^n_u}} = X^n_{t-} + L^n_{\hat r^n_{\bar s^n_u}} + \Gamma^n_{\hat r^n_{\bar s^n_u}} = X^n_{t-} + \bar L^n_u + \bar\Gamma^n_u,\qquad\text{for all } u\in [0,1].
    \end{equation}
    We already know that $X^n_{t-} \to X_{t-}$ and due to
    \[
    \bar\Gamma^n_u = \int_0^u \gamma(\hat r^n_{\bar s^n_v}) d\bar\xi_v^n,
    \]
    together with the relative compactness of $(\check\P_{(\bar\xi^n,\bar s^n)})_n$, we also see that the sequence $(\check\P_{(\bar\Gamma^n)})_n$ {and thus also the sequence $(\check\P_{(\bar\Gamma^n,\bar\xi^n,\bar s^n)})_n$} is relatively compact in the topology of weak convergence. Finally, for $\bar L^n$, we note that due to the uniform convergence in \eqref{eq theorem 3.6 uniform convergence L},
    \[
    \bar L^n_u = L^n_{\hat r^n_{\bar s^n_u}} \to L_{\hat r_{\bar s_u}} \eqqcolon \bar L_u,\qquad\text{uniformly in } u\in [0,1],\qquad\check\P\text{-a.s. and in }L^2.
    \]

    Together with \eqref{eq theorem 3.6 composition bar X L Gamma}, this now implies that $(\bar\P^n)_n = (\check\P_{(\bar X^n,\bar\xi^n,\bar s^n)})_n$ is relatively compact in $\c P(C([0,1];\R^d\times\R^l\times [0,1]))$. We again strengthen this to relative compactness in $\c P_2(C([0,1];\R^d\times\R^l\times [0,1]))$ by using the $\c P_2$-tightness of $(\check\P_{(\hat X^n,\hat\xi^n)})_n$ which implies that $(\check\P_{(\bar X^n,\bar\xi^n,\bar s^n)})_n$ is also $\c P_2$-tight.
    We can thus assume w.l.o.g. that
    \footnote{{$(\bar X,\bar\xi,\bar s)$ denotes the canonical process on $C([0,1];\R^d\times\R^l\times [0,1])$, see also Definitions \ref{definition weak solution two-layer time changed SDE} and \ref{definition two-layer SDE parametrisation}.}}
    \[
    \check\P_{(\bar X^n,\bar\xi^n,\bar s^n)} = \bar\P^n\to \bar\P { = \bar\P_{(\bar X,\bar\xi, \bar s)}}\qquad\text{in }\c P_2(C([0,1];\R^d\times\R^l\times [0,1])).
    \]
    Since $\bar\xi^n$ is non-decreasing under $\check\P$, this implies that $\bar\xi$ is also non-decreasing under $\bar\P$.
    
    \item[\textbf{ Step 3. \emph{Verification.}}] In the previous steps we have constructed two-layer parametrisations $((\hat m^n,\hat r^n),\bar\P^n)_n$ of $(\P^n)_n$ and a corresponding limit $((\hat m,\hat r),\bar\P)$ such that 
    \[
    ((\hat m^n,\hat r^n),\bar\P^n) \to ((\hat m,\hat r),\bar\P).
    \]
    It only remains to show that $((\hat m,\hat r),\bar\P)$ is indeed a two-layer parametrisation of $\P$.
    To this end, we observe that we have shown in the previous steps that the marginal distributions
    {
    \[
    \begin{split}
        (\check\P_{(X^n,\xi^n,W^n)})_n &\subseteq \c P_2(\tilde\Omega),\\
        (\check\P_{(\hat X^n,\hat\xi^n)})_n &\subseteq \c P_2(D^0([0,1];\R^d\times\R^l)),\\
        (\check\P_{(\bar X^n,\bar\xi^n,\bar s^n)})_n &\subseteq \c P_2(C([0,1];\R^d\times\R^l\times [0,1])),
    \end{split}
    \]
    are relatively compact. Consequently, the sequence of their joint laws,}
    \[
    \big(\check\P_{(X^n,\xi^n,W^n,\hat X^n,\hat\xi^n,\bar X^n,\bar\xi^n,\bar s^n)}\big)_n\subseteq\c P_2\big(\tilde\Omega\times D^0([0,1];\R^d\times\R^l)\times C([0,1];\R^d\times\R^l\times [0,1])\big),
    \]
    is {also} relatively compact. {Therefore}, we can assume w.l.o.g. that there exists a limit $\check\P^\infty$ such that
    \begin{equation}
    \label{eq theorem 3.6 joint law convergence}
    \begin{split}
    & \check\P_{(X^n,\xi^n,W^n,\hat X^n,\hat\xi^n,\bar X^n,\bar\xi^n,\bar s^n)}  \to \check\P^\infty \\
    & \qquad \qquad \ \text{in }\c P_2\big(\tilde\Omega\times D^0([0,1];\R^d\times\R^l)\times C([0,1];\R^d\times\R^l\times [0,1])\big).
     \end{split}
    \end{equation}
    From the previous steps, we also know the following marginal distributions of this limit $\check\P^\infty$,
    \[
    \check\P^\infty_{(X,\xi,W)} = \tilde\P,\qquad \check\P^\infty_{(\hat X,\hat\xi)} = \hat\P,\qquad \check\P^\infty_{(\bar X,\bar\xi,\bar s)} = \bar\P,
    \]
    where $(X,\xi,W,\hat X,\hat\xi,\bar X,\bar\xi,\bar s)$ denotes the canonical process of $\tilde\Omega\times D^0([0,1];\R^d\times\R^l)\times C([0,1];\R^d\times\R^l\times [0,1])$.
    
    The {convergence in \eqref{eq theorem 3.6 joint law convergence}} allows us to use Skorokhod's representation theorem to lift {the joint} measures to a common probability space $(\breve\Omega,\breve{\c F},\breve\P)$ with processes
    \begin{equation}\label{eq theorem 3.6 convergence breve tuples}
    (\breve X^n,\breve \xi^n,\breve W^n,\breve{\hat X}^n,\breve{\hat\xi}^n,\breve{\bar X}^n,\breve{\bar\xi}^n,\breve{\bar s}^n) \to (\breve X,\breve \xi,\breve W,\breve{\hat X},\breve{\hat\xi},\breve{\bar X},\breve{\bar\xi},\breve{\bar s})\quad\breve\P\text{-a.s. and in }L^2,
    \end{equation}
    {defined on this space,
    such that
    \[
    \breve\P_{(\breve{X}^n, \breve{\xi}^n, \breve{W}^n, \breve{\hat{X}}^n, \breve{\hat{\xi}}^n, \breve{\bar{X}}^n, \breve{\bar{\xi}}^n, \breve{\bar{s}}^n)} = \check\P_{(X^n, \xi^n, W^n, \hat{X}^n, \hat{\xi}^n, \bar{X}^n, \bar{\xi}^n, \bar{s}^n)} \quad\text{and}\quad \breve\P_{(\breve{X}, \breve{\xi}, \breve{W}, \breve{\hat{X}}, \breve{\hat{\xi}}, \breve{\bar{X}}, \breve{\bar{\xi}}, \breve{\bar{s}})} = \check\P^\infty.
    \]
    Crucially, this preserves the relationships within each tuple $(\breve{X}^n, \breve{\xi}^n, \breve{W}^n, \breve{\hat{X}}^n, \breve{\hat{\xi}}^n, \breve{\bar{X}}^n, \breve{\bar{\xi}}^n, \breve{\bar{s}}^n)$ because we are applying Skorokhod's representation theorem to the joint distributions. In particular, the relationships between the layers - how $(\breve{\bar{X}}^n, \breve{\bar{\xi}}^n, \breve{\bar{s}}^n)$ is constructed from $(\breve{\hat{X}}^n, \breve{\hat{\xi}}^n)$, and how those are related to $(\breve{X}^n, \breve{\xi}^n, \breve{W}^n)$ - are maintained on the new probability space.
    }
    
    As usual, we denote the right-continuous inverse functions of $\breve{\bar s}^n$ and $\breve{\bar s}$ by $\breve s^n$ and $\breve s$.
    Now we can use that the processes $(\breve{\bar X}^n,\breve{\bar\xi}^n,\breve W^n,\breve{\bar s}^n,\hat r^n)$ satisfy \eqref{eq weak solution two-layer time changed SDE} together with \cite[Theorem 7.10]{Kurtz1996}, to conclude that $(\breve{\bar X},\breve{\bar\xi},\breve W,\breve{\bar s},\hat r)$ also satisfies \eqref{eq weak solution two-layer time changed SDE}.
    Further, since $\breve W^n$ are $\bar{\b F}^n$-Brownian motions, where $\bar{\c F}^n_u \coloneqq \c F^{\breve W^n}_u\lor \c F^{\breve{\bar X}^n,\breve{\bar\xi}^n}_{\breve s^n_{r^n_u}}$, the limit $\breve W$ is also an $\bar{\b F}$-Brownian motion, where $\bar{\c F}_u \coloneqq \c F^{\breve W}_u\lor \c F^{\breve{\bar X},\breve{\bar\xi}}_{\breve s_{r_u}}$.
    Therefore $(\breve\P_{(\breve{\bar X},\breve{\bar\xi},\breve W,\breve{\bar s})},\hat r)$ is a weak solution to the time changed SDE \eqref{eq weak solution two-layer time changed SDE}.
    
    This argument also shows that, since $(\breve\P_{(\breve{\hat X}^n,\breve{\hat\xi}^n,\breve W^n)},\hat r^n)$ are weak solutions to \eqref{eq weak solution W2 time changed SDE}, the limit $(\hat\P,\hat r)$ is also a weak solution to \eqref{eq weak solution W2 time changed SDE}.
    
    To establish the consistency between the previously constructed layers we note that for $\breve\P$-almost all $\breve\omega\in\breve\Omega$, $(\breve{\bar X}^n(\breve\omega),\breve{\bar\xi}^n(\breve\omega),\hat r^n_{\breve{\bar s}^n(\breve\omega)})$ is a $WM_1$ parametrisation of $(\breve X^n(\breve\omega),\breve\xi^n(\breve\omega))$ and $(\breve{\bar X}(\breve\omega),\breve{\bar\xi}(\breve\omega),\hat r_{\breve{\bar s}(\breve\omega)})$ is a $WM_1$ parametrisation of $(\breve{\bar X}_{\breve s_r(\breve\omega)}(\breve\omega),\breve{\bar\xi}_{\breve s_r(\breve\omega)}(\breve\omega))$.
    This implies that
    \[
    (\breve X^n,\breve\xi^n)\to (\breve{\bar X}_{\breve s_r},\breve{\bar\xi}_{\breve s_r})\text{ in }D^0,\qquad\breve\P\text{-a.s.}.
    \]
    However, since we know by \eqref{eq theorem 3.6 convergence breve tuples} that also $(\breve X^n,\breve\xi^n) \to (\breve X,\breve\xi)$ in $D^0$, $\breve\P$-a.s. and in $L^2$, we deduce that
    \[
    (\breve{\bar X}_{\breve s_{r_u}},\breve{\bar\xi}_{\breve s_{r_u}}) = (\breve X_u,\breve\xi_u),\qquad\text{for all }u\in [t-,T],\qquad\breve\P\text{-a.s.}.
    \]
    Using the same argument, we can also show that
    \[
    (\breve{\bar X}_{\breve s_u},\breve{\bar\xi}_{\breve s_u}) = (\breve{\hat X}_u,\breve{\hat \xi}_u),\qquad\text{for all }u\in [0-,1],\qquad\breve\P\text{-a.s.},
    \]
    which proves that the layers are indeed consistent, since
    \[
    \P = \bar\P_{(\bar X_{s_{r_u}},\bar\xi_{s_{r_u}})_{u\in [t-,T]}} = \hat\P_{(\hat X_{r_u},\hat\xi_{r_u})_{u\in [t-,T]}},\qquad \hat\P = \bar\P_{(\bar X_{s_u},\bar\xi_{s_u})_{u\in [0-,1]}}.
    \]
    Hence $((\hat m,\hat r),\bar\P)$ is indeed a two-layer parametrisation of $\P$.
\end{enumerate}

\subsubsection{{Proof of Theorem \ref{theorem two layer parametrisation bounded velocity approximation}}}\label{appendix proof of theorem two layer parametrisation bounded velocity approximation}
The result will be a consequence of the following two Lemmas \ref{lemma two layer parametrisation lipschitz approximation} and \ref{lemma lipschitz two layer parametrisation bounded velocity approximation}.

\begin{lemma}\label{lemma two layer parametrisation lipschitz approximation}
    For every two-layer parametrisation $((\hat m,\hat r),\bar\P)$ of a singular control $\P\in\c P(t,m)$ we can find a sequence of singular controls $(\P^n)_n\subseteq \c P(t,m)$ with Lipschitz two-layer parametrisations $((\hat m^n,\hat r^n),\bar\P^n)$ such that
    \[
    ((\hat m^n,\hat r^n),\bar\P^n) \to ((\hat m,\hat r),\bar\P).
    \]
\end{lemma}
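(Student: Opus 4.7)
The plan is to Lipschitzify the given parametrisation $((\hat m, \hat r), \bar\P)$ by two successive time changes: a deterministic outer reparametrisation making $\hat r$ Lipschitz, and an $\bar\omega$-measurable inner reparametrisation making $\bar\xi$ and $\bar s$ Lipschitz uniformly in $\bar\omega$. By the invariance of the underlying singular control under bijective reparametrisations on each layer (see Section \ref{subsubsection recover the previous layers from bar P}), each approximant will automatically parametrise some $\P^n \in \c P(t,m)$, and convergence follows once both time changes tend to the identity.

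For the outer step I would define, for each $n \in \N$,
\[
    \psi_n(v) := \frac{nv + \hat r_v - t}{n + T - t}, \qquad \phi_n := \psi_n^{-1}.
\]
Then $\psi_n$ is continuous and strictly increasing from $0$ to $1$ with derivative at least $n/(n+T-t)$, so $\phi_n$ is Lipschitz with constant bounded uniformly in $n$, and $\psi_n, \phi_n \to \mathrm{id}$ uniformly. The identity $\psi_n(\phi_n(u)) = u$ rearranges to
\[
    \hat r(\phi_n(u)) = (n+T-t)u + t - n\phi_n(u),
\]
exhibiting $\hat r^{(n)} := \hat r \circ \phi_n$ as Lipschitz with constant $n+T-t$ and converging uniformly to $\hat r$. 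Correspondingly $\bar s$ transforms to $\bar s^{(n)} := \psi_n \circ \bar s$, while $\bar X$ and $\bar\xi$ are unchanged at this stage.

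For the inner step, for a truncation level $K_n \to \infty$, I would introduce the $\bar\omega$-measurable bijection
\[
    \chi_n(v,\bar\omega) := \frac{nv + \bar s^{(n)}_v(\bar\omega) + \bigl(|\bar\xi_v(\bar\omega)-\bar\xi_0(\bar\omega)|_1 \wedge K_n\bigr)}{n + 1 + \bigl(|\bar\xi_1(\bar\omega)-\bar\xi_0(\bar\omega)|_1 \wedge K_n\bigr)},\qquad \tilde\phi_n := \chi_n^{-1},
\]
and set $\bar\xi^n := \bar\xi \circ \tilde\phi_n$, $\bar s^n := \bar s^{(n)} \circ \tilde\phi_n$, with $\bar X^n$ determined compatibly via \eqref{eq two layer parametrisations X hat, X bar given xi hat, xi bar}. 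On the event $\{|\bar\xi_1-\bar\xi_0|_1 \le K_n\}$, whose probability tends to $1$ by $L^2$-integrability of $\bar\xi_1$, the truncation is inactive, and the telescoping relation $\chi_n(\tilde\phi_n(v))=v$ yields that $\bar\xi^n$ and $\bar s^n$ are Lipschitz in $v$ with constant $n+1+K_n$ uniformly in $\bar\omega$. On the complement, a modification with matching Lipschitz constant (e.g.\ by capping the range of $\bar\xi^n$) preserves the bound while contributing a vanishing amount in $\c P_2$.

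Verification of the two-layer parametrisation conditions proceeds along familiar lines: the continuity of $\hat m^n := \hat m \circ \phi_n$ follows from that of $\hat m$; monotonicity of $\bar\xi^n$ is inherited from $\bar\xi$ via the monotone time change $\tilde\phi_n$; and the weak-solution property of $(\bar\P^n, \hat r^n)$ is preserved because the composition $\hat r^n \circ \bar s^n = \hat r \circ \bar s$ entering the SDE \eqref{eq weak solution two-layer time changed SDE} is invariant under the double reparametrisation, while adaptedness of $\psi_n$ and $\chi_n$ to the appropriate $\sigma$-algebras keeps $\bar W$ a Brownian motion, as in Step 1 of the proof of Theorem \ref{theorem reward functional alternative form}. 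The convergence $((\hat m^n,\hat r^n),\bar\P^n) \to ((\hat m,\hat r),\bar\P)$ then follows from uniform convergence of $\psi_n, \tilde\phi_n$ to the identity combined with uniform continuity of $\hat r$ and $\hat m$. The main obstacle is the uniform-in-$\bar\omega$ Lipschitz bound for $\bar\xi^n$: because $|\bar\xi_1|_1$ is only $L^2$-integrable, some form of truncation at a growing level $K_n$ is unavoidable, and the exceptional event requires a separate Lipschitz-preserving modification that vanishes in the $\c P_2$-limit.
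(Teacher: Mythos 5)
The broad architecture you propose — a deterministic outer time change to Lipschitzify $\hat r$, followed by an $\bar\omega$-wise inner time change to Lipschitzify $(\bar\xi,\bar s)$, with both speeding-up parameters tending to the identity — does match the paper's strategy, and your outer step is essentially correct. But your inner step has a genuine gap that the paper avoids by a different ordering of operations.

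The problem is the denominator $n + 1 + \bigl(|\bar\xi_1 - \bar\xi_0|_1 \wedge K_n\bigr)$ in your $\chi_n$. This quantity depends on the terminal value $\bar\xi_1$ and is therefore \emph{anticipating}: the time change $\chi_n(\cdot,\bar\omega)$ at time $v$ depends on the future of $\bar\xi$. As a result the process $\tilde\phi_n$ is not adapted to the relevant filtration, and after the time change one loses the property that $\bar W$ remains a Brownian motion with respect to the reparametrised filtration $\bar{\c F}^{n}_u = \c F^{\bar W}_u \lor \c F^{\bar X^n,\bar\xi^n,\bar s^n}_{s^n_{r^n_u}}$; hence the fourth condition in Definition \ref{definition two-layer SDE parametrisation} (the weak-solution requirement) cannot be verified. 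Your fallback on the exceptional set $\{|\bar\xi_1-\bar\xi_0|_1 > K_n\}$ — ``capping the range of $\bar\xi^n$'' — is also not a small perturbation: once you modify $\bar\xi^n$ you must redefine $\bar X^n$ compatibly via \eqref{eq two layer parametrisations X hat, X bar given xi hat, xi bar} and then re-verify the time-changed SDE, which is precisely constructing a new admissible control, not a local patch. The paper resolves both difficulties at once by reordering: it first truncates the \emph{control itself}, setting $\tilde\xi^N \coloneqq \tilde\xi_{t-} + (\tilde\xi - \tilde\xi_{t-})\wedge N$ and solving the state SDE anew, which yields a bona fide admissible control $\P^N\in\c P(t,m)$ with uniformly bounded $\xi$-component. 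With $\Var(\bar\xi^N,[0,1]) \le lN$ forced a priori, the normalising constants in the subsequent time changes ($1+\delta(1+lN)$, and $1+\delta((T-t)+\|\hat\xi^N_1-\hat\xi^N_{0-}\|_{L^2})$) are deterministic and non-anticipating by construction, so adaptedness and the Brownian-motion property are preserved for free. If you move the truncation to the front as the paper does, your two time-change formulas become essentially the paper's $\alpha^{N,\delta}$ and $\beta^{N,\delta}$, and the rest of your verification sketch goes through.
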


\begin{proof}
    We start with a given two-layer parametrisation $((\hat m,\hat r),\bar\P)$ of a singular control $\P\in\c P(t,m)$. By Definitions \ref{definition two-layer SDE parametrisation} and \ref{definition weak solution two-layer time changed SDE}, there exists a corresponding weak solution $(\bar\P,\hat r) \in \c P_2(\bar\Omega)\times C([0,1];[t,T])$ to the time changed SDE \eqref{eq weak solution two-layer time changed SDE}, so
    \[
    d\bar X_u = b(\hat r_{\bar s_u},\hat m_{\bar s_u},\bar X_u,\bar\xi_u) d\hat r_{\bar s_u} + \sigma(\hat r_{\bar s_u},\hat m_{\bar s_u},\bar X_u,\bar\xi_u) d\bar W_{\hat r_{\bar s_u}} + \gamma(\hat r_{\bar s_u}) d\bar\xi_u,
    \]
    where we denote the canonical process on $\bar\Omega$ by $(\bar X,\bar\xi,\bar W,\bar s)$. Furthermore,  we denote the right-continuous inverse functions of $\hat r$ and $\bar s$ by $r$ and $s$. As discussed in Section \ref{subsubsection recover the previous layers from bar P} we recover $(\tilde X,\tilde\xi)$ and $(\hat X,\hat\xi)$ on $(\bar\Omega,\c B(\bar\Omega),\bar\P)$ and note that the same Brownian motion $\bar W$ is also an $\b F^{\tilde X,\tilde\xi,\bar W}$- and an $\hat{\b F}$-Brownian motion and $(\tilde X,\tilde\xi,\bar W)$ and $(\hat X,\hat\xi,\bar W,\hat r)$ satisfy the SDEs \eqref{eq diffusion x} and \eqref{eq weak solution W2 time changed SDE}:
    \begin{equation}
    \begin{split}
        d\tilde X_u &= b(u,m_u,\tilde X_u,\tilde\xi_u) du + \sigma(u,m_u,\tilde X_u,\tilde\xi_u) d\bar W_u + \gamma(u) d\tilde\xi_u,\\
        d\hat X_u &= b(\hat r_u,\hat m_u,\hat X_u,\hat\xi_u) d\hat r_u + \sigma(\hat r_u,\hat m_u,\hat X_u,\hat\xi_u) d\bar W_{\hat r_u} + \gamma(\hat r_u) d\hat\xi_u,
    \end{split}
    \end{equation}
    where $\hat m_u = \bar\P_{(\bar X_{s_u},\bar\xi_{s_u})} = \bar\P_{(\hat X_u,\hat\xi_u)}$ and $m_u = \bar\P_{(\tilde X_u,\tilde\xi_u)} = \hat m_{r_u}$.
    
    We construct the approximating parametrisations in multiple steps. First we truncate our parametrisation to obtain parametrisations with uniformly bounded $\xi$-components. This allows us to obtain approximating Lipschitz two-layer parametrisations by ``suitably'' perturbing $\hat r$ and $\bar s$. 
    
    \begin{enumerate}[wide]
        \item[\textbf{ Step 1. \emph{Truncating.}}]
        Let us fix $N\in\N$. We introduce the truncated controls
        \[
            \tilde\xi_u^N \coloneqq \tilde\xi_{t-} + (\tilde\xi_u - \tilde\xi_{t-}) \land N,\  \hat \xi_u^N \coloneqq \hat \xi_{0-} + (\hat\xi_u - \hat\xi_{0-}) \land N,\  \bar\xi_u^N \coloneqq \bar\xi_0 + (\bar\xi_u - \bar\xi_0)\land N,
        \]
        where $\xi\land N$ should be understood component-wise. The corresponding state processes are given by
        \begin{equation}
        \begin{split}
            d\tilde X^N_u &= b(u,m^N_u,\tilde X^N_u,\tilde\xi^N_u) du + \sigma(u,m^N_u,\tilde X^N_u,\tilde\xi^N_u) d\bar W_u + \gamma(u) d\tilde\xi^N_u,\  \tilde X^N_{t-} = \tilde X_{t-},\\
            d\hat X^N_u &= b(\hat r_u, \hat m^N_u, \hat X^N_u, \hat \xi^N_u) d\hat r_u + \sigma(\hat r_u, \hat m^N_u, \hat X^N_u, \hat \xi^N_u) d\bar W_{\hat r_u} + \gamma(\hat r_u) d\hat\xi^N_u,\  \hat X^N_{0-} = \hat X_{0-},\\
            d\bar X^N_u &= b(\hat r_{\bar s_u}, \hat m^N_{\bar s_u}, \bar X^N_u, \bar \xi^N_u) d\hat r_{\bar s_u} + \sigma(\hat r_{\bar s_u}, \hat m^N_{\bar s_u}, \bar X^N_u, \bar \xi^N_u) d\bar W_{\hat r_{\bar s_u}} + \gamma(\hat r_{\bar s_u}) d\bar\xi^N_u,\ \bar X^N_0 = \bar X_0,\!\!\!
        \end{split}
        \end{equation}
        where $m^N_u \coloneqq \bar\P_{(\tilde X^N_u,\tilde\xi^N_u)}$ and $\hat m^N_u \coloneqq \bar\P_{(\hat X^N_u,\hat \xi^N_u)}$.
        We see that $((\hat m^N,\hat r),\bar\P_{(\bar X^N,\bar\xi^N,\bar s)})$ is a two-layer parametrisation of the truncated control $\bar\P_{(X^N,\xi^N)}\in \c P(t,m)$. Furthermore, {since the processes $(\hat X^N,\hat\xi^N,\bar X^N,\bar\xi^N)$ and $(\hat X,\hat\xi,\bar X,\bar\xi)$ are defined on the same probability space $(\bar\Omega,\c B(\bar\Omega),\bar\P)$, we can follow} standard Gronwall-type arguments {which} imply that
        \[
        ((\hat m^N,\hat r),\bar\P_{(\bar X^N,\bar\xi^N,\bar s)}) \to ((\hat m,\hat r),\bar\P_{(\bar X,\bar\xi,\bar s)}) = ((\hat m,\hat r),\bar\P).
        \]
        
        \item[\textbf{ Step 2. \emph{The first layer.}}]
        Our next goal is for each $N\in\N$ to reparametrise the above processes $(\hat X^N,\hat\xi^N,\hat r)$ and $(\bar X^N,\bar \xi^N,\bar s)$ corresponding to the two-layer parametrisation 
        \[
        ((\hat m^N,\hat r),\bar\P_{(\bar X^N,\bar\xi^N,\bar s)} )
       \]        
        by introducing small perturbations into  $\hat r$ and $\bar s$ so that the resulting parametrisations are Lipschitz continuous. We start with the first layer $(\hat X^N,\hat \xi^N,\hat r)$ and define for each $\delta > 0$ the process
        \[
        \alpha^{N,\delta}_u \coloneqq \frac{u + \delta((\hat r_u-t)+ \norm{\hat\xi^N_u - \hat\xi^N_{0-}}_{L^2(\bar\P)})}{1 + \delta((T-t) + \norm{\hat\xi^N_1 - \hat\xi^N_{0-}}_{L^2(\bar\P)})},\qquad\text{for all }u\in [0-,1],
        \]
        and let $\hat \alpha^{N,\delta}_u$ be its inverse. Then we define
        \[
        \hat X^{N,\delta}_u \coloneqq \hat X^N_{\hat\alpha^{N,\delta}_u},\qquad \hat \xi^{N,\delta}_u \coloneqq \hat \xi^N_{\hat\alpha^{N,\delta}_u},\qquad \hat r^{N,\delta}_u \coloneqq \hat r_{\hat\alpha^{N,\delta}_u},\qquad\text{for }u\in [0-,1].
        \]
        \
        The process $\hat r^{N,\delta}$ is Lipschitz, due to the monotonicity of $\hat r$ and $\hat\xi^N$ that yields for all $u,v\in [0-,1]$ that
        \begin{equation}
        \begin{split}
        |\hat r^{N,\delta}_u - \hat r^{N,\delta}_v| = |\hat r_{\hat \alpha^{N,\delta}_u}-\hat r_{\hat \alpha^{N,\delta}_v}|
        &\leq \frac {1 + \delta((T-t) + \norm{\hat \xi^N_1-\hat \xi^N_{0-}}_{L^2(\bar\P)})}\delta |u - v| \\
        & \leq \frac{1 + \delta ((T-t) + lN)}\delta |u-v|.
        \end{split}
        \end{equation}
        At the same time we have for all $u\in [0-,1]$ that
        \begin{equation}\label{eq lemma a.1 hat alpha estimate}
        \begin{split}
        |\hat\alpha^{N,\delta}_u - u| &=  |\hat\alpha^{N,\delta}_u - \alpha^{N,\delta}_{\hat\alpha^{N,\delta}_u}| \\
        & =  \abs*{\hat\alpha^{N,\delta}_u - \frac{\hat\alpha^{N,\delta}_u + \delta((\hat r_{\hat\alpha^{N,\delta}_u}-t) + \norm{\hat\xi^N_{\hat\alpha^{N,\delta}_u} - \hat\xi^N_{0-}}_{L^2(\bar\P)})}{1+\delta((T-t)+\norm{\hat\xi^N_1 - \hat\xi^N_{0-}}_{L^2(\bar\P)})}}\\
        &= \abs*{\frac{\delta\bigl(\hat\alpha^{N,\delta}_u (T-t) - (\hat r_{\hat\alpha^{N,\delta}_u}-t) + \alpha^{N,\delta}_u \norm{\hat\xi^N_1 - \hat\xi^N_{0-}}_{L^2(\bar\P)} - \norm{\hat\xi^N_{\hat\alpha^{N,\delta}_u} - \hat\xi^N_{0-}}_{L^2(\bar\P)}\bigr)}{1+\delta((T-t)+\norm{\hat\xi^N_1 - \hat\xi^N_{0-}}_{L^2(\bar\P)})}}\\
        &\leq 2\delta ((T-t) + lN).
        \end{split}
        \end{equation}
        Since $[0,1]\ni u \mapsto (\hat m_u,\hat r_u) \in \c P_2\times [t,T]$ is uniformly continuous, the reparametrised measure flow $\hat m^{N,\delta}_u \coloneqq \bar\P_{(\hat X^{N,\delta}_u,\hat\xi^{N,\delta}_u)} = \hat m^N_{\hat\alpha^{N,\delta}_u}$ on $\c P_2$ satisfies
        \[
        (\hat m^{N,\delta},\hat r^{N,\delta}) \to (\hat m^N,\hat r)\qquad\text{in }C([0,1];\c P_2\times\R^d\times\R^l).
        \]
        
        Finally, by construction $\bar W$ is also an $\hat{\b F}^{N,\delta}$-Brownian motion with $\hat{\c F}^{N,\delta}_u \coloneqq \c F^{\bar W}_u\lor \c F^{\hat X^{N,\delta},\hat\xi^{N,\delta}}_{r^{N,\delta}_u}$ and thus $(\bar\P_{(\hat X^{N,\delta},\hat\xi^{N,\delta})},\hat r)$ are Lipschitz $\c P_2$-parametrisations of $\bar\P_{(\tilde X^N,\tilde\xi^N)}$.
        
        \item[\textbf{ Step 3. \emph{The second layer.}}]
        Now we turn to the second layer. We note that 
        \[
        ( (\hat m^{N,\delta},\hat r^{N,\delta}),\bar\P_{(\bar X^N_{\alpha^{N,\delta}},\bar\xi^N_{\alpha^{N,\delta}},\bar s_{\alpha^{N,\delta}})} )
        \]
        is a two-layer parametrisation of $\bar\P_{(\tilde X^N,\tilde\xi^N)}$. However, it is not necessarily a Lipschitz parametrisation yet. Thus we have to reparametrise the second layer as well. As above, we define for $\delta > 0$ the process 
        \[
        \beta^{N,\delta}_u(\omega) \coloneqq \frac{u + \delta(\bar s_{\alpha^{N,\delta}_u}(\omega) + \Var(\bar\xi^N_{\alpha^{N,\delta}}(\omega),[0,u]))}{1 + \delta(1 + lN)},\qquad\text{for all }u\in [0,1],
        \]
        where $\Var$ denotes the total variation, and let $\bar\beta^{N,\delta}_u(\omega) \coloneqq \inf\{v \in [0,1] \mid \beta^{N,\delta}_v(\omega) > u \} \land 1$ be its right-continuous inverse. Then we define
        \[
        \bar X^{N,\delta}_u(\omega) \coloneqq \bar X^N_{\alpha^{N,\delta}_{\bar\beta^{N,\delta}_u(\omega)}}(\omega),\qquad 
        \bar \xi^{N,\delta}_u(\omega) \coloneqq \bar\xi^N_{\alpha^{N,\delta}_{\bar\beta^{N,\delta}_u(\omega)}}(\omega),\qquad 
        \bar s^{N,\delta}_u(\omega) \coloneqq \bar s_{\alpha^{N,\delta}_{\bar\beta^{N,\delta}_u(\omega)}}(\omega).
        \]
        Just as before, $(\bar\xi^{N,\delta},\bar s^{N,\delta})$ is Lipschitz uniformly in $\omega$ since for all $u,v\in [0,1]$,
        \begin{equation}
        \begin{split}
        &|\bar\xi^{N,\delta}_u(\omega) - \bar\xi^{N,\delta}_v(\omega)| + |\bar s^{N,\delta}_u(\omega) - \bar s^{N,\delta}_v(\omega)|\\
        &= |\bar\xi^N_{\alpha^{N,\delta}_{\bar\beta^{N,\delta}_u(\omega)}}(\omega) - \bar\xi^N_{\alpha^{N,\delta}_{\bar\beta^{N,\delta}_v(\omega)}}(\omega)| + |\bar s_{\alpha^{N,\delta}_{\bar\beta^{N,\delta}_u(\omega)}}(\omega) - \bar s_{\alpha^{N,\delta}_{\bar\beta^{N,\delta}_v(\omega)}}(\omega)| 
         \leq \frac {1 + \delta(1 + lN)} \delta |u-v|.
         \end{split}
        \end{equation}
        Next, we consider the process $\bar\beta^{N,\delta}$; by definition the process is not necessarily surjective and thus not necessarily invertible on $[0,1]$. Instead, we note that $\beta^{N,\delta}_{\bar\beta^{N,\delta}_u}  = u\land \beta_1^{N,\delta}$, which results in an extra term in the following calculation. For all $u\in [0,1]$,
        \begin{equation}
        \begin{split}
        &|\bar\beta^{N,\delta}_u(\omega) - u|\\
        &\leq |\bar\beta^{N,\delta}_u(\omega) - \beta^{N,\delta}_{\bar\beta^{N,\delta}_u(\omega)}(\omega)| + |\beta^{N,\delta}_{\bar\beta^{N,\delta}_u(\omega)}(\omega) - u|\\
        &\leq \bigg|\bar\beta^{N,\delta}_u(\omega) - \frac{\bar\beta^{N,\delta}_u(\omega) + \delta\bigl(\bar s_{\alpha^{N,\delta}_{\bar\beta^{N,\delta}_u(\omega)}}(\omega) + \Var(\bar\xi^N_{\alpha^{N,\delta}}(\omega),[0,\bar\beta^{N,\delta}_u(\omega)])\bigr)}{1 + \delta(1 + lN)}\bigg| + |\beta^{N,\delta}_1(\omega) - 1|\!\!\\
        &\leq 2 \delta (1 + lN) + \frac{\delta |lN - \Var(\bar\xi^N_{\alpha^{N,\delta}}(\omega),[0,1])|}{1+\delta(1 + lN)}
        \leq 3 \delta (1 + lN).
        \end{split}
        \end{equation}
        In particular, the $\bar\P$-almost sure continuity of $(\bar X^N,\bar \xi^N,\bar s)$ implies that
        \[
        (\bar X^{N,\delta},\bar\xi^{N,\delta},\bar s^{N,\delta}) \to (\bar X^N,\bar \xi^N,\bar s)\qquad\text{in }C([0,1];\R^d\times\R^l\times [t,T]),\qquad\bar\P\text{-a.s.},
        \]
        which, together with the $\c P_2$-tightness of $(\bar\P_{(\bar X^{N,\delta},\bar\xi^{N,\delta},\bar s^{N,\delta})})_\delta$ - by its construction as a collection of reparamerisations of $\bar\P_{(\bar X^N,\bar\xi^N,\bar s)}$ - implies that
        \[
        \bar\P_{(\bar X^{N,\delta},\bar\xi^{N,\delta},\bar s^{N,\delta})} \to \bar\P_{(\bar X^N,\bar \xi^N,\bar s)}\qquad\text{in }\c P_2(C([0,1];\R^d\times\R^l\times [0,1])).
        \]
        
        Furthermore, by construction $\beta^{N,\delta}$ and $\bar\beta^{N,\delta}$ are $\b F^{\bar \xi^N_{\alpha^{N,\delta}},\bar s_{\alpha^{N,\delta}}}$-adapted and thus $\bar W$ is also an $\bar{\b F}^{N,\delta}$-Brownian motion with $\bar {\c F}^{N,\delta}_u\coloneqq \c F^{\bar W}_u\lor \c F^{\bar X^{N,\delta},\bar\xi^{N,\delta},\bar s^{N,\delta}}_{s^{N,\delta}_{r^{N,\delta}_u}}$, where $r^{N,\delta}$ and $s^{N,\delta}$ denote as usual the right-continuous inverses of $\hat r^{N,\delta}$ and $\bar s^{N,\delta}$. As a result, we see that $((\hat m^{N,\delta},\hat r^{N,\delta}),\bar\P_{(\bar X^{N,\delta},\bar\xi^{N,\delta},\bar s^{N,\delta})})_{N,\delta}$ are Lipschitz two-layer parametrisations and, along a suitable subsequence with $\delta\to 0$ and $N\to\infty$,
        \[
        ((\hat m^{N,\delta},\hat r^{N,\delta}),\bar\P_{(\bar X^{N,\delta},\bar\xi^{N,\delta},\bar s^{N,\delta})} ) \to ( (\hat m,\hat r),\bar\P_{(\bar X,\bar\xi,\bar s)}) = ((\hat m,\hat r),\bar\P ).
        \]
    \end{enumerate}
\end{proof}

\begin{lemma}\label{lemma lipschitz two layer parametrisation bounded velocity approximation}
    For every Lipschitz two-layer parametrisation $((\hat m,\hat r),\bar\P)$ of a singular control $\P\in\c P(t,m)$ we can find sequence of continuous bounded velocity controls $(\P^n)_n\subseteq \c L(t,m)$ with two-layer parametrisations $((\hat m^n,\hat r^n),\bar\P^n)$ such that
    \[
    ( (\hat m^n,\hat r^n),\bar\P^n) \to ((\hat m,\hat r),\bar\P ).
    \]
\end{lemma}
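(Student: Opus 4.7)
The Lipschitz hypothesis gives us that $\hat r$ is Lipschitz on $[0,1]$ and that $\bar s,\bar\xi$ are Lipschitz in $u$ uniformly $\bar\P$-a.s. The only obstruction to the underlying control $\tilde\xi_u = \bar\xi_{s_{r_u}}$ being bounded-velocity is that the right-inverses $r=\hat r^{-1}$ and $s=\bar s^{-1}$ may have jumps, which happens precisely on the flat intervals of $\hat r$ and $\bar s$. The plan is therefore to perturb $\hat r$ and $\bar s$ into strictly increasing (in fact bi-Lipschitz) time changes, rebuild a two-layer parametrisation around them, and check that the resulting singular control is bounded-velocity and converges to $\P$.

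\textbf{Construction of the perturbations.} Fix $\varepsilon_n \downarrow 0$ and set
\begin{align}
\hat r^n_u &:= (1-\varepsilon_n)\hat r_u + \varepsilon_n\bigl(t + u(T-t)\bigr), \\
\bar s^n_u &:= (1-\varepsilon_n)\bar s_u + \varepsilon_n u.
\end{align}
Both are Lipschitz with endpoints $\hat r^n_0=t,\hat r^n_1=T,\bar s^n_0=0,\bar s^n_1=1$ and have derivatives bounded below by $\varepsilon_n(T-t)$ and $\varepsilon_n$ respectively; consequently their inverses $r^n,s^n$ are $\bar\P$-a.s.\ Lipschitz with constants depending on $n$. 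The adaptedness of $\bar s^n$ to $\b F^{\bar\xi,\bar s}$ is inherited from that of $\bar s$.

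\textbf{The new parametrisation.} On the same canonical space $\bar\Omega$, keep $\bar\xi^n:=\bar\xi$ and $\bar W^n:=\bar W$, and let $\bar X^n$ be the unique solution of the McKean–Vlasov SDE
\begin{align}
d\bar X^n_u = b\bigl(\hat r^n_{\bar s^n_u},\hat m^n_{\bar s^n_u},\bar X^n_u,\bar\xi_u\bigr)\,d\hat r^n_{\bar s^n_u} + \sigma\bigl(\hat r^n_{\bar s^n_u},\hat m^n_{\bar s^n_u},\bar X^n_u,\bar\xi_u\bigr)\,d\bar W_{\hat r^n_{\bar s^n_u}} + \gamma(\hat r^n_{\bar s^n_u})\,d\bar\xi_u,
\end{align}
with $\bar X^n_0 = \bar X_0$ and $\hat m^n_u := \bar\P_{(\bar X^n_{s^n_u},\bar\xi_{s^n_u})}$. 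Existence and uniqueness come from the standard Banach fixed-point argument on the space $C([0,1];\c P_2)$, using our Lipschitz assumptions on $b,\sigma$ in $(m,x,\xi)$ and the fact that, after the change of variable $v = \hat r^n_{\bar s^n_u}$ (which is $\bar\P$-a.s.\ a strictly increasing bi-Lipschitz bijection from $[0,1]$ to $[t,T]$), this reduces to a classical McKean–Vlasov SDE on $[t,T]$ driven by the standard Brownian motion $\bar W$ and the control $\tilde\xi^n_v := \bar\xi_{s^n_{r^n_v}}$. Since $\bar\xi,s^n,r^n$ are all Lipschitz, $\tilde\xi^n$ is Lipschitz in $v$, so the associated control $\P^n := \bar\P^n_{(\bar X^n_{s^n_{r^n_u}},\bar\xi_{s^n_{r^n_u}})}$ lies in $\c L(t,m)$. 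By construction, $((\hat m^n,\hat r^n),\bar\P^n)$ with $\bar\P^n := \bar\P_{(\bar X^n,\bar\xi,\bar s^n)}$ satisfies the four conditions of Definition \ref{definition two-layer SDE parametrisation} (adaptedness of $\bar W$ to the new filtration $\bar{\b F}^n$ follows since $\bar s^n$ carries the same information as $\bar s$ and we only composed with a deterministic perturbation of $\hat r$).

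\textbf{Convergence and main obstacle.} As $n\to\infty$ we have $\hat r^n\to\hat r$ uniformly on $[0,1]$, and $\bar s^n\to\bar s$ uniformly $\bar\P$-a.s.; together with the uniform Lipschitz bound on $\bar\xi$ these provide the ``inputs'' for a stability estimate. Applying Gronwall's inequality to $\E^{\bar\P}[\sup_{u\in[0,1]}\lvert\bar X^n_u-\bar X_u\rvert^2]$, where we use the Lipschitz continuity of $b,\sigma$ in $(m,x,\xi)$, the BDG inequality to control the stochastic integral against $\bar W_{\hat r^n_{\bar s^n_\cdot}}$ (rewritten via time-change as an integral against a standard BM), and the continuity of $\gamma$, yields
\begin{align}
\E^{\bar\P}\Bigl[\sup_{u\in[0,1]}\lvert\bar X^n_u-\bar X_u\rvert^2\Bigr] \longrightarrow 0,
\end{align}
which simultaneously gives $\sup_{u}\c W_2(\hat m^n_u,\hat m_u)\to 0$ and $\bar\P^n\to\bar\P$ in $\c P_2$. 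The delicate point—and the main obstacle—is handling the coupling between the pathwise convergence of $\bar X^n$ and the distributional convergence of $\hat m^n$ in the $\c P_2$ topology on $C([0,1];\c P_2)$; this is resolved by closing a single Gronwall-type loop that simultaneously controls both, exploiting that $\hat m^n_u$ is dominated by the same $L^2$-norm of $\bar X^n-\bar X$ appearing on the left-hand side.
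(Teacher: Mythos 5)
Your overall strategy --- perturbing the time scales to make them bi-Lipschitz and then recomposing --- mirrors the paper's, and your perturbations $\hat r^n$, $\bar s^n$ are essentially the paper's $\hat r^{\varepsilon,\delta}$ and $\bar s^\delta$ from \eqref{eq lemma a.2 definition r delta}--\eqref{eq lemma a.2 definition s delta}, minus the mollification of $\hat r$. However, there is a genuine gap in your step ``The new parametrisation'': you cannot simply keep $\bar W^n := \bar W$ and declare the stochastic integral $\int \sigma(\cdots)\,d\bar W_{\hat r^n_{\bar s^n_u}}$ well-posed. By Definition~\ref{definition weak solution two-layer time changed SDE}, $\bar W$ is a Brownian motion only for the original filtration $\bar{\c F}_v = \c F^{\bar W}_v \vee \c F^{\bar X,\bar\xi,\bar s}_{s_{r_v}}$, whereas the new SDE requires it to be a Brownian motion for $\bar{\c F}^n_v = \c F^{\bar W}_v \vee \c F^{\bar X^n,\bar\xi,\bar s^n}_{s^n_{r^n_v}}$. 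Your assertion that adaptedness follows ``since $\bar s^n$ carries the same information as $\bar s$ and we only composed with a deterministic perturbation of $\hat r$'' overlooks that the inverse time changes $r^n,s^n$ are not a.s.\ dominated by $r,s$: whenever $\hat r_{r_v}=v>t+r_v(T-t)$, i.e.\ $\hat r$ lies above the straight line at the preimage of $v$, one has $r^n_v > r_v$, so $s^n_{r^n_v}$ can strictly exceed $s_{r_v}$ and the integrand at Brownian time $v$ depends on $\bar\xi$ and $\bar X^n$ beyond the information in $\bar{\c F}_v$. The independence of future Brownian increments $\bar W_{v''}-\bar W_{v}$ from the integrand is therefore lost, and the SDE you wrote has no valid weak-solution interpretation in the sense of Definition~\ref{definition weak solution two-layer time changed SDE}.

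The paper's Step 2 is precisely devoted to fixing this. It ``unfolds'' the time-changed integrator by writing $d\bar W_{\hat r_u}=\sqrt{\dot{\hat r}_u}\,d\hat B_u$ on a probability-space extension, filling the flat intervals of $\hat r$ (where $\bar W_{\hat r_\cdot}$ is frozen but the new time scale keeps moving) with an independent Brownian motion, repeats the construction for $\bar s$, and then ``refolds'' into a \emph{new} Brownian motion $\bar W^{\varepsilon,\delta}$ adapted to the perturbed time scales; only with this new driving noise is $(\breve\P_{(\bar X^{\varepsilon,\delta},\bar\xi,\bar W^{\varepsilon,\delta},\bar s^\delta)},\hat r^{\varepsilon,\delta})$ a weak solution to \eqref{eq weak solution two-layer time changed SDE}. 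A secondary omission is the mollification of $\hat r$: in the Gronwall verification one needs $\dot{\hat r}^n_{\bar s^n_u}\dot{\bar s}^n_u\to\dot{\hat r}_{\bar s_u}\dot{\bar s}_u$ a.e., but with your choice $\dot{\hat r}^n_u=(1-\varepsilon_n)\dot{\hat r}_u+\varepsilon_n(T-t)$ there is no control of $\dot{\hat r}_{\bar s^n_u}$ versus $\dot{\hat r}_{\bar s_u}$, since $\dot{\hat r}$ is merely $L^1$ and $\bar s^n_u\ne\bar s_u$; the paper's smooth mollification $\hat r^\varepsilon$ restores the uniform continuity of the derivative that drives the key convergence \eqref{eq lemma a.2 convergence dot bar r epsilon,delta dot bar r}.
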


\begin{proof}
    Again we consider a weak solution $(\bar\P,\hat r)\in \c P_2(\bar\Omega)\times C([0,1];[t,T])$ corresponding to the two-layer parametrisation $((\hat m,\hat r),\bar\P)$. As outlined in Section \ref{subsubsection recover the previous layers from bar P}, we denote the canonical process on $\bar\Omega$ by $(\bar X,\bar\xi,\bar W,\bar s)$ and further recover $(\hat X,\hat\xi)$ and $(\tilde X,\tilde\xi)$ on $(\bar\Omega,\c B(\bar\Omega),\bar\P)$ such that $(\bar\P_{(\hat X,\hat\xi,\bar W)},\hat r)$ is a weak solution to the time changed SDE \eqref{eq weak solution W2 time changed SDE} and $(\bar\P_{(\tilde X,\tilde\xi,\bar W)})$ is a weak solution to \eqref{eq diffusion x}. As usual we will denote the right-continuous inverse functions of $\hat r$ and $\bar s$ by $r$ and $s$.
    
    For Lipschitz two-layer parametrisations, if $r$ and $s$ are both Lipschitz, then $\tilde\xi = \bar\xi_{s_r}$ is Lipschitz as well and thus $\bar\P_{(\tilde X,\tilde\xi)}\in \c L(t,m)$. In general $r$ and $s$ will not be Lipschitz; the previous observation provides nonetheless the main idea for our proof. Specifically, we show that we can construct 
    Lipschitz approximations by slightly perturbing $\hat r$ and $\bar s$ such that $r$ and $s$ become Lipschitz continuous.
   
   \begin{enumerate}[wide]
       \item[\textbf{ Step 1. \emph{The bounded velocity control.}}]
    We start mollifying the time scale $\hat r$ using a standard mollifier to obtain smooth, monotone functions $\hat r^\varepsilon \in C^\infty([0,1];[t,T])$ such that\footnote{{%
        To construct $\hat r^\varepsilon$, we first extend $\hat r$ to $[-1,2]$ by reflection,
        \[
            \hat r_u \coloneqq 
            \begin{cases} 
                2t - \hat r_{-u}, & u < 0, \\
                2T - \hat r_{2-u}, & u > 1,
            \end{cases}
        \]
        ensuring boundary conditions are preserved after mollification.
        Next we define $\hat r^\varepsilon \coloneqq \eta_\varepsilon * \hat r$, where $\eta_\varepsilon$ is the standard mollifier and $*$ denotes the convolution between two functions. 
        By \cite[Appendix C, Theorem 7]{evans_partial_2010}, $\hat r^\varepsilon$ and $\dot{\hat r}^\varepsilon$ inherit smoothness, monotonicity, and convergence properties, noting that $\dot{\hat r}^\varepsilon = 
        \eta_\varepsilon * \dot{\hat r}$. 
        The Lipschitz constant is preserved since $\norm{\dot{\hat r}^\varepsilon}_\infty = \norm{\eta_\varepsilon * \dot{\hat r}}_\infty \leq \norm{\dot{\hat r}}_\infty$.
    }}
    \begin{enumerate}[label=(\roman*)]
        \item $\hat r^\varepsilon_0 = \hat r_0 = t$ and $\hat r^\varepsilon_1 = \hat r_1 = T$,
        \item $\hat r^\varepsilon\to \hat r$ in $C([0,1];[t,T])$,
        \item $\dot{\hat r}^\varepsilon \to \dot{\hat r}$ a.e. and in $L^1([0,1];[t,T])$,
        \item $\hat r^\varepsilon$ is Lipschitz continuous with the same Lipschitz constant as $\hat r$ and in particular $\dot{\hat r}^\varepsilon$ is uniformly bounded in $\varepsilon$,
    \end{enumerate}
    where $\dot{\hat r}$ denotes the almost sure derivative of $\hat r$. This mollification {ensures} that not only $\hat r$ and $\bar s$ but also $\hat r_{\bar s}$ {are} approximated sufficiently well, see the later result \eqref{eq lemma a.2 convergence dot bar r epsilon,delta dot bar r}. Now we define for $\delta > 0$ process
    \begin{equation}\label{eq lemma a.2 definition r delta}
        \hat r^{\varepsilon,\delta}_u \coloneqq t + \frac{(\hat r^\varepsilon_u-t) + \delta u}{(T-t) + \delta} (T-t),\qquad\text{for }u\in [0,1],
    \end{equation}
    and denote its inverse by $r^{\varepsilon,\delta}$. For $\bar s(\omega)$ we skip the mollification and directly introduce the process
    \begin{equation}\label{eq lemma a.2 definition s delta}
        \bar s^\delta_u(\omega) \coloneqq \frac{\bar s_u(\omega)  + \delta u}{1 + \delta},\qquad\text{for }u\in [0,1],
    \end{equation}
    along with its inverse $s^\delta(\omega)$. Since $\hat r,\bar s(\omega)$ and $\bar \xi(\omega)$ are Lipschitz continuous independently of $\omega$, so are the processes $\hat r^{\varepsilon,\delta}, r^{\varepsilon,\delta},\bar s^\delta(\omega)$ and $s^\delta(\omega)$ as well. We then define our bounded velocity controls as
    \begin{equation}\label{eq lemma a.2 definition xi epsilon,delta}
        \tilde \xi^{\varepsilon,\delta}_u(\omega) \coloneqq \bar\xi_{s^{\delta}_{r^{\varepsilon,\delta}_u}(\omega)}(\omega)\qquad\text{for }u\in [t,T].
    \end{equation}

    \item[\textbf{ Step 2. \emph{The corresponding state process.}}]
    We now construct a state process corresponding to the previously introduced bounded velocity control.  To this end, we construct a specific Brownian motion that enables us to obtain the desired convergence of the resulting two-layer parametrisations. We start by recalling that $(\bar\P_{(\hat X,\hat\xi,\bar W)},\hat r)$ is a weak solution to \eqref{eq weak solution W2 time changed SDE}.
    Starting with $\bar W$ on $(\bar\Omega,\bar{\c F},\bar\P)$ we can construct a new Brownian motion $\hat B$ on some enlarged probability space $(\mathring\Omega,\mathring{\c F},\mathring\P)$ such that
    \[
    d\bar W_{\hat r_u} = \sqrt{\dot{\hat r}_u} d\hat B_u\qquad\text{for }u\in[0,1].
    \]
    We achieve this by sufficiently enlarging the probability space to allow for an independent Brownian motion $\mathring W$ and then defining
    \[
    \hat B_v \coloneqq \int_0^v \1_{\{\dot {\hat  r}_u \not= 0\}} \frac 1 {\sqrt{\dot {\hat  r}_u}} d\bar W_{\hat r_u} + \int_0^v \1_{\{\dot{\hat r}_u = 0\}} d\mathring W_u,\qquad\text{for }v\in [0,1].
    \]
    Then $\hat B$ is an $\b F^{\hat B,\hat X,\hat\xi}$-Brownian motion on $[0,1]$ since $\<\hat B\>_u = u$ and $(\hat X,\hat \xi,\hat B,\hat r)$ satisfies the equation
    \begin{equation}\label{eq lemma a.2 sde hat X}
        d\hat X_u = b(\hat r_u,\hat m_u,\hat X_u,\hat \xi_u) d\hat r_u + \sigma(\hat r_u,\hat m_u,\hat X_u,\hat \xi_u) \sqrt{\dot{\hat r}_u} d\hat B_u + \gamma(\hat r_u) d\hat \xi_u,\qquad \hat X_0 = \tilde X_{t-}.
    \end{equation}

    Similarly, starting with $\hat B$, we subsequently construct an $\b F^{\bar B,\bar X,\bar\xi,\bar s}$-Brownian motion $\bar B$ on some enlarged probability space $(\breve\Omega,\breve{\c F},\breve\P)$ using an independent Brownian motion $\breve W$ such that
    \[
    d\hat B_{\bar s_u(\omega)} = \sqrt{\dot{\bar s}_u(\omega)} d\bar B_u\qquad\text{for }u\in [0,1],
    \]
    and such that
    \begin{equation}\label{eq lemma a.2 definition bar W}
        d\bar B_u = \1_{\{\dot{\bar s}_u(\omega) \not= 0\}} \frac 1 {\sqrt{\dot{\bar s}_u(\omega)}} d\hat B_{\bar s_u(\omega)} + \1_{\{\dot{\bar s}_u(\omega) = 0\}} d\breve W_u,\qquad\text{for }u\in [0,1].
    \end{equation}
    so that $(\bar X,\bar \xi,\bar B,\bar s,\hat r)$ satisfies the equation 
    \begin{equation}
    \label{eq lemma a.2 sde bar X}
    \begin{split}
        d \bar X_u &= b(\hat r_{\bar s_u(\omega)},\hat m_{\bar s_u(\omega)},\bar X_u,\bar \xi_u) d\hat r_{\bar s_u(\omega)}\\*
        &\qquad+ \sigma(\hat r_{\bar s_u(\omega)},\hat m_{\bar s_u(\omega)},\bar X_u,\bar \xi_u) \sqrt{\dot{\hat r}_{\bar s_u(\omega)}\dot{\bar s}_u(\omega)} d\bar B_u + \gamma(\hat r_{\bar s_u(\omega)}) d\bar\xi_u,\qquad \bar X_0 = \tilde X_{t-}.
    \end{split}
    \end{equation}
    
    With this we are now ready to construct our Brownian motion for $\tilde X^{\varepsilon,\delta}$. We first define $\hat B^{\delta}$ such that
    \begin{equation}\label{eq lemma a.2 definition hat W,delta}
        d\hat B^{\delta}_{\bar s^{\delta}_u(\omega)} = \sqrt{\dot{\bar s}^{\delta}_u(\omega)} d\bar B_u,\qquad\text{for }u\in [0,1],
    \end{equation}
    by letting
    \[
    \hat B^{\delta}_v \coloneqq \int_0^v \sqrt{\dot{\bar s}^{\delta}_{s^{\delta}_u(\omega)}(\omega)} d\bar B_{s^{\delta}_u(\omega)},\qquad\text{for }v\in [0,1].
    \]
    Since $\bar s^\delta$ is $\b F^{\bar s}$-adapted, $\hat B^\delta$ is an $(\c F^{\hat B^\delta}_u \lor \c F^{\bar X,\bar\xi,\bar s}_{s^\delta_u})_u$-Brownian motion and thus also an $(\c F^{\hat B^\delta}_u\lor \c F^{\bar X,\bar\xi,\bar s^\delta}_{s^\delta_u})_u$-Brownian motion.
    In the same fashion we define an $\bar{\b F}^{\varepsilon,\delta}$-Brownian motion $\bar W^{\varepsilon,\delta}$, where $\bar{\c F}^{\varepsilon,\delta}_u \coloneqq \c F^{\bar W^{\varepsilon,\delta}}_u\lor \c F_{s^\delta_{r^{\varepsilon,\delta}_u}}^{\bar X,\bar\xi,\bar s^{\delta}}$, such that
    \[
    d\bar W^{\varepsilon,\delta}_{\hat r^{\varepsilon,\delta}_u} = \sqrt{\dot{\hat r}^{\varepsilon,\delta}_u} d\hat B^{\delta}_u,\qquad\text{for }u\in [0,1].
    \]
    Armed with the above results we can now introduce the state process corresponding to our bounded velocity control as
    \[
        d \tilde X^{\varepsilon,\delta}_u \coloneqq b(u,m^{\varepsilon,\delta}_u,\tilde X^{\varepsilon,\delta}_u,\tilde \xi^{\varepsilon,\delta}_u) du
        + \sigma(u,m^{\varepsilon,\delta}_u,\tilde X^{\varepsilon,\delta}_u,\tilde \xi^{\varepsilon,\delta}_u) d\bar W^{\varepsilon,\delta}_u + \gamma(u) d\tilde \xi^{\varepsilon,\delta}_u,\ \tilde X^{\varepsilon,\delta}_{t-} = \tilde X_{t-}.
    \]
    {Then $\breve\P_{(\tilde X^{\varepsilon,\delta},\tilde\xi^{\varepsilon,\delta},\tilde W^{\varepsilon,\delta})}$ is a weak solution to \eqref{eq diffusion x} in the sense of Definition \ref{definition weak solution sde}.}
    
    \item[\textbf{ Step 3. \emph{Weak solutions and candidate parametrisations.}}]
    We are now going to show how to obtain weak solutions to the time-changed SDEs \eqref{eq weak solution W2 time changed SDE} and  \eqref{eq weak solution two-layer time changed SDE}  and hence candidate parametrisations from the above state process. 
    
    We first notice that  $\bar W^{\varepsilon,\delta}$ is an $\b F^{\bar W^{\varepsilon,\delta},\tilde X^{\varepsilon,\delta},\tilde \xi^{\varepsilon,\delta}}$-Brownian motion and thus $\breve\P_{(\tilde X^{\varepsilon,\delta},\tilde\xi^{\varepsilon,\delta})}\in \c P(t,m)$.
    We then define the state processes
    \[
    \hat X^{\varepsilon,\delta}_u \coloneqq \tilde X^{\varepsilon,\delta}_{\hat r^{\varepsilon,\delta}_u},\qquad \hat\xi^{\delta}_u \coloneqq \tilde \xi^{\varepsilon,\delta}_{\hat r^{\varepsilon,\delta}_u} = \bar \xi_{s^{\delta}_u(\omega)},\qquad\text{for }u\in [0-,1].
    \]
    In terms of the notation $\hat m^{\varepsilon,\delta}_u \coloneqq \breve\P_{(\hat X^{\varepsilon,\delta}_u,\hat\xi^{\varepsilon,\delta}_u)}$, it follows that $\big(\hat X^{\varepsilon,\delta},\hat \xi^{\delta},\hat B^\delta,\hat r^{\varepsilon,\delta}\big)$ satisfies - similar to equation \eqref{eq lemma a.2 sde hat X} - the equation
    \begin{equation}\label{eq lemma a.2 sde hat X epsilon,delta}
    \begin{split}
        d \hat X^{\varepsilon,\delta}_u = b(\hat r^{\varepsilon,\delta}_u,\hat m^{\varepsilon,\delta}_u,\hat X^{\varepsilon,\delta}_u,\hat \xi^{\delta}_u) d\hat r^{\varepsilon,\delta}_u
        + \sigma(\hat r^{\varepsilon,\delta}_u,\hat m^{\varepsilon,\delta}_u,\hat X^{\varepsilon,\delta}_u,\hat \xi^{\delta}_u) \sqrt{\dot{\hat r}^{\varepsilon,\delta}_u} d \hat B^{\delta}_u + \gamma(\hat r^{\varepsilon,\delta}_u) d\hat \xi^{\delta}_u,\\
        \hat X^{\varepsilon,\delta}_0 = \tilde X_{t-}.
    \end{split}
    \end{equation}
    
    Since $\hat r^{\varepsilon,\delta}$ is bijective, $\bar W^{\varepsilon,\delta}$ is also an $\hat{\b F}^{\varepsilon,\delta}$-Brownian motion with $\hat{\c F}^{\varepsilon,\delta}_u \coloneqq \c F^{\bar W^{\varepsilon,\delta}}_u \lor \c F^{\hat X^{\varepsilon,\delta},\xi^\delta}_{r^{\varepsilon,\delta}_u}$. As a result,  $(\breve\P_{(\hat X^{\varepsilon,\delta},\hat \xi^{\delta},\bar W^{\varepsilon,\delta})},\hat r^{\varepsilon,\delta})$ is a weak solution to \eqref{eq weak solution W2 time changed SDE} and so 
    \[
        \big(\breve\P_{(\hat X^{\varepsilon,\delta},\hat \xi^{\delta})},\hat r^{\varepsilon,\delta}\big) 
    \]  
    is a $\c P_2$-parametrisation of $\breve\P_{(\tilde X^{\varepsilon,\delta},\tilde \xi^{\varepsilon,\delta})}$. Similarly, we define the state process
    \[
    \bar X^{\varepsilon,\delta}_u \coloneqq \hat X^{\varepsilon,\delta}_{\bar s^{\delta}_u(\omega)},\qquad\text{for }u\in [0,1],
    \]
    so that $(\bar X^{\varepsilon,\delta},\bar\xi,\bar B,\bar s^\delta,\hat r^{\varepsilon,\delta})$ satisfies - similar to equation \eqref{eq lemma a.2 sde bar X} - the equation
    \begin{equation}
        \label{eq lemma a.2 sde bar X epsilon,delta}
        \begin{split}
            d \bar X^{\varepsilon,\delta}_u &= b(\hat r^{\varepsilon,\delta}_{\bar s^\delta_u(\omega)},\hat m^{\varepsilon,\delta}_{\bar s^{\delta}_u(\omega)},\bar X^{\varepsilon,\delta}_u,\bar\xi_u) d\hat r^{\varepsilon,\delta}_{\bar s^\delta_u(\omega)}\\*
            &\qquad + \sigma(\hat r^{\varepsilon,\delta}_{\bar s^\delta_u(\omega)},\hat m^{\varepsilon,\delta}_{\bar s^{\delta}_u(\omega)},\bar X^{\varepsilon,\delta}_u,\bar \xi_u) \sqrt{\dot{\hat r}^{\varepsilon,\delta}_{\bar s^\delta_u(\omega)}\dot{\bar s}^\delta_u(\omega)} d\bar B_u + \gamma(\hat r^{\varepsilon,\delta}_{\bar s^\delta_u(\omega)}) d\bar \xi_u,\quad \bar X^{\varepsilon,\delta}_0 = \tilde X_{t-}.
        \end{split}
    \end{equation}
    By construction, the mapping $(u,\omega) \mapsto (\bar X^{\varepsilon,\delta},\bar\xi,\bar s^\delta)$ is measurable and $\bar W^{\varepsilon,\delta}$ is an $\bar{\b  F}^{\varepsilon,\delta}$-Brownian motion with $\bar{\c F}^{\varepsilon,\delta}_u \coloneqq \c F^{\bar W^{\varepsilon,\delta}}_u\lor \c F^{\bar X^{\varepsilon,\delta},\bar\xi,\bar s^{\delta}}_{s^\delta_{r^{\varepsilon,\delta}_u}}$. As a result, $(\breve\P_{(\bar X^{\varepsilon,\delta},\bar\xi,\bar W^{\varepsilon,\delta},\bar s^\delta)},\hat r^{\varepsilon,\delta})$ is a weak solution to \eqref{eq weak solution two-layer time changed SDE} and hence 
    \[
        ( (\hat m^{\varepsilon,\delta},\hat r^{\varepsilon,\delta}),\breve\P_{(\bar X^{\varepsilon,\delta},\bar\xi,\bar s^\delta)} ) 
    \]  
is a two-layer parametrisation of $\breve\P_{(\tilde X^{\varepsilon,\delta},\tilde \xi^{\varepsilon,\delta})}$.
    
    
    \item[\textbf{ Step 4. \emph{Verification.}}]
    In what follows we show that the previously constructed processes are in fact the desired approximations. We need to verify the following properties:
    \begin{enumerate}[label=(\alph*)]
        \item\label{proof lemma a.2 item a} $(\breve\P_{(\tilde X^{\varepsilon,\delta},\tilde \xi^{\varepsilon,\delta})})_{\varepsilon,\delta}\subseteq \c L(t,m)$,
        \item\label{proof lemma a.2 item b} $((\hat m^{\varepsilon,\delta},\hat r^{\varepsilon,\delta}),\breve\P_{(\bar X^{\varepsilon,\delta},\bar\xi,\bar s^\delta)}) \to ((\hat m,\hat r),\breve\P_{(\bar X,\bar\xi,\bar s)})$ as $\delta,\varepsilon\to 0$ (along a suitable subsequence).
    \end{enumerate}

    Property \ref{proof lemma a.2 item a} follows from \eqref{eq lemma a.2 definition xi epsilon,delta} since $\bar s^\delta(\omega), s^\delta(\omega), \bar\xi(\omega)$ and $r^{\varepsilon,\delta}$ are Lipschitz continuous uniform in $\omega\in\breve\Omega$. To verify property \ref{proof lemma a.2 item b} we first note that by \eqref{eq lemma a.2 definition r delta} for $u\in [0,1]$,
    \begin{equation}
    \begin{split}
        |\hat r^{\varepsilon,\delta}_u - \hat r_u| &= \Bigl|\frac{(\hat r^\varepsilon_u - t) + \delta u}{(T-t) + \delta} (T-t) - (\hat r_u - t)\Bigr| \\
        & \leq \frac {|\hat r^\varepsilon_u - \hat r_u| + \delta |u(T-t) - (\hat r_u -t)|} {(T-t)+\delta} 
         \leq \frac 1 {T-t} |\hat r^\varepsilon-\hat r| + \delta,
     \end{split}
    \end{equation}
    which tends to zero uniformly on $[0,1]$ for $\delta,\varepsilon\to 0$. Moreover, in view of \eqref{eq lemma a.2 definition s delta} we have that
    \begin{equation}\label{eq lemma a.2 estimate bar s}
        |\bar s^{\delta}_u(\omega) - \bar s_u(\omega)| = \Bigl|\frac{\bar s_u(\omega) + \delta u}{1 + \delta} - \bar s_u(\omega)\Bigr| = \frac \delta {1+\delta} |\bar s_u(\omega) - u| \leq 2\delta,
    \end{equation}
    Using these results, we see that
    \begin{equation}\label{eq lemma a.2 estimate bar r epsilon,delta bar r}
        |\hat r^{\varepsilon,\delta}_{\bar s^{\delta}_u(\omega)} - \hat r_{\bar s_u(\omega)}| \leq |\hat r^{\varepsilon,\delta}_{\bar s^{\delta}_u(\omega)} - \hat r_{\bar s^{\delta}_u(\omega)}| + |\hat r_{\bar s^{\delta}_u(\omega)} - \hat r_{\bar s_u(\omega)}|,
    \end{equation}
    which again tends to zero uniformly in $(u,\omega)$ as $\delta,\varepsilon\to 0$ since $\hat r$ is Lipschitz.

    \item[\textbf{ Step 4.1. \emph{Convergence of time scale derivatives.}}]
    To establish the convergence of the state-control process we need the following convergence results for derivatives of the time scales that are used to define the dynamics of the process $\hat X^{\varepsilon, \delta}$ and $\bar X^{\varepsilon, \delta}$, respectively. By  construction,
    \begin{equation}\label{eq lemma a.2 convergence dot hat r epsilon,delta dot hat r}
        |\dot{\hat r}^{\varepsilon,\delta}_u - \dot{\hat r}_u| \leq |\dot{\hat r}^{\varepsilon,\delta}_u - \dot{\hat r}^{\varepsilon}_u| + |\dot{\hat r}^{\varepsilon}_u - \dot{\hat r}_u|\to 0\qquad\text{a.e. on $[0,1]$ for }\delta,\varepsilon\to 0.
    \end{equation}
    Similarly,
    \begin{equation}\label{eq lemma a.2 convergence dot bar s delta dot bar s}
        |\dot{\bar s}^{\delta}_u(\omega) - \dot{\bar s}_u(\omega)| \to 0\qquad\text{uniformly in $(u,\omega)$ for }\delta\to 0.
    \end{equation}
    Putting both results together, we obtain
    \begin{equation}
    \begin{split}
        |\dot{\hat r}^{\varepsilon,\delta}_{\bar s^\delta_u(\omega)}\dot{\bar s}^\delta_u(\omega) - \dot{\hat r}_{\bar s_u(\omega)}\dot{\bar s}_u(\omega)|
        &\leq |\dot{\hat r}^{\varepsilon,\delta}_{\bar s^{\delta}_u(\omega)} - \dot{\hat r}^\varepsilon_{\bar s^{\delta}_u(\omega)}| \dot{\bar s}^{\delta}_u(\omega) + |\dot{\hat r}^{\varepsilon}_{\bar s^{\delta}_u(\omega)} - \dot{\hat r}^{\varepsilon}_{\bar s_u(\omega)}| \dot{\bar s}^{\delta}_u(\omega)\\*
        &\qquad+ \dot{\hat r}^{\varepsilon}_{\bar s_u(\omega)} |\dot{\bar s}^{\delta}_u(\omega) - \dot{\bar s}_u(\omega)| + |\dot{\hat r}^\varepsilon_{\bar s_u(\omega)} - \dot{\hat r}_{\bar s_u(\omega)} | \dot{\bar s}_u(\omega).
    \end{split}
    \end{equation}
   
    The first and third term vanishes for $\delta,\varepsilon\to 0$ since $\dot{\bar s}^\delta$ and $\dot{\hat r}^\varepsilon$ are uniformly bounded in $\delta$ and $\varepsilon$ for sufficiently small $\delta$. Furthermore, for every fixed $\varepsilon$, the second term tends to zero for $\delta\to 0$ since $\hat r^\varepsilon$ is smooth. W.l.o.g. we choose a suitable subsequence of $\delta,\varepsilon\to 0$ by sending $\delta\to 0$ sufficiently fast such that the first three terms above vanish. Regarding the last term, we note that as $\varepsilon\to 0$,
    \begin{equation}
        \int_0^1 |\dot{\hat r}^\varepsilon_{\bar s_u(\omega)} - \dot{\hat r}_{\bar s_u(\omega)} | \dot{\bar s}_u(\omega) du = \int_0^1 |\dot{\hat r}^\varepsilon_{\bar s_u(\omega)} - \dot{\hat r}_{\bar s_u(\omega)} | d\bar s_u(\omega)
        = \int_0^1 |\dot{\hat r}^\varepsilon_u - \dot{\hat r}_u | du \to 0,
    \end{equation}
    since $\hat r^\varepsilon$ was constructed as mollification of $\hat r$. Therefore, along a further subsequence,
    \[
    |\dot{\hat r}^\varepsilon_{\bar s_u(\omega)} - \dot{\hat r}_{\bar s_u(\omega)} | \dot{\bar s}_u(\omega) \to 0\qquad\breve\P\otimes du\text{-a.e. on }\breve\Omega\times [0,1],
    \]
    and thus
    \begin{equation}\label{eq lemma a.2 convergence dot bar r epsilon,delta dot bar r}
        \dot{\hat r}^{\varepsilon,\delta}_{\bar s^\delta_u(\omega)}\dot{\bar s}^\delta_u(\omega) \to \dot{\hat r}_{\bar s_u(\omega)}\dot{\bar s}_u(\omega)\qquad\breve\P\otimes du\text{-a.e. on }\breve\Omega\times [0,1].
    \end{equation}

    \item[\textbf{ Step 4.2. \emph{Convergence of the state-control process.}}]
    Let us now turn to the state-control-dynamics. Since $\hat\xi$ is monotone together with \eqref{eq lemma a.2 estimate bar s}, we also get
    \[
        \abs{\hat \xi^{\delta}_u - \hat\xi_u} = \abs{\bar \xi_{s^{\delta}_u(\omega)} - \hat \xi_u}
        = \abs{\hat \xi_{\bar s_{s^{\delta}_u(\omega)}(\omega)} - \hat\xi_u} \leq \abs{\hat\xi_{u + 2\delta} - \hat\xi_{u-2\delta}},\qquad\breve\P\text{-a.s.}
    \]
    In particular, since $[0,1]\ni u\mapsto \hat\xi_u\in L^2(\breve\Omega,\breve{\c F},\breve\P;\R^l)$ is uniformly continuous, we obtain
    \begin{equation}\label{eq lemma a.2 estimate xi hat epsilon,delta xi hat}
        \norm{\hat \xi^{\delta}_u - \hat\xi_u}_{L^2(\breve\P)} \leq \norm{\hat\xi_{u+2\delta} - \hat\xi_{u-2\delta}}_{L^2(\breve\P)} \to 0,\quad{\text{as }\delta\to 0},\qquad\text{uniformly in }u.
    \end{equation}

    We are left with proving the convergence of the state processes $\hat X^{\varepsilon,\delta}$ and $\bar X^{\varepsilon,\delta}(\omega)$. Our main tool will be Gronwall's inequality. Starting with $\hat X^{\varepsilon,\delta}$ we have by \eqref{eq lemma a.2 sde hat X}, \eqref{eq lemma a.2 sde hat X epsilon,delta} for all $u_* \in [0,1]$,
    \[
        \sup_{u\in [0,u_*]} \E^{\breve\P}\big[\abs{\hat X^{\varepsilon,\delta}_u - \hat X_u}^2 \big] \leq 4 (I_1 + I_2 + I_3),
    \]
    where
    \begin{equation}
    \begin{split}
        I_1 &\coloneqq \sup_{u\in [0,u_*]} \E^{\breve\P}\bigg[\bigg|\int_0^u b(\hat r^{\varepsilon,\delta}_v,\hat m^{\varepsilon,\delta}_v,\hat X^{\varepsilon,\delta}_v,\hat\xi^{\delta}_v) d\hat r^{\varepsilon,\delta}_v - \int_0^u b(\hat r_v,\hat m_v,\hat X_v,\hat \xi_v) d\hat r_v\bigg|^2 \bigg],\\
        I_2 &\coloneqq \sup_{u\in [0,u_*]} \E^{\breve\P}\bigg[\bigg| \int_0^u \sigma(\hat r^{\varepsilon,\delta}_v,\hat m^{\varepsilon,\delta}_v,\hat X^{\varepsilon,\delta}_v,\hat \xi^{\delta}_v) \sqrt{\dot{\hat r}^{\varepsilon,\delta}_v} d \hat B^{\delta}_v
        - \int_0^u \sigma(\hat r_v,\hat m_v,\hat X_v,\hat \xi_v) \sqrt{\dot{\hat r}_v} d\hat B_v \bigg|^2 \bigg],\\
        I_3 &\coloneqq \sup_{u\in [0,u_*]} \E^{\breve\P}\bigg[\bigg|\int_0^u \gamma(\hat r^{\varepsilon,\delta}_v) d\hat \xi^{\delta}_v - \int_0^u \gamma(\hat r_v) d\hat \xi_v \bigg|^2\bigg].
    \end{split}
    \end{equation}
    For the ease of reading, we introduce the following short-hand notations for $h\in \{b,\sigma\}$,
    \[
    \hat h_v \coloneqq h(\hat r_v,\hat m_v,\hat X_v,\hat \xi_v),\qquad \hat h^{\varepsilon,\delta}_v \coloneqq h(\hat r^{\varepsilon,\delta}_v, \hat m^{\varepsilon,\delta}_v, \hat X^{\varepsilon,\delta}_v,\hat\xi^{\delta}_v).
    \]
    Using the new notation, we can bound the first term $I_1$ using \eqref{eq lemma a.2 definition r delta} as follows
    \begin{equation}
    \begin{split}
        I_1 &= \sup_{u\in [0,u_*]} \E^{\breve\P}\bigg[\bigg|\int_0^u \hat b^{\varepsilon,\delta}_v d\hat r^{\varepsilon,\delta}_v - \int_0^u \hat b_v d\hat r_v\bigg|^2 \bigg]\\
        &\leq 2 \sup_{u\in [0,u_*]}\E^{\breve\P}\bigg[\bigg|\int_0^u (\hat b^{\varepsilon,\delta}_v - \hat b_v) d\hat r^{\varepsilon,\delta}_v\bigg|^2 \bigg] + 2 \sup_{u\in [0,u_*]}\E^{\breve\P}\bigg[\bigg|\int_0^u \hat b_v (d\hat r^{\varepsilon,\delta}_v - d\hat r_v) \bigg|^2\bigg]\\
        &\leq 2 \E^{\breve\P}\bigg[\int_0^{u_*} |\hat b^{\varepsilon,\delta}_v - \hat b_v|^2 |\dot{\hat r}^{\varepsilon,\delta}_v|^2  dv\bigg] + 2 \E^{\breve\P}\bigg[\int_0^{u_*} |\hat b_v|^2 \abs{\dot{\hat r}^{\varepsilon,\delta}_v - \dot{\hat r}_v}^2 dv\bigg].
    \end{split}
    \end{equation}
    Controlling the first term using the Lipschitz continuity of $b$ is standard and the second term vanishes as $\delta,\varepsilon\to 0$ by dominated convergence using \eqref{eq lemma a.2 convergence dot hat r epsilon,delta dot hat r}.
    
    For the second term $I_2$, we recall that $s_u(\omega)\coloneqq \inf\{v \in [0,1]\mid \bar s_v(\omega) > u\} \land 1$ denotes the right-continuous inverse of $\bar s(\omega)$ and use \eqref{eq lemma a.2 definition bar W} and \eqref{eq lemma a.2 definition hat W,delta},
    \begin{align} 
        I_2 &= \sup_{u\in [0,u_*]} \E^{\breve\P}\bigg[\bigg| \int_0^u \hat\sigma^{\varepsilon,\delta}_v \sqrt{\dot{\hat r}^{\varepsilon,\delta}_v} d \hat B^{\delta}_v - \int_0^u \hat\sigma_v \sqrt{\dot{\hat r}_v} d\hat B_v \bigg|^2 \bigg]\\
        &= \sup_{u\in [0,u_*]} \E^{\breve\P}\bigg[\bigg| \int_0^u \hat\sigma^{\varepsilon,\delta}_v \sqrt{\dot{\hat r}^{\varepsilon,\delta}_v} d \hat B^{\delta}_v - \int_0^u \hat\sigma_v \sqrt{\dot{\hat r}^{\varepsilon,\delta}_v} d\hat B^\delta_v\\*
        &\quad + \int_0^{s^\delta_u(\omega)} \hat\sigma_{\bar s^\delta_v(\omega)} \sqrt{\dot{\hat r}^{\varepsilon,\delta}_{\bar s^\delta_v(\omega)} \dot{\bar s}^\delta_u(\omega)} d\bar B_u - \int_0^u \hat\sigma_v \sqrt{\dot{\hat r}_v} d\hat B_v \bigg|^2 \bigg]\\
        &\leq 4 \sup_{u\in [0,u_*]} \E^{\breve\P}\bigg[\bigg| \int_0^u \hat\sigma^{\varepsilon,\delta}_v \sqrt{\dot{\hat r}^{\varepsilon,\delta}_v} d \hat B^{\delta}_v - \int_0^u \hat\sigma_v \sqrt{\dot{\hat r}^{\varepsilon,\delta}_v} d\hat B^{\delta}_v \bigg|^2 \bigg]\\*
        &\quad + 4\sup_{u\in [0,u_*]} \E^{\breve\P}\bigg[\bigg| \int_0^{\bar s_{s^{\delta}_u(\omega)}(\omega)} \hat\sigma_{\bar s^{\delta}_{s_v(\omega)}(\omega)} \1_{\{\dot{\bar s}_{s_v(\omega)}(\omega) \not= 0\}} \frac {\sqrt{\dot{\hat r}^{\varepsilon,\delta}_{\bar s^{\delta}_{s_v(\omega)}(\omega)} \dot{\bar s}^{\delta}_{s_v(\omega)}(\omega)}} {\sqrt{\dot{\bar s}_{s_v(\omega)}(\omega)}} d\hat B_v\\*
        &\quad - \int_0^u \hat\sigma_v \sqrt{\dot{\hat r}_v} d\hat B_v\bigg|^2 \bigg]+ 4\sup_{u\in [0,u_*]} \E^{\breve\P}\bigg[\bigg|\int_0^{s^{\delta}_u(\omega)} \hat\sigma_{\bar s^{\delta}_v(\omega)} \sqrt{\dot{\hat r}^{\varepsilon,\delta}_{\bar s^{\delta}_v(\omega)} \dot{\bar s}^{\delta}_v(\omega)} \1_{\{\dot{\bar s}_v(\omega) = 0\}} d\breve W_v \bigg|^2 \bigg]\\
        &\leq  4 \E^{\breve\P}\bigg[\int_0^{u_*} \big| \hat\sigma^{\varepsilon,\delta}_v \sqrt{\dot{\hat r}^{\varepsilon,\delta}_v} - \hat\sigma_v \sqrt{\dot{\hat r}^{\varepsilon,\delta}_v} \big|^2 dv \bigg]
        \\*
        &\quad + 4\sup_{u\in [0,u_*]} \E^{\breve\P}\bigg[\int_0^1 \bigg| \1_{[0,\bar s_{s^{\delta}_u(\omega)}(\omega)]}(v) \hat\sigma_{\bar s^{\delta}_{s_v(\omega)}(\omega)} \1_{\{\dot{\bar s}_{s_v(\omega)}(\omega) \not= 0\}} \frac {\sqrt{\dot{\hat r}^{\varepsilon,\delta}_{\bar s^{\delta}_{s_v(\omega)}(\omega)}\dot{\bar s}^{\delta}_{s_v(\omega)}(\omega)}} {\sqrt{\dot{\bar s}_{s_v(\omega)}(\omega)}}\\*
        &\quad- \1_{[0,u]}(v) \hat \sigma_v \sqrt{\dot{\hat r}_v}\bigg|^2 dv \bigg]+ 4 \E^{\breve\P}\bigg[\int_0^1 | \hat\sigma_{\bar s^{\delta}_v(\omega)} |^2 \dot{\hat r}^{\varepsilon,\delta}_{\bar s^{\delta}_v(\omega)} \dot{\bar s}^{\delta}_v(\omega) \1_{\{\dot{\bar s}_v(\omega) = 0\}} dv \bigg].
    \end{align}
    While the first term can again be controlled using standard Lipschitz arguments, we also note that the last term vanishes for $\delta\to 0$ by dominated convergence using \eqref{eq lemma a.2 convergence dot bar s delta dot bar s}. Thus we will now focus on the middle term. Noting that $\dot{\bar s}_{s_v(\omega)}(\omega) \not= 0$, $dv\otimes\breve\P$-a.s., we can rewrite this term as follows
    \begin{align}
        &4\sup_{u\in [0,u_*]} \E^{\breve\P}\bigg[\int_0^1 \bigg| \1_{[0,\bar s_{s^{\delta}_u(\omega)}(\omega)]}(v) \hat\sigma_{\bar s^{\delta}_{s_v(\omega)}(\omega)} \1_{\{\dot{\bar s}_{s_v(\omega)}(\omega) \not= 0\}} \frac {\sqrt{\dot{\hat r}^{\varepsilon,\delta}_{\bar s^{\delta}_{s_v(\omega)}(\omega)} \dot{\bar s}^{\delta}_{s_v(\omega)}(\omega)}} {\sqrt{\dot{\bar s}_{s_v(\omega)}(\omega)}}\\*
        &\qquad - \1_{[0,u]}(v) \hat \sigma_v \sqrt{\dot{\hat r}_v}\bigg|^2 dv \bigg]\\
        &= 4\sup_{u\in [0,u_*]} \E^{\breve\P}\bigg[\int_0^1 \1_{\{\dot{\bar s}_{s_v(\omega)}(\omega) \not= 0\}}\bigg| \1_{[0,\bar s_{s^{\delta}_u(\omega)}(\omega)]}(v) \hat\sigma_{\bar s^{\delta}_{s_v(\omega)}(\omega)} \sqrt{\dot{\hat r}^{\varepsilon,\delta}_{\bar s^{\delta}_{s_v(\omega)}(\omega)} \dot{\bar s}^{\delta}_{s_v(\omega)}(\omega)}\\*
        &\qquad - \1_{[0,u]}(v) \hat \sigma_v \sqrt{\dot{\hat r}_v \dot{\bar s}_{s_v(\omega)}(\omega)}\bigg|^2 ds_v(\omega) \bigg]\\
        &= 4\sup_{u\in [0,u_*]} \E^{\breve\P}\bigg[\int_0^1 \1_{\{\dot{\bar s}_v(\omega) \not= 0\}} \bigg| \1_{[0,s^{\delta}_u(\omega)]}(v) \hat\sigma_{\bar s^{\delta}_v(\omega)} \sqrt{\dot{\hat r}^{\varepsilon,\delta}_{\bar s^{\delta}_v(\omega)} \dot{\bar s}^{\delta}_v(\omega)}\\*
        &\qquad - \1_{[0,s_u(\omega)]}(v) \hat\sigma_{\bar s_v(\omega)} \sqrt{\dot{\hat r}_{\bar s_v(\omega)} \dot{\bar s}_v(\omega)} \bigg|^2 dv \bigg]\\
        &\leq 8 \E^{\breve\P}\bigg[\int_0^1 \1_{\{\dot{\bar s}_v(\omega) \not= 0\}} \bigg| \hat\sigma_{\bar s^{\delta}_v(\omega)} \sqrt{\dot{\hat r}^{\varepsilon,\delta}_{\bar s^\delta_v(\omega)}\dot{\bar s}^\delta_v(\omega)} - \hat\sigma_{\bar s_v(\omega)} \sqrt{\dot{\hat r}_{\bar s_v(\omega)}\dot{\bar s}_v(\omega)} \bigg|^2 dv \bigg]\\*
        &\qquad + 16 \sup_{u\in [0,u_*]} \E^{\breve\P}\bigg[\int_{s^{\delta}_u(\omega)\land s_u(\omega)}^{s^{\delta}_u(\omega)\lor s_u(\omega)} \1_{\{\dot{\bar s}_v(\omega) \not= 0\}} \Big(| \hat\sigma_{\bar s^{\delta}_v(\omega)} |^2 \dot{\hat r}^{\varepsilon,\delta}_{\bar s^{\delta}_v(\omega)} \dot{\bar s}^{\delta}_v(\omega)\\*
        &\qquad+ |\hat \sigma_{\bar s_v(\omega)} |^2 \dot{\hat r}_{\bar s_v(\omega)} \dot{\bar s}_v(\omega) \Big) dv\bigg].
    \end{align}
    For the first term, we note that $\dot{\bar s}_v(\omega) \not= 0$ implies $dv\otimes\breve\P$-a.s. that $\bar s_v(\omega) \in J^c_{[0,1]}(\hat\xi(\omega))$, which implies due to continuity of $\sigma$ that $dv\otimes\breve\P$-a.s, $\hat\sigma$ is also continuous in $\bar s_v(\omega)$. This allows us to use dominated convergence to show that the first term vanishes for $\delta,\varepsilon\to 0$ using Lipschitz continuity of $\sigma$ and \eqref{eq lemma a.2 convergence dot bar r epsilon,delta dot bar r}. Regarding the second term, we note that since $\hat r$ and $\hat r^{\varepsilon,\delta}$ are Lipschitz continuous uniform in $\varepsilon,\delta$ for all sufficiently small $\delta$,
    \begin{align}
        &\sup_{u\in [0,u_*]} \E^{\breve\P}\bigg[\int_{s^{\delta}_u(\omega)\land s_u(\omega)}^{s^{\delta}_u(\omega)\lor s_u(\omega)} \1_{\{\dot{\bar s}_v(\omega) \not= 0\}} \Big(| \hat\sigma_{\bar s^{\delta}_v(\omega)} |^2 \dot{\hat r}^{\varepsilon,\delta}_{\bar s^{\delta}_v(\omega)} \dot{\bar s}^{\delta}_v(\omega) + |\hat\sigma_{\bar s_v(\omega)} |^2 \dot{\hat r}_{\bar s_v(\omega)} \dot{\bar s}_v(\omega) \Big) dv\bigg]\\
        &\leq C \E^{\breve\P}\bigg[\sup_{v \in [0,1]} \abs{\hat\sigma_v}^2 \sup_{u\in [0,u_*]} \int_{s^{\delta}_u(\omega)\land s_u(\omega)}^{s^{\delta}_u(\omega)\lor s_u(\omega)} (\dot{\bar s}^{\delta}_v(\omega) + \dot{\bar s}_v(\omega)) dv\bigg]\\
        &\leq C \E^{\breve\P}\Big[\sup_{v \in [0,1]} \abs{\hat\sigma_v}^2 \sup_{u\in [0,u_*]} \abs{\bar s^{\delta}_{s_u(\omega)}(\omega) - \bar s_{s^{\delta}_u(\omega)}(\omega)}\Big]\\
        &= C \E^{\breve\P}\Big[\sup_{v \in [0,1]} \abs{\hat\sigma_v}^2 \sup_{u\in [0,u_*]} \bigl\vert\bar s^{\delta}_{s_u(\omega)}(\omega) - \bar s_{s_u(\omega)}(\omega) + \bar s^{\delta}_{s^{\delta}_u(\omega)}(\omega) - \bar s_{s^{\delta}_u(\omega)}(\omega)\bigr\vert\Big]\\
        &\leq 2 C \E^{\breve\P}\Big[\sup_{v\in [0,1]} \abs{\hat\sigma_v}^2 \Big] \sup_{u \in [0,1]}\esssup_{\omega\in\bar\Omega} |\bar s^{\delta}_u(\omega) - \bar s_u(\omega)|.
    \end{align}
    Therefore this term also vanishes as $\delta\to 0$.
    
    Now we turn to the last term $I_3$ and note that
    \begin{align}
        I_3 &= \sup_{u\in [0,u_*]} \E^{\breve\P}\bigg[\bigg|\int_0^{s^{\delta}_u(\omega)} \gamma(\hat r^{\varepsilon,\delta}_{\bar s^\delta_v(\omega)}) d\bar\xi_v - \int_0^{s_u(\omega)} \gamma(\hat r_{\bar s_v(\omega)}) d\bar\xi_v\bigg|^2 \bigg]\\
        &\leq 2 \sup_{u\in [0,u_*]} \E^{\breve\P}\bigg[\bigg|\int_0^{s^{\delta}_u(\omega)} (\gamma(\hat r^{\varepsilon,\delta}_{\bar s^\delta_v(\omega)}) - \gamma(\hat r_{\bar s_v(\omega)})) d\bar\xi_v\bigg|^2 \bigg]\\*
        &\qquad+ \sup_{u\in [0,u_*]} \E^{\breve\P}\bigg[\bigg|\int_{s^{\delta}_u(\omega)}^{s_u(\omega)} \gamma(\hat r_{\bar s_v(\omega)}) d\bar\xi_v \bigg|^2\bigg]\\
        &\leq 2\E^{\breve\P}\big[\abs{\bar\xi_1 - \bar\xi_0}^2 \big] \sup_{u \in [0,1]}\esssup_{\omega\in\bar\Omega} |\gamma(\hat r^{\varepsilon,\delta}_{\bar s^\delta_u(\omega)}) - \gamma(\hat r_{\bar s_u(\omega)})|^2\\*
        &\qquad+ \norm{\gamma}_{{\infty}}^2 \sup_{u\in [0,u_*]} \E^{\breve\P}\big[\abs{\bar \xi_{s_u(\omega)} - \bar \xi_{s^{\delta}_u(\omega)}}^2 \big]\\
        &\leq 2\E^{\breve\P}\big[\abs{\bar\xi_1 - \bar\xi_0}^2 \big] \sup_{u \in [0,1]}\esssup_{\omega\in\bar\Omega} |\gamma(\hat r^{\varepsilon,\delta}_{\bar s^\delta_u(\omega)}) - \gamma(\hat r_{\bar s_u(\omega)})|^2 + \norm{\gamma}_{{\infty}}^2 \sup_{u\in [0,1]} \E^{\breve\P}\big[\abs{\hat \xi_u - \hat \xi^{\delta}_u}^2 \big].
    \end{align}
    The second part goes to 0 for $\delta,\varepsilon\to 0$ by \eqref{eq lemma a.2 estimate xi hat epsilon,delta xi hat} and the first part vanishes for $\delta\to 0$ since $\gamma$ is uniformly continuous on $[0,1]$ together with \eqref{eq lemma a.2 estimate bar r epsilon,delta bar r}.
    
    Combining the estimates for $I_1$, $I_2$ and $I_3$, we see from an application of Gronwall's Lemma that
    \[
    \sup_{u \in [0,1]}\c W_2(\hat m_u^{\varepsilon,\delta},\hat m_u) \leq \sup_{u\in [0,1]} \E^{\breve\P}\big[\abs{\hat X^{\varepsilon,\delta}_u - \hat X_u}^2\big] + \sup_{u\in [0,1]} \E^{\breve\P}\big[\abs{\hat \xi^\delta_u - \hat \xi_u}^2\big] \to 0\qquad\text{as }\delta,\varepsilon\to 0.
    \]
    
    
    Turning to $\bar X^{\varepsilon,\delta}$, we again first introduce the following short-hand notations for $h\in \{b,\sigma\}$,
    \[
    \bar h_v \coloneqq h(\hat r_{\bar s_v(\omega)},\hat m_{\bar s_v(\omega)},\bar X_v,\bar \xi_v),\qquad \bar h^{\varepsilon,\delta}_v \coloneqq h(\hat r^{\varepsilon,\delta}_{\bar s^\delta_v(\omega)}, \hat m^{\varepsilon,\delta}_{\bar s^\delta_v(\omega)}, \bar X^{\varepsilon,\delta}_v,\bar\xi_v).
    \]
    Now we observe similar to before, by \eqref{eq lemma a.2 sde bar X}, \eqref{eq lemma a.2 sde bar X epsilon,delta} that for all $u_*\in [0,1]$,
    \[
        \E^{\breve\P}\Big[\sup_{u\in [0,u_*]} \abs{\bar X^{\varepsilon,\delta}_u - \bar X_u}^2 \Big] \leq 4(J_1 + J_2 + J_3),
    \]
    where
    \begin{equation}
    \begin{split}
        J_1 &\coloneqq \E^{\breve\P}\bigg[\sup_{u\in [0,u_*]} \bigg|\int_0^u \bar b^{\varepsilon,\delta}_v d\hat r^{\varepsilon,\delta}_{\bar s^\delta_v(\omega)}
        - \int_0^u \bar b_v d\hat r_{\bar s_v(\omega)} \bigg|^2 \bigg],\\
        J_2 &\coloneqq \E^{\breve\P}\bigg[\sup_{u\in [0,u_*]} \bigg|\int_0^u \bar\sigma^{\varepsilon,\delta}_v \sqrt{\dot{\hat r}^{\varepsilon,\delta}_{\bar s^\delta_v(\omega)}\dot{\bar s}^\delta_v(\omega)} d\bar B_v
        - \int_0^u \bar\sigma_v \sqrt{\dot{\hat r}_{\bar s_v(\omega)}\dot{\bar s}_v(\omega)} d\bar B_v \bigg|^2 \bigg],\\
        J_3 &\coloneqq \E^{\breve\P}\bigg[\sup_{u\in [0,u_*]} \bigg|\int_0^u \gamma(\hat r^{\varepsilon,\delta}_{\bar s^\delta_v(\omega)}) d\bar \xi_v - \int_0^u \gamma(\hat r_{\bar s_v(\omega)}) d\bar\xi_v \bigg|^2 \bigg].
    \end{split}
    \end{equation}
    
    We can bound the first term $J_1$ as follows
    \begin{equation}
    \begin{split}
        J_1 &\leq 2\E^{\breve\P}\bigg[\sup_{u\in [0,u_*]} \bigg|\int_0^u \bar b^{\varepsilon,\delta}_v - \bar b_v d\hat r^{\varepsilon,\delta}_{\bar s^\delta_v(\omega)} \bigg|^2 \bigg]
        + 2 \E^{\breve\P}\bigg[\sup_{u\in [0,u_*]} \bigg|\int_0^u \bar b_v (d\hat r^{\varepsilon,\delta}_{\bar s^\delta_v(\omega)} - d\hat r_{\bar s_v(\omega)}) \bigg| ^2 \bigg]\\
        &\leq 2 \E^{\breve\P}\bigg[\int_0^{u_*} |\bar b^{\varepsilon,\delta}_v - \bar b_v|^2 |\dot{\hat r}^{\varepsilon,\delta}_{\bar s^\delta_v(\omega)}\dot{\bar s}^\delta_v(\omega)|^2 dv \bigg]\\
        &\qquad+ 2 \E^{\breve\P}\bigg[\int_0^{u_*} \abs{\bar b_v}^2 \abs{\dot{\hat r}^{\varepsilon,\delta}_{\bar s^\delta_v(\omega)}\dot{\bar s}^\delta_v(\omega) - \dot{\hat r}_{\bar s_v(\omega)}\dot{\bar s}_v(\omega)}^2 dv \bigg].
    \end{split}
    \end{equation}
    Similar to $I_1$, we can bound the first term using standard arguments since $b$ is Lipschitz and $\dot{\hat r}^{\varepsilon,\delta}_{\bar s^\delta} \dot{\bar s}^\delta$ is uniformly bounded for all sufficiently small $\delta$. Further the second term vanishes as $\delta,\varepsilon\to 0$ by dominated convergence using \eqref{eq lemma a.2 convergence dot bar r epsilon,delta dot bar r}.
    
    Regarding the second term $J_2$ we note that due to Doob's martingale inequality
    \begin{align}
        J_2 &\leq 4 \E^{\breve\P}\bigg[\bigg|\int_0^{u_*} \big(\bar\sigma^{\varepsilon,\delta}_v \sqrt{\dot{\hat r}^{\varepsilon,\delta}_{\bar s^\delta_v(\omega)}\dot{\bar s}^\delta_v(\omega)} - \bar\sigma_v \sqrt{\dot{\hat r}_{\bar s_v(\omega)}\dot{\bar s}_v(\omega)}\big) d\bar B_v \bigg|^2 \bigg]\\
        &= 4\E^{\breve\P}\bigg[\int_0^{u_*} \big|\bar\sigma^{\varepsilon,\delta}_v \sqrt{\dot{\hat r}^{\varepsilon,\delta}_{\bar s^\delta_v(\omega)}\dot{\bar s}^\delta_v(\omega)} - \bar\sigma_v \sqrt{\dot{\hat r}_{\bar s_v(\omega)}\dot{\bar s}_v(\omega)}\big|^2 dv\bigg]\\
        &\leq 8\E^{\breve\P}\bigg[\int_0^{u_*} |\bar\sigma^{\varepsilon,\delta}_v - \bar\sigma_v |^2 \dot{\hat r}^{\varepsilon,\delta}_{\bar s^\delta_v(\omega)}\dot{\bar s}^\delta_v(\omega) dv\bigg]\\*
        &\qquad+ 8\E^{\breve\P}\bigg[\int_0^{u_*} \abs{\bar\sigma_v}^2 \big|\sqrt{\dot{\hat r}_{\bar s_v(\omega)}\dot{\bar s}_v(\omega)} - \sqrt{\dot{\hat r}^{\varepsilon,\delta}_{\bar s^\delta_v(\omega)}\dot{\bar s}^\delta_v(\omega)}\big|^2 dv \bigg].
    \end{align}
    Again controlling the first term is standard for all sufficiently small $\delta$ and the second term vanishes for $\delta,\varepsilon\to 0$ by dominated convergence using \eqref{eq lemma a.2 convergence dot bar r epsilon,delta dot bar r}.
    
    We can estimate the last term  $J_3$ as follows
    \begin{equation}
    \begin{split}
        J_3 &= \E^{\breve\P}\bigg[\sup_{u \in [0,u_*]} \bigg(\int_0^u \gamma(\hat r^{\varepsilon,\delta}_{\bar s^\delta_v(\omega)}) - \gamma(\hat r_{\bar s_v(\omega)}) d\bar \xi_v \bigg)^2 \bigg]\\
        &\leq \E^{\breve\P}\big[\abs{\bar \xi_1 - \bar \xi_0}^2 \big] \sup_{u \in [0,u_*]}\esssup_{\omega\in\bar\Omega} |\gamma(\hat r^{\varepsilon,\delta}_{\bar s^\delta_u(\omega)}) - \gamma(\hat r_{\bar s_u(\omega)}) |^2,
    \end{split}
    \end{equation}
    which goes to 0 for $\delta\to 0$ by \eqref{eq lemma a.2 estimate bar r epsilon,delta bar r} since $\gamma$ is uniformly continuous on $[0,1]$.
    
    Finally, by combining the previous estimates and applying Gronwall's Lemma, we obtain
    \[
    \E^{\breve\P}\Big[\sup_{u\in [0,1]} \abs{\bar X^{\varepsilon,\delta}_u - \bar X_u}^2 \Big] \to 0\qquad\text{as }\delta\to 0.
    \]
    Together with our previous results, this yields, at least along a subsequence,
    \[
    \breve\P_{(\bar X^{\varepsilon,\delta},\bar\xi,\bar s^\delta)} \to \breve\P_{(\bar X,\bar\xi,\bar s)} = \bar\P\qquad\text{in }\c P_2(C([0,1];\R^d\times\R^l\times [0,1])).
    \]
    \end{enumerate}
\end{proof}

\section{{The reward functional}} \label{section reward functional}

{Building upon the two-layer parametrisation framework developed in Section \ref{section parametrisations}, this section derives a continuous reward functional $J_2$ on the space of parametrisations. We then establish the equivalence between the original definition of the reward functional $J$ in \eqref{eq reward function general definition} and the supremum of $J_2$ over all possible parametrisations. Finally, we use this equivalence to obtain a more explicit representation of $J$ in terms of minimal jump costs, extending the intuition from the non-mean-field case discussed in Section \ref{subsection motivation parametrisations}.}

\subsection{The reward functional for two-layer parametrisations}\label{subsection reward functional for parametrisations}
We now introduce a reward function on the set of two-layer parametrisations. This function turns out to be continuous. Subsequently, we derive an alternative representation of our original reward functional in the spirit of \eqref{eq reward functional motivation supremum parametrisation}.

\begin{definition}\label{definition reward functional for parametrisations}
    We define the reward of a given two-layer parametrisation $((\hat m,\hat r),\bar \P)$ as
\begin{equation}
\begin{split}
    &J_2((\hat m,\hat r),\bar\P)\\
    &\coloneqq \E^{\bar\P}\bigg[\int_0^1 f(\hat r_{\bar s_u},\hat m_{\bar s_u},\bar X_u,\bar\xi_u) d\hat r_{\bar s_u} + g(\hat m_{\bar s_1},\bar X_1,\bar\xi_1) - \int_0^1 c(\hat r_{\bar s_u},\hat m_{\bar s_u},\bar X_u,\bar \xi_u)d\bar\xi_u \bigg].
\end{split}
\end{equation}
\end{definition}

It follows from the above definition that whenever $\P\in \c C(t,m)$ is a continuous control and $((\hat m,\hat r),\bar\P)$ a two-layer parametrisation of $\P$, then $J_{\c C}(t,m,\P) = J_2((\hat m,\hat r),\bar\P)$ since the above definition is invariant under reparametrisation of the time scale.

\begin{lemma}\label{lemma reward function continuous}
    The reward functional $J_2$ defined in Definition \ref{definition reward functional for parametrisations} is continuous with respect to convergence of two-layer parametrisations.
\end{lemma}

\begin{proof}
    Let $((\hat m^n,\hat r^n),\bar\P^n)_n$ and $((\hat m,\hat r),\bar\P))$ be two-layer parametrisations such that
    \[
    ((\hat m^n,\hat r^n),\bar\P^n)\to((\hat m,\hat r),\bar\P).
    \]
    Then, since $\bar\P^n\to\bar\P$ in $\c P_2$, we can apply Skorokhod's representation theorem to obtain processes $(\bar X^n,\bar\xi^n,\bar s^n)$ and $(\bar X,\bar\xi,\bar s)$ on a common probability space $(\check\Omega,\check{\c F},\check\P)$ such that
    \begin{equation}\label{eq proposition 3.5 final parametrisation convergence}
    (\bar X^n,\bar\xi^n,\bar s^n) \to (\bar X,\bar\xi,\bar s)\qquad\check\P\text{-a.s. and in }L^2.
    \end{equation}
    Due to assumption \ref{assumption g continuous}, this implies that
    \[
    \E^{\bar\P^n}[g(\hat m^n_{\bar s_1},\bar X_1,\bar\xi_1)] = \E^{\check\P}[g(\hat m^n_{\bar s^n_1},\bar X^n_1,\bar\xi^n_1)] \to \E^{\check\P}[g(\hat m_{\bar s_1},\bar X_1,\bar\xi_1)] = \E^{\bar\P}[g(\hat m_{\bar s_1},\bar X_1,\bar\xi_1)].
    \]
    To establish the convergence of the jump costs
    {
    \[
    \E^{\bar\P^n}\bigg[\int_0^1 c(\hat r_{\bar s_u},\hat m^n_{\bar s_u},\bar X_u,\bar \xi_u) d\bar \xi_u\bigg] \to  \E^{\bar\P}\bigg[\int_0^1 c(\hat r_{\bar s_u},\hat m_{\bar s_u},\bar X_u,\bar \xi_u) d\bar \xi_u\bigg],
    \]
    or, equivalently,
    \[
    \E^{\check\P}\bigg[\int_0^1 c(\hat r^n_{\bar s^n_u},\hat m^n_{\bar s^n_u},\bar X^n_u,\bar \xi^n_u) d\bar \xi^n_u\bigg]
    \to \E^{\check\P}\bigg[\int_0^1 c(\hat r_{\bar s_u},\hat m_{\bar s_u},\bar X_u,\bar \xi_u) d\bar \xi_u\bigg],
    \]
    }
    in view of \ref{assumption f,c continuous} and \ref{assumption c linear growth} it is enough to prove that for $\check\P$-almost all $\omega\in\check\Omega$
    \[
    \int_0^1 c(\hat r^n_{\bar s^n_u(\omega)},\hat m^n_{\bar s^n_u(\omega)},\bar X^n_u(\omega),\bar \xi^n_u(\omega)) d\bar \xi^n_u(\omega) \to \int_0^1 c(\hat r_{\bar s_u(\omega)},\hat m_{\bar s_u(\omega)},\bar X_u(\omega),\bar \xi_u(\omega)) d\bar \xi_u(\omega).
    \]
    To this end, we note that
    \[
        \int_0^1 c(\hat r_{\bar s_u(\omega)},\hat m_{\bar s_u(\omega)},\bar X_u(\omega),\bar \xi_u(\omega)) d\bar \xi^n_u(\omega) \to \int_0^1 c(\hat r_{\bar s_u(\omega)},\hat m_{\bar s_u(\omega)},\bar X_u(\omega),\bar \xi_u(\omega)) d\bar \xi_u(\omega)
    \]
    holds by the Portmanteau theorem. At the same time,
    \begin{equation}
    \begin{split}
        &\abs*{\int_0^1 c(\hat r^n_{\bar s^n_u(\omega)},\hat m^n_{\bar s^n_u(\omega)},\bar X^n_u(\omega),\bar \xi^n_u(\omega)) d\bar \xi^n_u(\omega) - \int_0^1 c(\hat r_{\bar s_u(\omega)},\hat m_{\bar s_u(\omega)},\bar X_u(\omega),\bar \xi_u(\omega)) d\bar \xi^n_u(\omega)}\\
        &\leq \int_0^1 \abs*{c(\hat r^n_{\bar s^n_u(\omega)},\hat m^n_{\bar s^n_u(\omega)},\bar X^n_u(\omega),\bar \xi^n_u(\omega)) - c(\hat r_{\bar s_u(\omega)},\hat m_{\bar s_u(\omega)},\bar X_u(\omega),\bar \xi_u(\omega))} d\bar \xi^n_u(\omega)\\
        &\leq \abs{\bar \xi^n_1 - \bar \xi^n_{0-}} \sup_{u\in [0,1]} \abs*{c(\hat r^n_{\bar s^n_u(\omega)},\hat m^n_{\bar s^n_u(\omega)},\bar X^n_u(\omega),\bar \xi^n_u(\omega)) - c(\hat r_{\bar s_u(\omega)},\hat m_{\bar s_u(\omega)},\bar X_u(\omega),\bar \xi_u(\omega))}.
    \end{split}
    \end{equation}
    Since the first factor is uniformly bounded the desired result follows from the fact that the second term vanishes since $c$ is uniformly continuous, due to assumption \ref{assumption f,c continuous}, together with the convergence \eqref{eq proposition 3.5 final parametrisation convergence}.
    
    Finally, a similar calculation using \ref{assumption f,g quadratic growth} in place of \ref{assumption c linear growth} shows that also
    \[
    \E^{\bar\P^n}\bigg[\int_0^1 f(\hat r_{\bar s_u},\hat m^n_{\bar s_u},\bar X_u,\bar \xi_u) d\hat r_{\bar s_u}\bigg] \to \E^{\bar\P}\bigg[\int_0^1 f(\hat r_{\bar s_u},\hat m_{\bar s_u},\bar X_u,\bar \xi_u) d\hat r_{\bar s_u}\bigg].
    \]
\end{proof}

The next theorem establishes an alternative representation of our reward function in terms of parametrisations. It shows that it is enough to approximate singular controls by continuous controls of bounded velocity and that the reward functional allows for a representation as a supremum over all parametrisations as motivated in \eqref{eq reward functional motivation supremum parametrisation}.

\begin{theorem}\label{theorem representation of J with parametrisations}
    Let $\P\in \c P(t,m)$ be a singular control, {then}
    \[
        J(t,m,\P) = {\sup_{\substack{\P^n\to \P\text{ in }\c P_2(D^0)\\(\P^n)_n\subseteq\c L(t,m)}} \limsup_{n\to\infty}} \ J_{{C}}(t,m,\P^n)
        = \sup_{\substack{((\hat m,\hat r),\bar\P)\text{ is a two-layer}\\\text{parametrisation of }\P\text{ on }[t,T]}} J_2((\hat m,\hat r),\bar\P).
    \]
    {In particular,
    \[
    V(t,m) = \sup_{\P\in\c P(t,m)} J(t,m,\P) = \sup_{\P\in\c L(t,m)} J(t,m,\P).
    \]
    }
\end{theorem}

\begin{proof}
    \begin{enumerate}[wide]
        \item \label{proof theorem representation of J with parametrisations part 1} Let $(\P^n)_n\subseteq\c C(t,m)$ be a sequence of continuous controls that converges to $\P\in\c P(t,m)$. By Theorem \ref{theorem construct limit parametrisation from convergent sequence} there exists a subsequence $(\P^{n_k})_{n_k}$ with two-layer parametrisations $((\hat m^{n_k},\hat r^{n_k}),\bar\P^{n_k})$ and a two-layer parametrisation $((\hat m,\hat r),\bar\P)$ of $\P$ such that
        \[
        ((\hat m^{n_k},\hat r^{n_k}),\bar\P^{n_k}) \to ((\hat m,\hat r),\bar\P).
        \]
        By the previous Lemma \ref{lemma reward function continuous}, this implies that
        \[
        J_2((\hat m^{n_k},\hat r^{n_k}),\bar\P^{n_k}) \to J_2((\hat m,\hat r),\bar\P).
        \]
        Now, since $(\P^{n_k})_k$ are continuous controls, we obtain from the definition that
        \[
        J_{{C}}(t,m,\P^{n_k}) = J_2((\hat m^{n_k},\hat r^{n_k}),\bar\P^{n_k}).
        \]
        Since the sequence $(\P^n)_n$ was arbitrary, we get
        \[
        J(t,m,\P) = {\sup_{\substack{\P^n\to \P\text{ in }\c P_2(D^0)\\(\P^n)_n\subseteq\c C(t,m)}} \limsup_{n\to\infty}} \ J_{{C}}(t,m,\P^n)
        \leq \sup_{\substack{((\hat m,\hat r),\bar\P)\text{ is a two-layer}\\\text{parametrisation of }\P\text{ on }[t,T]}} J_2((\hat m,\hat r),\bar\P).
        \]
        
        \item \label{proof theorem representation of J with parametrisations part 2} By {Lemma \ref{lemma existence two layer parametrisations}} there exists a two-layer parametrisation $((\hat m,\hat r),\bar\P)$ of $\P\in \c P(t,m)$. By Theorem \ref{theorem two layer parametrisation bounded velocity approximation}, there exists a sequence of continuous bounded velocity controls $(\P^n)_n\subseteq \c L(t,m)$ with corresponding two-layer parametrisations $((\hat m^n,\hat r^n),\bar\P^n)$ such that
        \[
        ((\hat m^n,\hat r^n),\bar\P^n) \to ((\hat m,\hat r),\bar\P).
        \]
        By the previous Lemma \ref{lemma reward function continuous}, this implies that
        \[
        J_2((\hat m^n,\hat r^n),\bar\P^n) \to J_2((\hat m,\hat r),\bar\P).
        \]
        Now, since $(\P^n)_n$ are continuous controls, we obtain from the definition that
        \[
        J_{{C}}(t,m,\P^n) = J_2((\hat m^n,\hat r^n),\bar\P^n),
        \]
        {which implies that
        \[
        J_2((\hat m,\hat r),\bar\P) = \lim_{n\to\infty} J_2((\hat m^n,\hat r^n),\bar\P^n) = \lim_{n\to\infty} J_{{C}}(t,m,\P^n).
        \]
        }
        Since the two-layer parametrisation $((\hat m,\hat r),\bar\P)$ was arbitrary, this implies
        \begin{equation}
        \begin{split}
        \sup_{\substack{((\hat m,\hat r),\bar\P)\text{ is a two-layer}\\\text{parametrisation of }\P\text{ on }[t,T]}} J_2((\hat m,\hat r),\bar\P)
        &\leq {\sup_{\substack{\P^n\to \P\text{ in }\c P_2(D^0)\\(\P^n)_n\subseteq\c L(t,m)}} \limsup_{n\to\infty}} \ J_{{C}}(t,m,\P^n)\\
        &\leq  { \sup_{\substack{\P^n\to \P\text{ in }\c P_2(D^0)\\(\P^n)_n\subseteq\c C(t,m)}} \limsup_{n\to\infty} \ J_{{C}}(t,m,\P^n) =} J(t,m,\P).
        \end{split}
        \end{equation}
        \item {By Parts \eqref{proof theorem representation of J with parametrisations part 1} and \eqref{proof theorem representation of J with parametrisations part 2}, we have for every admissible control $\P\in\c P(t,m)$,
        \[
        J(t,m,\P) = \sup_{\substack{\P^n\to \P\text{ in }\c P_2(D^0)\\(\P^n)_n\subseteq\c L(t,m)}} \limsup_{n\to\infty} J(t,m,\P^n)  \leq \sup_{\P\in\c L(t,m)} J(t,m,\P).
        \]
        Since $\c L(t,m)\subseteq\c P(t,m)$, this implies that
        \[
        V(t,m) = \sup_{\P\in\c P(t,m)} J(t,m,\P) = \sup_{\P\in\c L(t,m)} J(t,m,\P).
        \]
        }
    \end{enumerate}
\end{proof}

\subsection{Explicit representation of the reward functional}\label{subsection reward functional jump representation}

The representation of the reward function in terms of parametrisations {derived in Section \ref{subsection reward functional for parametrisations}} allows us to {now} obtain a more explicit representation in terms of minimal jump costs akin to the one dimensional case.

When considering different approximating sequences of a singular control $\P \in \c P(t,m)$, we are essentially thinking about different ways of interpolating discontinuities of the state-control process under $\P$. Within our mean-field setting there are two kinds of discontinuities of the control process $\xi$ (and thus of the state-control process $(X,\xi)$) that require different treatment.

\subsubsection{Discontinuities of the first kind}

The first kind of discontinuity occurs if $\xi_{u-}(\omega) \not= \xi_u(\omega)$ while $m_{u-} = m_u$. These discontinuities materialise only on a pathwise (or particlewise, adopting the point of view of e.g.\@ \cite{talbi2021dynamic}) level, {meaning they do not affect the overall measure flow. Thus, it is} natural to approximate discontinuities of the {first} kind {independently of other particles by keeping the measure flow fixed.}
 
 In the one-dimensional case $\xi \in \R$ it is thus natural to follow  \cite{min_singular_1987,zhu_generalized_1992,taksar_infinite-dimensional_1997,dufour_singular_2004,de2018stochastic} and to consider the following jump cost (resulting from linear interpolation):
\begin{equation}
\begin{split}
&C_{D^0}\big(u,m_u,X_{u-}(\omega),\xi_{u-}(\omega),\xi_u(\omega)\big)\\
&=  \int_0^{\xi_u(\omega)-\xi_{u-}(\omega)} c\big(u,m_u,X_{u-}(\omega) + \gamma(u)\zeta,\xi_{u-}(\omega)+\zeta\big) d\zeta.
\end{split}
\end{equation}

In the multi-dimensional case there might be multiple paths interpolating such a jump. We are interested in minimal jump costs.

\begin{definition}\label{definition c xi}
    Let $(t,m) \in [0,T]\times \c P_2$, $x\in \R^d$ and $\xi\leq \xi'$ component-wise for $\xi,\xi'\in\R^l$. Then we define
    \[
    C_{D^0}(t,m,x,\xi,\xi') \coloneqq \inf_{\zeta\in\Xi(\xi,\xi')} \int_0^1 c\big(t,m,x+\gamma(t)(\zeta_\lambda-\xi), \zeta_\lambda\big) d\zeta_\lambda,
    \]
    where {the integral is interpreted as the sum of component-wise Riemann-Stieltjes integrals and} 
    $\Xi(\xi,\xi')\subseteq C([0,1];\R^l)$ denotes the set of all continuous, non-decreasing paths
    \[
    \zeta: [0,1] \ni \lambda \mapsto \zeta_\lambda \in \R^l,\qquad \zeta_0 = \xi\text{ and }\zeta_1 = \xi'.
    \]
\end{definition}

The function $C_{D^0}$ yields the most cost efficient way of interpolating discontinuities on a pathwise level. To be able to incorporate the minimal jump costs into our two-layer parametrisations we require this function to be measurable. In what follows we establish the desired measurability and prove that the infimum is attained. In particular, we can choose minimising interpolating paths in a measurable way. We start with the following lemma.

\begin{lemma}\label{lemma measurable selection omega-wise jumps}
    We denote the domain of $C_{D^0}$ by
    \[
    \c M \coloneqq \bigl\{(t,m,x,\xi,\xi') \in [0,T]\times \c P_2\times\R^d\times\R^l\times\R^l \bigm\vert \xi \leq \xi'\bigr\}.
    \]
    Then there exists a measurable selection
    \[
    \pi:\c M \to C([0,1];\R^l)
    \]
    with $\pi(t,m,x,\xi,\xi') \in \Xi(\xi,\xi')$, which minimises the jump costs in Definition \ref{definition c xi}. That is, 
    \begin{equation}
    \begin{split}
    &C_{D^0}(t,m,x,\xi,\xi')\\
    &= \int_0^1 c\big(t,m,x + \gamma(t)(\pi(t,m,x,\xi,\xi')_\lambda - \xi), \pi(t,m,x,\xi,\xi')_\lambda\big) d\pi(t,m,x,\xi,\xi')_\lambda,
    \end{split}
    \end{equation}
    for all $(t,m,x,\xi,\xi') \in \c M$. In particular, $C_{D^0}$ is measurable.
\end{lemma}

\begin{proof}
    For this proof it will be convenient to equip $\R^l$ with the $\ell_1$-norm. This does not change the statement since the generated topology and $\sigma$-algebra do not change. 
        
    The key idea is that given $\xi\leq \xi'$, instead of minimising over the non-compact set $\Xi(\xi,\xi')$ in the definition of $C_{D^0}$, it is sufficient to minimise over a compact subset instead. To this end, we first introduce the subset of non-decreasing paths ``of constant speed'' (constant $L^1$-norm along the path) from $\xi$ to $\xi'$
    \begin{equation*}
    \begin{split}
    \Gamma(\xi,\xi') & ~ \coloneqq \bigl\{ h \in \Xi(\xi,\xi') \bigm\vert 
    |h_\lambda {-\xi}|_1 = \lambda |\xi' - \xi|_1 
    \text{ for all }\lambda\in [0,1] \bigr\} \\ & ~ \subseteq \Xi(\xi,\xi') \\ &~ \subseteq C([0,1];\R^l).
    \end{split}     
    \end{equation*}
    The set $\Gamma(\xi,\xi')$ is compact, due to the Theorem by Arzelà-Ascoli, since for all $\lambda,\rho\in [0,1]$,
    \[
    {\sup_{\lambda\in [0,1]} |h_\lambda|_1} \leq |\xi|_1 + |\xi'|_1 \quad \mbox{and} \quad |h_\lambda-h_\rho|_1 = \big| |h_\lambda-h_0|_1 - |h_\rho-h_0|_1 \big| = |\lambda-\rho| |\xi' - \xi|_1. 
    \]
    We can reparametrise every curve $h\in \Xi(\xi,\xi')$ to be in $\Gamma(\xi,\xi')$ by {defining a function $\phi:[0,1]\to [0,1]$ by $\phi(\lambda) \coloneqq \frac{|h_\lambda-\xi|_1}{|\xi-\xi'|_1}$} and then considering the path ${\bar h=h \circ \phi^{-1}}$. 
    {Since $$|\bar h_\lambda -\xi|_1 = |h_{\phi^{-1}(\lambda)}-\xi|_1 = \phi(\phi^{-1}(\lambda)) |\xi'-\xi|_1 = \lambda |\xi'-\xi|_1$$} 
    {we see that $\bar h \in \Gamma(\xi,\xi')$.} For such a reparametrisations it holds that
    \[
    \int_0^1 c(t,m,x+{\gamma(t)(h_\lambda-\xi)},h_\lambda) dh_\lambda = \int_0^1 c(t,m,x+{\gamma(t)({\bar h}_\lambda-\xi)},{\bar h}_\lambda) d {\bar h}_\lambda,
    \]
    and hence it follows that
    \[
    C_{D^0}(t,m,x,\xi,\xi') = {\inf}_{h\in\Gamma(\xi,\xi')} \int_0^1 c(t,m,x+{\gamma(t)(h_\lambda-\xi)},h_\lambda) dh_\lambda,
    \]
    {where we now take the infimum over the compact set $\Gamma(\xi,\xi')$. To show that the infimum is indeed attained, we need to show that the function
    \[
    \Gamma(\xi,\xi')\ni h\mapsto \int_0^1 c(t,m,x+\gamma(t)(h_\lambda - \xi),h_\lambda)dh_\lambda
    \]
    is continuous. To this end, we note that each $h\in\Gamma(\xi,\xi')$ is absolutely continuous and thus admits a.e.\@ a derivative $h'$. By the ``constant speed property'' $|h'_\lambda|_1 = |\xi'-\xi|_1$ for all $\lambda\in [0,1]$. Thus for each $h\in\Gamma(\xi,\xi')$ we have that
    \[
    \int_0^1 c(t,m,x+\gamma(t)(h_\lambda-\xi),h_\lambda) dh_\lambda
    = \int_0^1 c(t,m,x+\gamma(t)(h_\lambda-\xi), h_\lambda) h'_\lambda d\lambda.
    \]\\
    Let us now assume that we are given a sequence of functions $(h^n)_n\subseteq \Gamma(\xi,\xi')$ with $h^n\to h\in \Gamma(\xi,\xi')$ in $C([0,T];\R^l)$. Then the local uniform continuity of $c$ (Assumption \ref{assumption f,c continuous}) implies that 
    \[
        c(t,m,x+\gamma(t)(h^n),h^n) \to c(t,m,x+\gamma(t)(h),h) \quad \mbox{in} \quad 
        C([0,1];\R^{1\times d}) 
    \] 
        and hence also in $L^1$. At the same time, by Banach–Alaoglu theorem, the sequence $((h^n)')_n$ converges weakly* to $h'$ in $L^\infty$. Together, this implies the convergence of the integrals
    \[
    \int_0^1 c(t,m,x+\gamma(t)(h^n_\lambda-\xi), h^n_\lambda) (h^n)'_\lambda d\lambda \to \int_0^1 c(t,m,x+\gamma(t)(h_\lambda-\xi), h_\lambda) h'_\lambda d\lambda,
    \]
    and hence the desired continuity. As a result, the minimum in the Definition \ref{definition c xi} of $C_{D^0}$ is attained since $\Gamma(\xi,\xi')$ is compact.}
    
    To show the existence of a measurable selection function of minimisers, we note that by introducing
    \[
    \bar h \coloneqq \frac{h-\xi}{\xi'-\xi},
    \]
    where the division should be understood component-wise,
    we can further rewrite our problem as a minimisation problem over
    \[
    \Gamma({0_{\R^l},1_{\R^l}}) {= \bigl\{ \bar h\in \Xi(0_{\R^l},1_{\R^l}) \bigm\vert |\bar h_\lambda|_1 = \lambda l \text{ for all }\lambda\in [0,1]\bigr\}},
    \]
    {where $0_{\R^l}$ and $1_{\R^l}$ denotes the constant 0- and 1-vector in $\R^l$, respectively. We obtain that}
    \begin{equation}
    \begin{split}
    &C_{D^0}(t,m,x,\xi,\xi')\\
    &= \min_{\bar h\in \Gamma({0_{\R^l},1_{\R^l}})} \int_0^1 c\big(t,m,x+\gamma(t) ((\xi'-\xi) \odot \bar h_\lambda), \xi + ((\xi'-\xi)\odot \bar h_\lambda)\big)  d\big((\xi'-\xi)\odot \bar h_\lambda\big),
    \end{split}
    \end{equation}
    where $\odot$ denotes the component-wise multiplication of two vectors. Since the function
    \[
    (t,m,x,\xi,\xi',\bar h)\mapsto \int_0^1 c\big(t,m,x+\gamma(t) ((\xi'-\xi) \odot \bar h_\lambda), \xi + ((\xi'-\xi)\odot \bar h_\lambda)\big)  d\big((\xi'-\xi)\odot \bar h_\lambda\big)
    \]
    is measurable, we can use the Measurable Maximum Theorem (see \cite[{Theorem} 18.19]{aliprantis2006border}) to obtain a measurable selection function $\bar\pi:\c M\to \Gamma({0_{\R^l},1_{\R^l}})$ selecting a minimiser. Using $\bar\pi$, we can construct our desired measurable selection function $\pi$ via
    \[
    \pi:\c M\to C([0,1];\R^l),\qquad \pi(t,m,x,\xi,\xi') \coloneqq \xi + (\xi'-\xi)\odot \bar\pi(t,m,x,\xi,\xi'),
    \]
    with $\pi(t,m,x,\xi,\xi') \in \Gamma(\xi,\xi') \subseteq \Xi(\xi,\xi')$ for all $(t,m,x,\xi,\xi')\in \c M$.
\end{proof}

\subsubsection{Discontinuities of the second kind}

The second kind of discontinuities occurs if $\xi_{u-}(\omega) \not= \xi_u(\omega)$ with $m_{u-} \not= m_u$. 
Discontinuities of the second kind occur if a non-$m_{u-}$-negligible number of particles jump at time $u$. In this case the jump costs do not only depend on {\sl how} the jumps are executed -- which would corresponds to the cost $C_{D^0}$ -- but also in which order. To this end, {we} need to interpolate the mapping $u\mapsto \xi_u$ on a distributional level. This leads to the following definition, for which we denote the jump and continuation sets of a path $h$ on an interval $[u-,v]$ by
\[
J^d_{[u,v]}(h) \coloneqq \{r\in [u,v]\mid h_{r-} \not= h_r \} \quad \mbox{and} \quad J^c_{[u,v]}(h) \coloneqq \{r\in [u,v]\mid h_{r-} = h_r\},
\]
respectively. Depending on whether $h$ is a deterministic path like $u\mapsto m_u$ or a random sample path like $u\mapsto \xi_u$, these sets will also be deterministic or random.

\begin{definition}\label{definition c l2}
    Let $(t,m)\in [0,T]\times \c P_2$ and let $\P\in \c P(t,m)$ be a singular control. Let  $(Y,\zeta)$ be the canonical process on $D^0([0,1];\R^d\times\R^l)$. 
    For any $u\in [t,T]$ we set
    \[
    m_{u-,u} \coloneqq \P_{(X_{u-},\xi_{u-},X_u,\xi_u)}
    \]
    and define $\Xi(u,m_{u-,u})$ as the set of all probability laws $\mu\in \c P_2(D^0([0,1];\R^d\times\R^l))$ such that
    \begin{enumerate}[label=(\roman*)]
        \item $\mu_{(Y_0,\zeta_0,Y_1,\zeta_1)} = m_{u-,u}$,
        \item $[0,1]\ni\lambda\mapsto \zeta_\lambda \in\R^l$ is non-decreasing and càdlàg $\mu$-a.s.,
        \item $[0,1]\ni \lambda\mapsto \mu_\lambda \coloneqq \mu_{(Y_\lambda,\zeta_\lambda)} \in \c P_2$ is continuous,
        \item $Y_\lambda - Y_0 = \gamma(u)(\zeta_\lambda-\zeta_0)$ for all $\lambda\in [0,1]$, $\mu$-a.s.
    \end{enumerate}
    We call $\Xi(u,m_{u-,u})$ the set of all interpolating paths for the jump $m_{u-,u}$ at time $u$ and define the costs for such a jump as follows:
    \begin{equation}
    \begin{split}
        &C_{\c P_2} (u,m_{u-,u})\\
        &\coloneqq \inf_{\mu\in \Xi(u,m_{u-,u})} \E^{\mu}\bigg[\int_{J^c_{[0,1]}(\zeta)} c(u,\mu_\lambda,Y_\lambda,\zeta_\lambda) d\zeta_\lambda + \sum_{J^d_{[0,1]}(\zeta)} C_{D^0}(u,\mu_\lambda,Y_{\lambda-},\zeta_{\lambda-},\zeta_\lambda) \bigg],
    \end{split}
    \end{equation}
    {where $\E^\mu$ denotes the expectation under the probability measure $\mu$ on $D^0([0,1];\R^d\times\R^l)$.}
\end{definition}

\subsubsection{Representation using minimal jump costs}

Having defined the functions $C_{D^0}$ and $C_{\c P_2}$, we are now ready to state and prove the following alternative representation of our reward function. The result generalises the formula for the one-dimensional, non-mean-field case \eqref{eq reward functional one-dimensional non-mean-field case}. 
{Its derivation relies crucially on the next two lemmas, which establish bounds for the reward $J_2$ of two-layer parametrisations. The first, Lemma \ref{lemma reward functional upper bound}, provides an upper bound. The idea is to take an arbitrary two-layer parametrisation and decompose its cost $J_2$ according to the discontinuities of first and second kind studied in the previous subsections. 
For each type of discontinuities, we then argue that the cost accrued by the specific paths within the parametrisation cannot be lower than the defined minimal costs $C_{\c P_2}$ and $C_{D^0}$, finally leading to the explicit upper bound below.}

\begin{lemma}\label{lemma reward functional upper bound}
{
    For all $(t,m)\in [0,T]\times\c P_2$ and $\P\in\c P(t,m)$, we have
    \begin{equation}
        \begin{split}
            &\sup_{\substack{((\hat m,\hat r),\bar\P)\text{ is a two-layer}\\\text{parametrisation of }\P\text{ on }[t,T]}} J_2((\hat m,\hat r),\bar\P)\\
            &\leq \E^\P\bigg[\int_t^T f(u,m_u,X_u,\xi_u) du + g(m_T,X_T,\xi_T) - \sum_{J^d_{[t,T]}(m)} C_{\c P_2}(u,m_{u-,u})\\
            &\quad-\sum_{J^c_{[t,T]}(m)\cap J^d_{[t,T]}(\xi)} C_{D^0}(u,m_u,X_{u-},\xi_{u-},\xi_u) - \int_{J^c_{[t,T]}(m)\cap J^c_{[t,T]}(\xi)} c(u,m_u,X_u,\xi_u) d\xi_u\bigg].
        \end{split}
    \end{equation}
}
\end{lemma}
\begin{proof}
    Let $\P\in \c P(t,m)$ be a singular control and $((\hat m,\hat r),\bar \P)$ be a two-layer parametrisation of $\P$. We recall that
        \begin{equation}
        \begin{split}
            &J_2((\hat m,\hat r),\bar\P)\\
            &= \E^{\bar\P}\bigg[\int_0^1 f(\hat r_{\bar s_u},\hat m_{\bar s_u},\bar X_u,\bar\xi_u) d\hat r_{\bar s_u} + g(\hat m_{\bar s_1},\bar X_1,\bar\xi_1) - \int_0^1 c(\hat r_{\bar s_u},\hat m_{\bar s_u},\bar X_u,\bar \xi_u)d\bar\xi_u \bigg].
        \end{split}
        \end{equation}
        The running and terminal payoff can be rewritten as an expectation of $(\tilde X,\tilde\xi)$, as
        \begin{equation}
        \label{eq theorem 3.15 spliting the rewards into the different layers}
        \begin{split}
            &\E^{\bar\P}\bigg[\int_0^1 f(\hat r_{\bar s_u},\hat m_{\bar s_u},\bar X_u,\bar\xi_u) d\hat r_{\bar s_u} + g(\hat m_{\bar s_1},\bar X_1,\bar\xi_1)\bigg]\\
            &= \E^{\bar\P}\bigg[\int_t^T f(\tau,m_\tau,\tilde X_\tau,\tilde \xi_\tau) d\tau + g(m_T,\tilde X_T,\tilde \xi_T)\bigg].
        \end{split}
        \end{equation}
        
        Rewriting the singular control term is slightly more involved as the time scales $\hat r$ and $\bar s$ may not be injective and therefore not invertible. Hence, one time point on the $(\hat X,\hat\xi)$-time scale may correspond to a whole time interval on the $(\bar X,\bar\xi)$-time scale.
        {We note that by the construction of $s$ as right-continuous inverse of $\bar s$, we have $\bar\P$-a.s.\@ that
        \[
        s_{\bar s_v-} \leq v \leq s_{\bar s_v},\quad\text{for all }v\in [0,1].
        \]
        Thus the monotonicity of $\bar\xi$ together with \eqref{eq subsubsection second layer inverse time change layer transformation} implies that, $\bar\P$-a.s.\@,
        \[
        \hat\xi_{\bar s_v-}=\bar\xi_{s_{\bar s_v-}}\leq\bar\xi_v \leq \bar\xi_{s_{\bar s_v}} = \hat\xi_{\bar s_v},\quad\text{for all }v\in [0,1],
        \]
        which in particular shows that $\bar\xi_v = \hat\xi_{\bar s_v}$ if $\bar s_v \in J^c_{[0,1]}(\hat\xi)$. This motivates us to rewrite the singular control cost by splitting the integral as follows:
        }
        
        \begin{equation}
        \begin{split}
            &\E^{\bar\P}\bigg[\int_0^1 c(\hat r_{\bar s_u},\hat m_{\bar s_u},\bar X_u,\bar \xi_u)d\bar\xi_u \bigg]\\
            &= {\E^{\bar\P}\bigg[\int_{\{u \mid \bar s_u \in J^c_{[0,1]}(\hat\xi)\}} c(\hat r_{\bar s_u},\hat m_{\bar s_u},\bar X_u,\bar \xi_u)d\bar\xi_u + \int_{\{u\mid \bar s_u\in J^d_{[0,1]}(\hat\xi)\}} c(\hat r_{\bar s_u},\hat m_{\bar s_u},\bar X_u,\bar \xi_u)d\bar\xi_u \bigg]}\\
            &= {\E^{\bar\P}\bigg[\int_{\{u \mid \bar s_u \in J^c_{[0,1]}(\hat\xi)\}} c(\hat r_{\bar s_u},\hat m_{\bar s_u},\hat X_{\bar s_u},\hat\xi_{\bar s_u})d\hat\xi_{\bar s_u}}\\
            &\quad {+ \int_{\{u \mid u\in [s_{v-},s_v], v\in J^d_{[0,1]}(\hat\xi)\}} }
            {c(\hat r_{\bar s_u},\hat m_{\bar s_u},\bar X_u,\bar \xi_u)d\bar\xi_u \bigg]}\\
            &= \E^{\bar\P}\bigg[\int_{J^c_{[0,1]}(\hat \xi)} c(\hat r_v,\hat m_v,\hat X_v,\hat\xi_v) d\hat\xi_v + \sum_{v\in J^d_{[0,1]}(\hat\xi)} \int_{s_{v-}}^{s_v} c(\hat r_{\bar s_u},\hat m_{\bar s_u},\bar X_u,\bar\xi_u) d\bar\xi_u\bigg].
        \end{split}
        \end{equation}
        Applying the same idea to the time change introduced by $\hat r$, we arrive at
        \begin{equation}
        \label{eq theorem 3.15 spliting the costs into the different layers}
        \begin{split}
            &\E^{\bar\P}\bigg[\int_0^1 c(\hat r_{\bar s_u},\hat m_{\bar s_u},\bar X_u,\bar \xi_u)d\bar\xi_u \bigg]\\
            &= \E^{\bar\P}\bigg[ \int_{J^c_{[t,T]}(m) \cap J^c_{[t,T]}(\tilde\xi)} c(\tau,m_\tau,\tilde X_\tau, \tilde \xi_\tau) d\tilde\xi_\tau + \sum_{\tau\in J^c_{[t,T]}(m)\cap J^d_{[t,T]}(\tilde\xi)}\int_{s_{r_\tau-}}^{s_{r_\tau}} c(\tau,m_\tau,\bar X_u,\bar \xi_u) d\bar\xi_u\\
            &\quad + \sum_{{z}\in J^d_{[t,T]}(m)} \bigg(\int_{J^c_{[r_{{z}-},r_{z}]}(\hat\xi)} c({z},\hat m_v,\hat X_v,\hat \xi_v) d\hat\xi_v + \sum_{v\in J^d_{[r_{{z}-},r_{z}]}(\hat\xi)} \int_{s_{v-}}^{s_v} c({z},\hat m_v,\bar X_u,\bar\xi_u) d\bar\xi_u \bigg)\bigg].
        \end{split}
        \end{equation}
        {We emphasize that that in the equation above, the time points $\tau \in J^c_{[t,T]}(m)\cap J^c_{[t,T]}(\tilde\xi)$ and $\tau \in J^c_{[t,T]}(m)\cap J^d_{[t,T]}(\tilde\xi)$ are random, while the jump times $z\in J^d_{[t,T]}(m)$ are deterministic.}
        
        The resulting representation is already similar to the desired form \eqref{eq reward functional alternative form}. In fact, the first term above is equal to the the last summand in \eqref{eq reward functional alternative form}:
        \[
            \E^{\bar\P}\bigg[\int_{J^c_{[t,T]}(m) \cap J^c_{[t,T]}(\tilde\xi)} c(\tau,m_\tau,\tilde X_\tau, \tilde \xi_\tau) d\tilde\xi_\tau\bigg]
            = \E^\P\bigg[\int_{J^c_{[t,T]}(m)\cap J^c_{[t,T]}(\xi)} c(u,m_u,X_u,\xi_u) d\xi_u \bigg].
        \]
        
        It remains to consider the remaining two terms; we start with the latter one. For any $v\in J^d_{[0,1]}(\hat\xi)$ the path 
        \[
        [s_{v-},s_v]\ni u \mapsto (\bar X_u,\bar\xi_u)\in\R^d\times\R^l
        \]
        interpolates continuously from $(\hat X_{v-},\hat\xi_{v-}) = (\bar X_{s_{v-}},\bar\xi_{s_{v-}})$ to $(\hat X_v,\hat\xi_v) = (\bar X_{s_v},\bar\xi_{s_v})$ as defined in Definition \ref{definition c xi}. Thus, for all ${z} \in [0,1]$, $v\in J^d_{[0,1]}(\hat\xi)$, we have that
        \[
        \int_{s_{v-}}^{s_v} c({z},\hat m_v, \bar X_u,\bar\xi_u) d\bar\xi_u \geq C_{D^0}({z},\hat m_v, \bar X_{s_{v-}},\bar\xi_{s_{v-}},\bar\xi_{s_v}) = C_{D^0}({z},\hat m_v,\hat X_{v-},\hat\xi_{v-},\hat\xi_v).
        \]
        This implies that
        \begin{equation}
        \begin{split}
            &\E^{\bar\P}\bigg[ \sum_{{z}\in J^d_{[t,T]}(m)} \bigg(\int_{J^c_{[r_{{z}-},r_{z}]}(\hat\xi)} c({z},\hat m_v,\hat X_v,\hat \xi_v) d\hat\xi_v + \sum_{v\in J^d_{[r_{{z}-},r_{z}]}(\hat\xi)} \int_{s_{v-}}^{s_v} c({z},\hat m_v,\bar X_u,\bar\xi_u) d\bar\xi_u \bigg) \bigg]\\
            &\geq \sum_{{z}\in J^d_{[t,T]}(m)} \E^{\bar\P}\bigg[\int_{J^c_{[r_{{z}-},r_{z}]}(\hat\xi)} c({z},\hat m_v,\hat X_v,\hat \xi_v) d\hat\xi_v + \sum_{v\in J^d_{[r_{{z}-},r_{z}]}(\hat\xi)} C_{D^0}({z},\hat m_v,\hat X_{v-},\hat\xi_{v-},\hat\xi_v)\bigg].
        \end{split}
        \end{equation}
        Moreover, for all ${z}\in J^d_{[t,T]}(m)$, the path 
        \[
        [r_{{z}-},r_{z}] \ni v \mapsto (\hat X_v,\hat\xi_v)\in \R^d\times\R^l
        \]
        interpolates between $(\tilde X_{{z}-},\tilde\xi_{{z}-})$ and $(\tilde X_{z},\tilde\xi_{z})$ in a càdlàg fashion, such that $\hat\xi$ is non-decreasing. Furthermore, by the definition of $\c P_2$-parametrisations the measure flow $u\mapsto \hat m_u = \bar\P_{(\hat X_u,\hat\xi_u)} \in \c P_2$ is continuous. Thus, the path $\bar\P_{(\hat X,\hat\xi)_{[r_{{z}-},r_{z}]}}$ interpolates the jump $m_{{z}-,{z}} = \bar\P_{(\tilde X_{{z}-},\tilde\xi_{{z}-},\tilde X_{z},\tilde \xi_{z})}$ in the sense of Definition \ref{definition c l2} and we have that
        \begin{equation}
        \begin{split}
            &\sum_{{z}\in J^d_{[t,T]}(m)} \E^{\bar\P}\bigg[\int_{J^c_{[r_{{z}-},r_{z}]}(\hat\xi)} c({z},\hat m_v,\hat X_v,\hat \xi_v) d\hat\xi_v + \sum_{v\in J^d_{[r_{{z}-},r_{z}]}(\hat\xi)} C_{D^0}({z},\hat m_v,\hat X_{v-},\hat\xi_{v-},\hat\xi_v)\bigg]\\
            &\geq \sum_{{z}\in J^d_{[t,T]}(m)} C_{\c P_2}({z},m_{{z}-,{z}}).
        \end{split}
        \end{equation}
        
        Let us now consider the last term in \eqref{eq theorem 3.15 spliting the costs into the different layers}. As before that for $\tau\in J^c_{[t,T]}(m)\cap J^d_{[t,T]}(\tilde\xi)$ the path 
        \[
        [s_{r_\tau-},s_{r_\tau}] \ni v \mapsto (\bar X_v,\bar\xi_v)\in \R^d\times\R^l
        \]
        is an interpolating path from $(\bar X_{s_{r_\tau-}},\bar\xi_{s_{r_\tau-}}) = (\tilde X_{\tau-},\tilde\xi_{\tau-})$ to $(\bar X_{s_{r_\tau}},\bar\xi_{s_{r_\tau}}) = (\tilde X_\tau,\tilde\xi_\tau)$ in the sense of Definition \ref{definition c xi}. Thus we again obtain for all $\tau\in J^c_{[t,T]}(m)\cap J^d_{[t,T]}(\tilde\xi)$ that
        \begin{equation}
        \begin{split}
            \int_{s_{r_\tau-}}^{s_{r_\tau}} c(\tau,m_\tau,\bar X_u,\bar \xi_u) d\bar\xi_u \geq C_{D^0}(\tau,m_\tau,\tilde X_{\tau-},\tilde \xi_{\tau-},\tilde\xi_\tau),
        \end{split}
        \end{equation}
        which implies that
        \begin{equation}
        \begin{split}
            &\E^{\bar\P}\bigg[\sum_{\tau\in J^c_{[t,T]}(m)\cap J^d_{[t,T]}(\tilde\xi)}\int_{s_{r_\tau-}}^{s_{r_\tau}} c(\tau,m_\tau,\bar X_u,\bar \xi_u) d\bar\xi_u \bigg]\\
            &\geq \E^{\bar\P}\bigg[\sum_{\tau\in J^c_{[t,T]}(m)\cap J^d_{[t,T]}(\tilde\xi)} C_{D^0}(\tau,m_\tau,\tilde X_{\tau-},\tilde \xi_{\tau-},\tilde\xi_\tau)\bigg]\\
            &= \E^{\P}\bigg[\sum_{J^c_{[t,T]}(m)\cap J^d_{[t,T]}(\xi)} C_{D^0}(u,m_u,X_{u-},\xi_{u-},\xi_u)\bigg].
        \end{split}
        \end{equation}
    
        Summarising our estimates, we obtain that
        \begin{equation}
        \begin{split}
            &J_2((\hat m,\hat r),\bar\P)\\
            &\leq \E^\P\bigg[\int_t^T f(u,m_u,X_u,\xi_u) du + g(m_T,X_T,\xi_T) - \sum_{J^d_{[t,T]}(m)} C_{\c P_2}(u,m_{u-,u})\\
            &\ -\sum_{J^c_{[t,T]}(m)\cap J^d_{[t,T]}(\xi)} C_{D^0}(u,m_u,X_{u-},\xi_{u-},\xi_u) - \int_{J^c_{[t,T]}(m)\cap J^c_{[t,T]}(\xi)} c(u,m_u,X_u,\xi_u) d\xi_u\bigg].
        \end{split}
        \end{equation}
        Since the the two-layer parametrisation $((\hat m,\hat r),\bar\P)$ was arbitrary, we obtain {the claim by taking the supremum over all two-layer parametrisations of $\P$}.
\end{proof}

{Lemma \ref{lemma reward functional upper bound} provides the first inequality needed for Theorem \ref{theorem reward functional alternative form}. We now establish the converse inequality in Lemma \ref{lemma reward functional lower bound} by constructing specific $\varepsilon$-optimal parametrisations. This involves carefully inserting $\varepsilon$-optimal interpolating paths (related to $C_{\c P_2}$) for distributional jumps and optimal interpolating paths (related to $C_{D^0}$) for pathwise jumps into the two-layer framework.}

\begin{lemma}\label{lemma reward functional lower bound}
{
    For all $(t,m)\in [0,T]\times\c P_2$ and $\P\in\c P(t,m)$, we have
    \begin{equation}
        \begin{split}
            &\sup_{\substack{((\hat m,\hat r),\bar\P)\text{ is a two-layer}\\\text{parametrisation of }\P\text{ on }[t,T]}} J_2((\hat m,\hat r),\bar\P)\\
            &\geq \E^\P\bigg[\int_t^T f(u,m_u,X_u,\xi_u) du + g(m_T,X_T,\xi_T) - \sum_{J^d_{[t,T]}(m)} C_{\c P_2}(u,m_{u-,u})\\
            &\quad-\sum_{J^c_{[t,T]}(m)\cap J^d_{[t,T]}(\xi)} C_{D^0}(u,m_u,X_{u-},\xi_{u-},\xi_u) - \int_{J^c_{[t,T]}(m)\cap J^c_{[t,T]}(\xi)} c(u,m_u,X_u,\xi_u) d\xi_u\bigg].
        \end{split}
    \end{equation}
}
\end{lemma}
\begin{proof}
    {Let} $\P\in \c P(t,m)$ be a singular control. {To prove the claim, it} is sufficient to construct for each $\varepsilon > 0$ a two-layer parametrisation $((\hat m^\varepsilon,\hat r^\varepsilon),\bar\P^\varepsilon)$ of $\P$ satisfying
        \begin{equation}
        \begin{split}
        &\E^\P\bigg[\int_t^T f(u,m_u,X_u,\xi_u) du + g(m_T,X_T,\xi_T) - \sum_{J^d_{[t,T]}(m)} C_{\c P_2}(u,m_{u-,u})\\
        &\quad-\sum_{J^c_{[t,T]}(m)\cap J^d_{[t,T]}(\xi)} C_{D^0}(u,m_u,X_{u-},\xi_{u-},\xi_u) - \int_{J^c_{[t,T]}(m)\cap J^c_{[t,T]}(\xi)} c(u,m_u,X_u,\xi_u) d\xi_u\bigg]\\
        &\leq J_2((\hat m^\varepsilon,\hat r^\varepsilon),\bar\P^\varepsilon) + \varepsilon.
        \end{split}
        \end{equation}
        
        Let us hence fix $\varepsilon > 0$ and omit any dependence of parametrisations on $\varepsilon$ what follows. We start with a weak solution $\tilde\P$ to the SDE \eqref{eq diffusion x} on the canonical space $(\tilde\Omega,\tilde{\c F})$ equipped with the canonical process $(\tilde X,\tilde\xi,\tilde W)$ corresponding to the singular control $\P$. That is, $\tilde\P_{(\tilde X,\tilde\xi)} = \P_{(X,\xi)}$. We use the weak solution to construct a two-layer parametrisation in two steps. In the first step, we construct the $\c P_2$-layer, obtaining a $\c P_2$-parametrisation of $\P$. Subsequently, we construct the $\omega$-wise layer. 

        \begin{enumerate}[wide]
        \item[\textbf{ Step 1. \emph{The $\c P_2$-layer.}}]        
        
        We construct a $\c P_2$-parametrisation from the weak solution $\tilde\P$ to the SDE \eqref{eq diffusion x}. The parametrisation differs from the weak solution as an additional time change is applied to ensure the continuity of the measure flow $u\mapsto m_u$. Thus, we focus on the {\sl deterministic} jump times $u\in J^d_{[t,T]}(m)$ of the measure flow. Our idea is to insert suitable $\c P_2$-continuous interpolations at the jump times using a suitable time change. 
        
        The set of jump times $J^d_{[t,T]}(m)$, and thus the number of jumps we need to interpolate is countable since the process $(\tilde X,\tilde\xi)$ is càdlàg. Let $({{t}}_n)_{n\in\N}$ be a deterministic enumeration of $J^d_{[t,T]}(m)$,\footnote{{Since $(\tilde X,\tilde\xi)$ is càdlàg, by \cite[Lemma 3.7.7, p. 131]{ethier_markov_1986} the set
        \[
        J^d_{[t,T]}(m) = \{s \in [t,T] \mid m_{s-} \not= m_s\} \subseteq \{s \in [t,T] \mid \tilde\P((X_{s-},\xi_{s-}) = (X_s,\xi_s)) < 1\}
        \]
        is at most countable.}} say by ordering the jumps by jump size in a decreasing order. For each $n\in\N$ we consider the jump from $m_{{{t}}_n-}$ to $m_{{{t}}_n}$.
        {We first note that by \eqref{eq diffusion x},
        \[
        \tilde X_{{{t}}_n} = \tilde X_{{{t}}_n-} + \gamma({{t}}_n) (\tilde\xi_{{{t}}_n} - \tilde\xi_{{{t}}_n-}).
        \]
        Thus, after defining the (pathwise) linear interpolating processes
        \[
        \tilde Y_\lambda^n \coloneqq \tilde X_{{{t}}_n-} + \lambda (\tilde X_{{{t}}_n} - \tilde X_{{{t}}_n-}),\qquad \tilde \zeta_\lambda^n \coloneqq \tilde\xi_{{{t}}_n-} + \lambda (\tilde\xi_{{{t}}_n} - \tilde\xi_{{{t}}_n-}),\qquad \lambda\in [0,1],
        \]
        their joint law $\tilde\P_{(\tilde Y^n,\tilde\zeta^n)}$ is an interpolating path for the jump from $m_{t_{n-}}$ to $m_{t_n}$ in the sense of Definition \ref{definition c l2}.\footnote{{We can also construct a $\c P_2$-parametrisation of $\P$ using these linear interpolations; this is the construction outlined in the proof sketch of Lemma \ref{lemma existence W2 parametrisations}. The construction here is more general, using non-linear, $\varepsilon$-optimal interpolations.}} 
        In particular, the minimal jump costs $C_{\c P_2}({{t}}_n, m_{{{t}}_n-,{{t}}_n})$ in Definition \ref{definition c l2} are well-defined and we can find a $(2^{-(n+1)}\varepsilon)$-optimal interpolating path, by which we mean
        }
        an interpolating path $\mu^n$ on the space $D^0([0,1];\R^d\times\R^l)$ equipped with the canonical process $(Y,\zeta)$, such that
        \[
        \mu^n_{(Y_0,\zeta_0,Y_1,\zeta_1)} = m_{{{t}}_n-,{{t}}_n} = \tilde\P_{(\tilde X_{{{t}}_n-},\tilde\xi_{{{t}}_n-},\tilde X_{{{t}}_n},\tilde\xi_{{{t}}_n})},
        \]
        and, using the notation $\mu^n_\lambda \coloneqq \mu^n_{(Y_\lambda,\zeta_\lambda)}$,
        \begin{equation}
        \label{eq theorem 3.15 l2 epsilon optimal interpolation}
        \begin{split}
        &C_{\c P_2}({{t}}_n,m_{{{t}}_n-,{{t}}_n})\\
        &\geq \E^{\mu^n}\bigg[\int_{J^c_{[0,1]}(\zeta)} c({{t}}_n,\mu^n_\lambda,Y_\lambda,\zeta_\lambda) d\zeta_\lambda + \sum_{J^d_{[0,1]}(\zeta)} C_{D^0}({{t}}_n,\mu^n_\lambda,Y_{\lambda-},\zeta_{\lambda-},\zeta_\lambda) \bigg] - 2^{-(n+1)} \varepsilon.
        \end{split}
        \end{equation}
        
        Our next goal is to integrate the measure flows $(\mu_n)_n$ into our process $(\tilde X,\tilde\xi)$. 
        {Since the interpolating measure flows $(\mu_n)_n$ are each defined on separate probability spaces, we first construct a joint probability space supporting both the process $(\tilde X,\tilde\xi)$ and all the interpolating paths.} 
        To this end, we use the fact that $D^0([0,1];\R^d\times\R^l)$ is a complete separable metric space and construct the regular conditional probability of $(Y,\zeta)$ under $\mu^n$ given the random variables $(Y_0, \zeta_0, Y_1, \zeta_1)$ as the stochastic kernel
        \[
        K^n:(\R^d\times \R^l\times\R^d\times\R^l)\times \c B(D^0([0,1];\R^d\times\R^l))\to [0,1]
        \]
        such that
        \[
        \mu^n = \mu^n_{(Y_0,\zeta_0,Y_1,\zeta_1)} K^n = \tilde\P_{(\tilde X_{{{t}}_n-},\tilde\xi_{{{t}}_n-},\tilde X_{{{t}}_n},\tilde\xi_{{{t}}_n})} K^n.
        \]
        Next, {we} introduce the stochastic kernels $\tilde K^n$ from $(\tilde\Omega,\tilde{\c F})$ to the space $(D^0([0,1];\R^d\times\R^l),\allowbreak \c B(D^0([0,1];\R^d\times\R^l)))$ given as
        {
        \begin{equation}\label{eq theorem 3.15 construction tilde Kn}
        \tilde K^n(\tilde\omega,\,\cdot\,) \coloneqq K^n \bigl((\tilde X_{{{t}}_n-}(\tilde\omega),\tilde\xi_{{{t}}_n-}(\tilde\omega),\tilde X_{{{t}}_n}(\tilde\omega),\tilde\xi_{{{t}}_n}(\tilde\omega)), \,\cdot\, \bigr),\quad \text{for all }\tilde\omega\in\tilde\Omega.
        \end{equation}
        For each $\tilde\omega\in\tilde\Omega$ these kernels select paths in $D^0([0,1];\R^d\times\R^l)$ interpolating the jump from $(\tilde X_{{{t}}_n-}(\tilde\omega),\tilde\xi_{{{t}}_n-}(\tilde\omega))$ to $(\tilde X_{{{t}}_n}(\tilde\omega),\tilde\xi_{{{t}}_n}(\tilde\omega))$ consistent with the interpolating measure flow $\mu^n$.}

        {
        Using the probability measure $\tilde\P$ and the kernels $(\tilde K^n)_n$, we now define our joint probability space.} By the Ionescu-Tulcea theorem, there exists a measure $\breve\P$ on
        \[
        (\breve\Omega,\breve{\c F}) \coloneqq \bigg(\tilde\Omega \times \prod_{n=0}^\infty D^0([0,1];\R^d\times\R^l), \tilde{\c F}\otimes \bigotimes_{n=0}^\infty \c B\big(D^0([0,1];\R^d\times\R^l)\big)\bigg),
        \]
        such that for all $N\in\N$ and all $A\in \tilde{\c F} \otimes \bigotimes_{n=0}^N \c B\big(D^0([0,1];\R^d\times\R^l)\big)$,
        \begin{equation}\label{eq theorem 3.15 construction bar P epsilon}
        \bigg(\tilde\P\otimes \bigotimes_{n=0}^N \tilde K^n \bigg) (A) = \breve\P\bigg(A\times \prod_{n=N+1}^\infty D^0([0,1];\R^d\times\R^l) \bigg).
        \end{equation}
        
        We assume w.l.o.g.\@ that $(\breve\Omega,\breve{\c F},\breve\P)$ is complete; else we consider the completion. With a slight abuse of notation, we use $(\tilde X,\tilde\xi,\tilde W),(Y^n,\zeta^n)_n$ to denote the canonical process on the space $(\breve\Omega,\breve{\c F})$. Then, 
        \begin{equation}\label{eq theorem 3.15 definition canonical extensions}
        (\tilde X,\tilde \xi,\tilde W)(\breve\omega) \coloneqq (\tilde X,\tilde\xi,\tilde W)(\tilde\omega),\qquad (Y^n,\zeta^n)(\breve\omega) \coloneqq (Y,\zeta)(\omega^n),
        \end{equation}
        where $\big(\tilde\omega,(\omega^n)_{n\in\N}\big) \coloneqq \breve\omega\in\breve\Omega$.
        Furthermore, by construction of our joint probability space using the kernels $\tilde K^n$, we have for all $n\in\N$ that
        \[
        (\tilde X_{{{t}}_n-},\tilde\xi_{{{t}}_n-}) = (Y^n_0,\zeta^n_0),\qquad (\tilde X_{{{t}}_n},\tilde \xi_{{{t}}_n}) = (Y^n_1,\zeta^n_1)\qquad\breve\P\text{-a.s}.
        \]
        
        {
        We remark that while the measure flow $u\mapsto m_u$ is discontinuous at the (deterministic) times $({{t}}_n)_n$, not every particle needs to jump at ${{t}}_n$. In particular, for all $\breve\omega\in\breve\Omega$ with $(\tilde X_{{{t}}_n-},\tilde\xi_{{{t}}_n-}) = (\tilde X_{{{t}}_n},\xi_{{{t}}_n})$, the interpolating path $\lambda\mapsto (Y^n_\lambda, \zeta^n_\lambda) \coloneqq (\tilde X_{{{t}}_n-},\tilde\xi_{{{t}}_n-})$ is constant.
        }
        
        On the space $(\breve\Omega,\breve{\c F},\breve\P)$ we can now define our $\c P_2$-layer process $(\hat X,\hat\xi)$ by inserting the interpolations for each jump. To this end, we define for all $v\in [t-,T]$
        \[
        r_v \coloneqq  \frac{(v-t) + \norm{{\tilde\xi}_v - {\tilde\xi}_{t-}}_{L^2(\breve\P{)}}}{(T-t) + \norm{{\tilde\xi}_T-{\tilde\xi}_{t-}}_{L^2(\breve\P{)}}}
        \]
        and let $\hat r_u \coloneqq \inf \{v\in {[t,T]}\mid r_v > u\}\land {T}$ be its inverse. 
        {By construction, $r_{v-} = r_v$ if and only if $v \in J^c_{[t,T]}(m)$. We
        recall that $J^d_{[t,T]}(m) = (t_n)_{n \in \mathbb{N}}$ and define the process $(\hat X,\hat\xi)$ as follows:}
        \begin{equation}\label{eq theorem 3.15 definition hat xi}
        \hat\xi_u \coloneqq \tilde\xi_{\hat r_u -} + \sum_{n=0}^\infty \1_{[r_{{{t}}_n-},r_{{{t}}_n}]}(u) \Big(\zeta^n_{\frac{u - r_{{{t}}_n-}}{r_{{{t}}_n} - r_{{{t}}_n-}}} -  \tilde\xi_{{{t}}_n-}\Big),\qquad\text{for }u \in [0,1],
        \end{equation}
        and
        \begin{equation}\label{eq theorem 3.15 definition hat x}
        \begin{split}
        \hat X_u &\coloneqq \tilde X_{\hat r_u -} + \sum_{n=0}^\infty \1_{[r_{{{t}}_n-},r_{{{t}}_n}]}(u) \Big(Y^n_{\frac{u - r_{{{t}}_n-}}{r_{{{t}}_n} - r_{{{t}}_n-}}} -  \tilde X_{{{t}}_n-}\Big)\\
        &= {\tilde X_{\hat r_u -} + \sum_{n=0}^\infty \1_{[r_{{{t}}_n-},r_{{{t}}_n}]}(u) \, \gamma(t_n)\Big(\zeta^n_{\frac{u - r_{{{t}}_n-}}{r_{{{t}}_n} - r_{{{t}}_n-}}} -  \tilde \xi_{{{t}}_n-}\Big)},\qquad\text{for }u \in [0,1],
        \end{split}
        \end{equation}
        {recalling that $Y^n_u = Y^n_0 + \gamma(t_n) (\zeta^n_u - \zeta^n_0)$ by Definition \ref{definition c l2}.
        This construction ensures that $(\hat{X}_{r_v}, \hat{\xi}_{r_v}) = (\tilde{X}_v, \tilde{\xi}_v)$ for $v \in J^c_{[t,T]}(m)$ (i.e., when there is no jump in the measure flow, the processes coincide).  For $u \in [r_{t_n-}, r_{t_n}]$, the processes $(\hat{X}, \hat{\xi})$ follow the interpolated paths $(Y^n, \zeta^n)$, where, crucially, $(Y^n_0, \zeta^n_0) = (\tilde{X}_{t_n-}, \tilde{\xi}_{t_n-})$ and $(Y^n_1, \zeta^n_1) = (\tilde{X}_{t_n}, \tilde{\xi}_{t_n})$.}

        {
        From \eqref{eq theorem 3.15 definition hat xi} and \eqref{eq theorem 3.15 definition hat x}, we observe that for all $u\in [0,1]$,
        \begin{equation}
        \label{eq theorem 3.15 hat x hat xi}
            \hat X_u = \tilde X_{\hat r_u -} + \1_{\{r_{u-} \not= r_u\}} \gamma(\hat r_u) (\hat\xi_u-\tilde\xi_{\hat r_u -})
            = \tilde X_{\hat r_u -} + \gamma(\hat r_u) (\hat\xi_u-\tilde\xi_{\hat r_u -}).
        \end{equation}}
        Since $(\tilde X,\tilde\xi,\tilde W)$ satisfies the SDE \eqref{eq diffusion x}, {with \eqref{eq theorem 3.15 hat x hat xi} we see} that $(\hat X,\hat\xi,\hat r,\tilde W)$ satisfies the time changed SDE \eqref{eq weak solution W2 time changed SDE} {on $(\breve\Omega,\breve{\c F},\breve\P)$}.
        To show that $(\breve\P_{(\hat X,\hat\xi,\breve W)},\hat r)$ is a weak solution to \eqref{eq weak solution W2 time changed SDE}, it remains to verify that $\tilde W$ is also a $\hat{\b F}$-Brownian motion, where $\hat{\c F}_u \coloneqq \c F^{\tilde W}_u\lor \c F^{\hat X,\hat\xi}_{r_u}$. Again we use that $\tilde\P$ is already a weak solution to the SDE \eqref{eq diffusion x}, since this implies that $\tilde W$ is an $\b F^{\tilde W,\tilde X,\tilde\xi}$-Brownian motion, which implies that for all $t\leq u\leq v\leq T$,
        \begin{equation}\label{eq theorem 3.15 independence tilde W}
        \tilde W_v - \tilde W_u \indep (\tilde W_l,\tilde X_l,\tilde\xi_l)_{t-\leq l\leq u}.
        \end{equation}
        Since we constructed the measure $\breve\P$ using the kernels $\tilde K^n$, each of which each only depending on $m_{{{t}}_{n-},{{t}}_n}$, see \eqref{eq theorem 3.15 construction tilde Kn}, we see that the canonical processes $(Y^n,\zeta^n)$ from \eqref{eq theorem 3.15 definition canonical extensions} satisfy for all $n\in\N$ and ${{t}}_n\leq u\leq T$ that
        \begin{equation}\label{eq theorem 3.15 independence hat W}
        \tilde W_u - \tilde W_{{{t}}_n} \indep Y^n,\zeta^n.
        \end{equation}
        Finally, the relations \eqref{eq theorem 3.15 independence tilde W}, \eqref{eq theorem 3.15 independence hat W} together imply that the processes constructed in \eqref{eq theorem 3.15 definition hat xi} and \eqref{eq theorem 3.15 definition hat x} satisfy for all $t\leq u\leq v\leq T$ that
        \[
        \tilde W_v - \tilde W_u \indep (\tilde X_{\hat r_l},\tilde\xi_{\hat r_l})_{0-\leq l\leq r_u},(Y^n,\zeta^n)_{n\in\N,{{t}}_n\leq u},
        \]
        and thus
        \[
        \tilde W_v - \tilde W_u \indep (\hat X_l,\hat\xi_l)_{0-\leq l\leq r_u}.
        \]
        Therefore $\tilde W$ is an $\hat{\b F}$-Brownian motion under $\breve\P$, where $\hat{\c F}_u \coloneqq \c F^{\tilde W}_u\lor \c F^{\hat X,\hat\xi}_{r_u}$, and thus $(\breve\P_{(\hat X,\hat\xi,\tilde W)},\hat r)$ is a weak solution to \eqref{eq weak solution W2 time changed SDE}.
        This implies that $(\breve\P_{(\hat X,\hat\xi)},\hat r)$ is indeed a $\c P_2$-parametrisation of $\P$.
        
        \item[\textbf{ Step 2. \emph{The pathwise layer.}}]
        
        Now that we have constructed the $\c P_2$-layer, we will continue with constructing the $\breve\omega$-wise layer.
        Our goal is to insert an optimally interpolating paths into the first-layer process
        \[
        [0,1]\times\breve\Omega \ni (u,\breve\omega) \mapsto (\hat X_u(\breve\omega),\hat\xi_u(\breve\omega),\hat r_u) \in \R^d\times\R^l\times [t,T].
        \]
        As these interpolations take place on an $\breve\omega$-wise level, extra caution is required to ensure that the resulting process remains measurable.
        
        We start by defining the new time scale $\bar s$.
        For this, we introduce for each $\breve\omega\in\breve\Omega$ {the strictly monotone} mapping
        \begin{equation}
        \label{eq theorem 3.15 definition of inverse time change s}
        s_v(\breve\omega) \coloneqq \frac{v + \arctan(\Var(\hat\xi(\breve\omega), [0,v]))}{1+\frac\pi 2},\qquad \text{for }u\in[0,1],
        \end{equation}
        where $\Var$ denotes the total variation. The process $s$ is measurable and, more importantly, $\b F^{\hat\xi}$-adapted since $s_v$ only depends on information about the process $\hat\xi$ up to the current time $v$. This will guarantee that the original Brownian motion $\tilde\Omega$ remains a Brownian motion for this second layer after inserting the interpolations. 
        We also introduce the inverse, for $u\in [0,1]$,
        \[
        \bar s_u(\breve\omega) \coloneqq \inf\{v\in [0,1]\mid s_v(\breve\omega) > u\} \land 1 = \inf\{v\in [0,1]\cap \Q\mid s_v(\breve\omega) > u \} \land 1,
        \]
        which is then also measurable as a pointwise limit of measurable functions and also $(\c F^{s}_{\bar s_u})_u$-adapted,%
        \footnote{{$(\c F^s_u)_u$ denotes the natural filtration generated by the process $s$ defined in \eqref{eq theorem 3.15 definition of inverse time change s}.}} 
        since $s$ is càdlàg, which implies that $\bar s$ is also $(\c F^{\hat\xi}_{\bar s_u})_u$-adapted.

        It remains to construct the process $(\bar X,\bar\xi)$. To this end, we note that the set of discontinuities
        \[
        \{(u,\breve\omega)\in [0,1]\times\breve\Omega\mid \hat\xi_{u-}(\breve\omega) \not= \hat\xi_u(\breve\omega) \} = \{(u,\breve\omega)\in [0,1]\times\breve\Omega \mid s_{u-}(\breve\omega) \not= s_u(\breve\omega)\}
        \]
        is measurable and since $\hat\xi$ is $\breve\P$-a.s.\@ càdlàg, the set of jump times
        \[
        \{u\in [0,1] \mid \hat\xi_{u-}(\breve\omega) \not= \hat\xi_u(\breve\omega)\} = \{u\in [0,1] \mid s_{u-}(\breve\omega) \not= s_u(\breve\omega)\}
        \]
        is countable for $\breve\P$-almost every $\breve\omega\in\breve\Omega$. Since $(\breve\Omega,\breve{\c F},\breve\P)$ is complete we can enumerate the jump times $(\hat\tau_n(\breve\omega))_{n\in\N}$ in a measurable fashion, by which we mean that $\hat\tau_n:\breve\Omega\to [0,1]$ is measurable for each $n\in\N$; see \cite[Proposition 1.32]{jacod2013limit} for the case that $\hat\xi(\breve\omega)$ is càdlàg, while defining $\hat\tau_n(\breve\omega) \equiv 0$ on the $\breve\P$-negligible set where $\hat\xi(\breve\omega)$ may not be càdlàg.
        
        For each $n\in\N$ we now interpolate the jump from $\hat\xi_{\hat\tau_n(\breve\omega)-}(\breve\omega)$ to $\hat\xi_{\hat\tau_n(\breve\omega)}(\breve\omega)$ optimally for each $\breve\omega$ such that the interpolation is still measurable. This is where Lemma \ref{lemma measurable selection omega-wise jumps} is used. Using the measurable selection function $\pi:\c M\to C([0,1];\R^l)$ introduced therein, we construct $\bar \xi$ by inserting the interpolating paths, suitably adjusted for the time interval, into $\hat \xi$ as follows, for $u\in [0,1]$,
        \[
            \bar\xi_u \coloneqq \hat\xi_{\bar s_u-} + \sum_{n=0}^\infty \1_{[s_{\hat\tau_n-},s_{\hat\tau_n}]}(u) \Big(\pi(\hat r_{\hat\tau_n}, \hat m_{\hat\tau_n},\hat X_{\hat\tau_n-},\hat\xi_{\hat\tau_n-},\hat\xi_{\hat\tau_n})_{\frac{u - s_{\hat\tau_n-}}{s_{\hat\tau_n} - s_{\hat\tau_n-}}} - \hat\xi_{\hat\tau_n-}\Big),
        \]
        and correspondingly $\bar X$ as, for $u\in [0,1]$,
        \[
            \bar X_u \coloneqq \hat X_{\bar s_u-} + \sum_{n=0}^\infty \1_{[s_{\hat\tau_n-},s_{\hat\tau_n}]}(u) \, \gamma(\hat r_{\hat\tau_n}) \Big(\pi(\hat r_{\hat\tau_n}, \hat m_{\hat\tau_n},\hat X_{\hat\tau_n-},\hat\xi_{\hat\tau_n-},\hat\xi_{\hat\tau_n})_{\frac{u - s_{\hat\tau_n-}}{s_{\hat\tau_n} - s_{\hat\tau_n-}}} - \hat\xi_{\hat\tau_n-}\Big),
        \]
        {which makes sure that $\bar X$ and $\bar\xi$ satisfy \eqref{eq two layer parametrisations X hat, X bar given xi hat, xi bar}.}

        {This construction guarantees that for $u \in [s_{\hat{\tau}_n-}, s_{\hat{\tau}_n}]$, the process $\bar{\xi}$ follows the path selected by $\pi$, that means
        \[
        \bar\xi_u = \pi^n_{\frac{u-s_{\hat{\tau}_n-}}{s_{\hat{\tau}_n}-s_{\hat{\tau}_n-}}} \coloneqq \pi(\hat r_{\hat\tau_n},\hat m_{\hat\tau_n},\hat X_{\hat\tau_n-},\hat\xi_{\hat\tau_n-},\hat\xi_{\hat\tau_n})_{\frac{u-s_{\hat{\tau}_n-}}{s_{\hat{\tau}_n}-s_{\hat{\tau}_n-}}},
        \]
        which, by Lemma \ref{lemma measurable selection omega-wise jumps}, is an optimal interpolation between $\hat{\xi}_{\hat{\tau}_n-}$ and $\hat{\xi}_{\hat{\tau}_n}$, which means that}
        \begin{equation}
        \label{eq theorem 3.15 omega optimal interpolation}
        \begin{split}
        C_{D^0}(\hat r_{\hat\tau_n},\hat m_{\hat\tau_n},\hat X_{\hat\tau_n-},\hat\xi_{\hat\tau_n-},\hat\xi_{\hat\tau_n})
        &= {\int_0^1 c\big(\hat r_{\hat\tau_n},\hat m_{\hat\tau_n},\hat X_{\hat\tau_n-} + \gamma(\hat r_{\hat\tau_n})(\pi^n_\lambda - \xi), \pi^n_\lambda\big) d\pi^n_\lambda}\\
        &= \int_{s_{\hat\tau_n-}}^{s_{\hat\tau_n}} c(\hat r_{\hat\tau_n},\hat m_{\hat\tau_n},\bar X_u,\bar\xi_u) d\bar\xi_u,
        \end{split}
        \end{equation}
        
        We note that $(\bar X,\bar\xi)$ is measurable since it is the pointwise limit of compositions of measurable functions.
        {Similar to the previous layer, since} $(\hat X,\hat\xi,\tilde W,\hat r)$ satisfies the SDE \eqref{eq weak solution W2 time changed SDE}, this construction implies that $(\bar X,\bar\xi,\tilde W,\bar s,\hat r)$ also satisfies the time changed SDE \eqref{eq weak solution two-layer time changed SDE}.
        To show that $(\breve\P_{(\bar X,\bar\xi,\tilde W,\bar s)},\hat r)$ is a weak solution to \eqref{eq weak solution two-layer time changed SDE}, it hence remains to verify that $\tilde W$ is also an $\bar{\b F}$-Brownian motion, where $\bar {\c F}_u\coloneqq \c F^{\tilde W}_u\lor \c F^{\bar X,\bar\xi,\bar s}_{s_{r_u}}$. 
        
        We have already shown that $(\breve\P_{(\hat X,\hat\xi,\tilde W)},\hat r)$ is a weak solution to \eqref{eq weak solution W2 time changed SDE}, which implies that for all $\hat\tau_n\leq v\leq T$
        \[
        \tilde W_v - \tilde W_{\hat\tau_n} \indep \hat r_{\hat\tau_n},\hat m_{\hat\tau_n},\hat X_{\hat\tau_n-},\hat\xi_{\hat\tau_n-},\hat\xi_{\hat\tau_n}.
        \]
        Since we constructed the above interpolations selected by $\pi$ in a measurable fashion only depending on $(\hat r_{\hat\tau_n},\hat m_{\hat\tau_n},\hat X_{\hat\tau_n-},\hat\xi_{\hat\tau_n-},\hat\xi_{\hat\tau_n})$, this ensures that
        \[
        \tilde W_v - \tilde W_{\hat\tau_n} \indep \pi(\hat r_{\hat\tau_n},\hat m_{\hat\tau_n},\hat X_{\hat\tau_n-},\hat\xi_{\hat\tau_n-},\hat\xi_{\hat\tau_n}).
        \]
        Using that $\bar s$ is $(\c F^{\hat\xi}_{\bar s_u})_u$-adapted and that $\tilde W$ is an $\hat{\b F}$-Brownian motion where $\hat{\c F}_u \coloneqq \c F^{\tilde W}_u\lor \c F^{\hat X,\hat\xi}_{r_u}$, we now again see that for all $u\in [0,1]$ and $\hat r_{\bar s_u}\leq v\leq T$
        \[
        \tilde W_v - \tilde W_{\hat r_{\bar s_u}} \indep (\bar X_l,\bar\xi_l,\bar s_l)_{0\leq l\leq u}.
        \]
        Therefore $\tilde W$ is also an $\bar {\b F}$-Brownian motion with $\bar {\c F}_u\coloneqq \c F^{\tilde W}_u\lor \c F^{\bar X,\bar\xi,\bar s}_{s_{r_u}}$. Hence, from our construction it follows that $(\breve\P_{(\bar X,\bar\xi,\tilde W,\bar s)},\hat r)$ is a weak solution to \eqref{eq weak solution two-layer time changed SDE}.
        Hence $((\hat m,\hat r),\breve\P_{(\bar X,\bar\xi,\bar s)})$ is a two-layer parametrisation of $\P$.
        
        \item[\textbf{ Step 3. \emph{The reward functional.}}]
        
        Turning to the reward functional for the parametrisation $((\hat m,\hat r),\breve\P_{(\bar X,\bar\xi,\bar s)})$, we note that by construction, following the same calculation steps as in \eqref{eq theorem 3.15 spliting the costs into the different layers}, {we have}
        \begin{equation}
            \label{eq theorem 3.15 reward functional epsilon optimal parametrisation}
        \begin{split}
            &J_2((\hat m,\hat r),\breve\P_{(\bar X,\bar\xi,\bar s)})\\
            &= \E^{\breve\P}\bigg[\int_t^T f({{u}},m_{{u}},\tilde X_{{u}},\tilde \xi_{{u}}) d{{u}} + g(m_T,\tilde X_T,\tilde \xi_T) - \int_{J^c_{[t,T]}(m) \cap J^c_{[t,T]}(\tilde \xi)} c({{u}},m_{{u}},\tilde X_{{u}}, \tilde \xi_{{u}}) d\tilde \xi_{{u}}\\
            &\quad - \sum_{{{z}}\in J^d_{[t,T]}(m)} \bigg(\int_{J^c_{[r_{{{z}}-},r_{{z}}]}(\hat\xi)} c({{z}},\hat m_v,\hat X_v,\hat \xi_v) d\hat\xi_v + \sum_{v \in J^d_{[r_{{{z}}-},r_{{z}}]}(\hat\xi)} \int_{s_{v-}}^{s_v} c({{z}},\hat m_v,\bar X_u,\bar\xi_u) d\bar\xi_u \bigg)\\
            &\quad - \sum_{{{v}} \in J^c_{[t,T]}(m)\cap J^d_{[t,T]}(\tilde \xi)} \int_{s_{r_{{v}}-}}^{s_{r_{{v}}}} c({{v}},m_{{v}},\bar X_u,\bar \xi_u) d\bar\xi_u \bigg].
        \end{split}
        \end{equation}
    
        Since the first three terms in \eqref{eq theorem 3.15 reward functional epsilon optimal parametrisation} are already in a suitable form, we will at first focus on the last two terms.
        For the second last term, we note that by \eqref{eq theorem 3.15 omega optimal interpolation}, $\breve\P$-a.s.,
        \begin{equation}
        \begin{split}
            &\sum_{{{z}}\in J^d_{[t,T]}(m)} \bigg(\int_{J^c_{[r_{{{z}}-},r_{{z}}]}(\hat\xi)} c({{z}},\hat m_v,\hat X_v,\hat \xi_v) d\hat\xi_v + \sum_{v \in J^d_{[r_{{{z}}-},r_{{z}}]}(\hat\xi)} \int_{s_{v-}}^{s_v} c({{z}},\hat m_v,\bar X_u,\bar\xi_u) d\bar\xi_u \bigg)\\
            &=\!\! \sum_{{{z}}\in J^d_{[t,T]}(m)} \bigg(\int_{J^c_{[r_{{{z}}-},r_{{z}}]}(\hat\xi)} c({{z}},\hat m_v,\hat X_v,\hat \xi_v) d\hat\xi_v + \sum_{v \in J^d_{[r_{{{z}}-},r_{{z}}]}(\hat\xi)} C_{D^0}({{z}},\hat m_{{v}}, \hat X_{{{v}}-},\hat\xi_{{{v}}-},\hat\xi_{{v}}) \bigg).
        \end{split}
        \end{equation}
        Further since $J^d_{[t,T]}(m)$ contains exactly the jump times $({{t}}_n)_n$ and since during the jump interval $[r_{{{t}}_n-},r_{{{t}}_n}]$ by construction $(\hat X,\hat\xi)$ is given by $(Y^n,\zeta^n)$, by which we mean that
        \[
        \hat \xi_u = \zeta^n_{\frac{u - r_{{{t}}_n-}}{r_{{{t}}_n} - r_{{{t}}_n-}}},\qquad \hat X_u = Y^n_{\frac{u - r_{{{t}}_n-}}{r_{{{t}}_n} - r_{{{t}}_n-}}}\qquad\text{for }u\in [r_{{{t}}_n-},r_{{{t}}_n}],
        \]
        we can continue rewriting this term as follows
        \begin{equation}
        \begin{split}
            &\sum_{{{z}}\in J^d_{[t,T]}(m)} \bigg(\int_{J^c_{[r_{{{z}}-},r_{{z}}]}(\hat\xi)} c({{z}},\hat m_v,\hat X_v,\hat \xi_v) d\hat\xi_v + \sum_{v \in J^d_{[r_{{{z}}-},r_{{z}}]}(\hat\xi)} C_{D^0}({{z}},\hat m_{{v}}, \hat X_{{{v}}-},\hat\xi_{{{v}}-},\hat\xi_{{v}}) \bigg)\\
            &=   \sum_{n=0}^\infty \bigg(\int_{J^c_{[0,1]}(\zeta^n)} c({{t}}_n,\mu^n_\lambda,Y^n_\lambda,\zeta^n_\lambda) d\zeta^n_\lambda + \sum_{J^d_{[0,1]}(\zeta^n)} C_{D^0}({{t}}_n,\mu^n_\lambda,Y^n_{\lambda-},\zeta^n_{\lambda-},\zeta^n_\lambda) \bigg).
        \end{split}
        \end{equation}
        Hence from \eqref{eq theorem 3.15 l2 epsilon optimal interpolation} and dominated convergence we get
        \begin{equation}
        \begin{split}
            &\E^{\breve\P}\bigg[\sum_{{{z}}\in J^d_{[t,T]}(m)} \bigg(\int_{J^c_{[r_{{{z}}-},r_{{z}}]}(\hat\xi)} c({{z}},\hat m_v,\hat X_v,\hat \xi_v) d\hat\xi_v + \sum_{v \in J^d_{[r_{{{z}}-},r_{{z}}]}(\hat\xi)} \int_{s_{v-}}^{s_v} c({{z}},\hat m_v,\bar X_u,\bar\xi_u) d\bar\xi_u \bigg)\bigg]\\
            &= \sum_{n=0}^\infty \E^{\mu^n}\bigg[\int_{J^c_{[0,1]}(\zeta^n)} c({{t}}_n,\mu^n_\lambda,Y^n_\lambda,\zeta^n_\lambda) d\zeta^n_\lambda + \sum_{J^d_{[0,1]}(\zeta^n)} C_{D^0}({{t}}_n,\mu^n_\lambda,Y^n_{\lambda-},\zeta^n_{\lambda-},\zeta^n_\lambda) \bigg]\\
            &\leq \sum_{n=0}^\infty C_{\c P_2}({{t}}_n,m_{{{t}}_n-,{{t}}_n})+ \varepsilon = \sum_{u\in J^d_{[t,T]}(m)} C_{\c P_2}(u,m_{u-,u}) + \varepsilon.
        \end{split}
        \end{equation}
                
        For the last term in \eqref{eq theorem 3.15 reward functional epsilon optimal parametrisation}, note that $J^c_{[t,T]}(m)\cap J^d_{[t,T]}(\tilde \xi)$ is countable and hence, using \eqref{eq theorem 3.15 omega optimal interpolation}, we see that $\breve\P$-a.s.,
        \begin{equation}
        \begin{split}
            &\sum_{{{v}} \in J^c_{[t,T]}(m)\cap J^d_{[t,T]}(\tilde \xi)} \int_{s_{r_{{v}}-}}^{s_{r_{{v}}}} c({{v}},m_{{v}},\bar X_u,\bar \xi_u) d\bar\xi_u\\
            &= \sum_{{{v}} \in J^c_{[t,T]}(m)\cap J^d_{[t,T]}(\tilde \xi)} C_{D^0}({{v}},m_{{v}},\bar X_{s_{r_{{v}}-}},\bar \xi_{s_{r_{{v}}-}},\bar\xi_{s_{r_{{v}}}})\\
            &= \sum_{{{v}} \in J^c_{[t,T]}(m)\cap J^d_{[t,T]}(\tilde \xi)} C_{D^0}({{v}},m_{{v}},\tilde X_{{{v}}-},\tilde \xi_{{{v}}-},\tilde \xi_{{v}}).
        \end{split}
        \end{equation}
        
        Putting these estimates together, we obtain
        \begin{equation}
        \begin{split}
            &J_2((\hat m,\hat r),\breve\P_{(\bar X,\bar \xi,\bar s)})\\
            &\geq \E^{\breve\P}\bigg[\int_t^T f(u,m_u,\tilde X_u,\tilde \xi_u) du + g(m_T,\tilde X_T,\tilde \xi_T) - \sum_{J^d_{[t,T]}(m)} C_{\c P_2}(u,m_{u-,u})\\
            &-\!\!\!\!\!\sum_{J^c_{[t,T]}(m)\cap J^d_{[t,T]}(\tilde \xi)} C_{D^0}(u,m_u,\tilde X_{u-},\tilde \xi_{u-},\tilde \xi_u) - \int_{J^c_{[t,T]}(m)\cap J^c_{[t,T]}(\tilde \xi)} c(u,m_u,\tilde X_u,\tilde \xi_u) d\tilde \xi_u\bigg]\! -\! \varepsilon\\
            &= \E^\P\bigg[\int_t^T f(u,m_u,X_u,\xi_u) du + g(m_T,X_T,\xi_T) - \sum_{J^d_{[t,T]}(m)} C_{\c P_2}(u,m_{u-,u})\\
            &-\!\!\!\!\!\sum_{J^c_{[t,T]}(m)\cap J^d_{[t,T]}(\xi)} C_{D^0}(u,m_u,X_{u-},\xi_{u-},\xi_u) - \int_{J^c_{[t,T]}(m)\cap J^c_{[t,T]}(\xi)} c(u,m_u,X_u,\xi_u) d\xi_u\bigg]\! -\! \varepsilon.
        \end{split}
        \end{equation}
        Hence by letting $\varepsilon\to 0$, {the claim follows}.
    \end{enumerate}
\end{proof}

{Combining the previous inequalities with the characterisation of the reward functional $J$ from Theorem \ref{theorem representation of J with parametrisations} yields the desired explicit representation.}

\begin{theorem}\label{theorem reward functional alternative form}
    The reward functional $J$ defined in \eqref{eq reward function general definition} allows for the following alternative representation: for all $(t,m)\in [0,T]\times \c P_2$, $\P\in\c P(t,m)$,
        \begin{equation}
        \label{eq reward functional alternative form}
        \begin{split}
        &J(t,m,\P)\\
        &= \E^\P\bigg[\int_t^T f(u,m_u,X_u,\xi_u) du + g(m_T,X_T,\xi_T) - \sum_{J^d_{[t,T]}(m)} C_{\c P_2}(u,m_{u-,u})\\
        &\quad-\sum_{J^c_{[t,T]}(m)\cap J^d_{[t,T]}(\xi)} C_{D^0}(u,m_u,X_{u-},\xi_{u-},\xi_u) - \int_{J^c_{[t,T]}(m)\cap J^c_{[t,T]}(\xi)} c(u,m_u,X_u,\xi_u) d\xi_u\bigg].
        \end{split}
        \end{equation}
\end{theorem}

{
\begin{proof}
    This follows directly from Theorem \ref{theorem representation of J with parametrisations} and Lemmas \ref{lemma reward functional upper bound} and \ref{lemma reward functional lower bound}.
\end{proof}}

\section{{The value function}}\label{section applications}

{This section applies the framework and the explicit representation of the reward functional developed in the previous sections to analyse the value function of the mean-field singular control problem. In Section \ref{section dpp}, we establish a dynamic programming principle. Subsequently, in Section \ref{section qvi}, under additional regularity assumptions, we characterise the value function as the smallest viscosity supersolution to a quasi-variational inequality (QVI) on the Wasserstein space.}

\subsection{The dynamic programming principle}\label{section dpp}
In this subsection, we use the previously proven representation of the reward functional to develop a DPP for our mean-field control setting with singular controls. Since we work with the weak formulation, and thus with the canonical filtration generated by the process $(X,\xi)$ itself, and since we restrict ourselves to a DPP for deterministic times, the proof is rather simple and does not require any measurable selection arguments. In fact, the proof is very similar to the ones in \cite{talbi2021dynamic,djete2019mckean} since the extension to singular controls does not impose any additional mathematical difficulties for the proof of this DPP. Just for this section, we extend our shorthand notation for the measure flow to $m^\P_t = \P_{(X_t,\xi_t)}$ to make it explicit at any point which underlying probability measure we are currently considering.

\begin{theorem}\label{theorem dpp}
    Let $(t,m)\in [0,T]\times \c P_2$. For all $s\in [t,T]$, we have the following dynamic programming principle
    \begin{equation}
    \begin{split}
        &V(t,m)\\
        &= \sup_{\P \in \c P(t,m)} {\bigg( V(s,m^\P_{s-}) + \E^\P\bigg[}\int_t^s f(u,m^\P_u,X_u,\xi_u) du - \sum_{J^d_{[t,s)}(m^\P)} C_{\c P_2}(u,m^\P_{u-,u})\\
        &\ -\sum_{J^c_{[t,s)}(m^\P)\cap J^d_{[t,s)}(\xi)} C_{D^0}(u,m^\P_u,X_{u-},\xi_{u-},\xi_u)
        - \int_{J^c_{[t,s)}(m^\P)\cap J^c_{[t,s)}(\xi)} c(u,m^\P_u,X_u,\xi_u) d\xi_u\bigg]{\bigg)}.
    \end{split}
    \end{equation}
\end{theorem}

\begin{proof}
\begin{enumerate}[wide]
    \item 
    We start by letting $\P \in \c P(t,m)$. Then $\P_{(X,\xi)_{[s-,T]}}\in \c P(s,m^\P_{s-})$ and hence
    \[
    J(s,m^\P_{s-},\P_{(X,\xi)_{[s-,T]}}) \leq V(s,m^\P_{s-}).
    \]
    Now by using Theorem \ref{theorem reward functional alternative form} and then taking the supremum over all $\P\in\c P(t,m)$, we get
    \begin{equation}
    \begin{split}
        &V(t,m)\\
        &= \sup_{\P \in \c P(t,m)} J(t,m,\P)\\
        &\leq \sup_{\P \in \c P(t,m)} {\bigg(V(s,m^\P_{s-}) + \E^\P\bigg[}\int_t^s f(u,m^\P_u,X_u,\xi_u) du - \sum_{J^d_{[t,s)}(m^\P)} C_{\c P_2}(u,m^\P_{u-,u})\\
        &\  -\sum_{J^c_{[t,s)}(m^\P)\cap J^d_{[t,s)}(\xi)} C_{D^0}(u,m^\P_u,X_{u-},\xi_{u-},\xi_u)
        - \int_{J^c_{[t,s)}(m^\P)\cap J^c_{[t,s)}(\xi)} c(u,m^\P_u,X_u,\xi_u) d\xi_u\bigg]{\bigg)}.
    \end{split}
    \end{equation}
    
    \item
    For the other direction, let $\P \in \c P(t,m)$ and $\b Q\in \c P(s,m^\P_{s-})$. Similar to the standard approach, we now want to concatenate these two probability measures together at the time point $s-$. More precisely, our goal is constructing a probability measure $\breve\P \in \c P(t,m)$ such that
    \[
    \breve\P_{(X,\xi)_{[0-,s)}} = \P_{(X,\xi)_{[0-,s)}}\qquad\text{and}\qquad \breve\P_{(X,\xi)_{[s-,T]}} = \b Q.
    \]
    If we assume, that we would have already constructed such a concatenated probability measure, we can use it to proceed as follows
    \begin{equation}
    \begin{split}
        &V(t,m)\\
        &\geq J(t,m,\breve\P)\\
        &= J(s,m^{\breve\P}_{s-},\breve\P_{(\tilde X,\tilde\xi)_{[s-,T]}}) + \E^{\breve\P}\bigg[\int_t^s f(u,m^{\breve\P}_u,\tilde X_u,\tilde \xi_u) du - \sum_{J^d_{[t,s)}(m^{\breve\P})} C_{\c P_2}(u,m^{\breve\P}_{u-,u})\\
        &\  -\sum_{J^c_{[t,s)}(m^{\breve\P})\cap J^d_{[t,s)}(\tilde \xi)} C_{D^0}(u,m^{\breve\P}_u,\tilde X_{u-},\tilde \xi_{u-},\tilde \xi_u)
        - \int_{J^c_{[t,s)}(m^{\breve\P})\cap J^c_{[t,s)}(\tilde \xi)} c(u,m^{\breve\P}_u,\tilde X_u,\tilde\xi_u) d\tilde \xi_u\bigg]\\
        &= J(s,m^{\P}_{s-},\b Q) + \E^{\P}\bigg[\int_t^s f(u,m^{\P}_u,X_u,\xi_u) du - \sum_{J^d_{[t,s)}(m^{\P})} C_{\c P_2}(u,m^\P_{u-,u})\\
        &\  -\sum_{J^c_{[t,s)}(m^{\P})\cap J^d_{[t,s)}(\xi)} C_{D^0}(u,m^{\P}_u,X_{u-},\xi_{u-},\xi_u)
        - \int_{J^c_{[t,s)}(m^{\P})\cap J^c_{[t,s)}(\xi)} c(u,m^{\P}_u,X_u,\xi_u) d\xi_u\bigg].
    \end{split}
    \end{equation}
    And the claim follows by taking the supremum first over all $\b Q\in \c P(s,m^\P_{s-})$ and then over all $\P\in \c P(t,m)$.
    
    So it remains to construct such a concatenated probability measure $\breve\P$. Our main tool will be the transfer argument from \cite[Theorem 6.10]{kallenberg2002foundations}. To this end, let $\tilde\P\in\c P_2(\tilde\Omega)$ be a weak solution to the SDE \eqref{eq diffusion x} corresponding to the control $\P$ and similarly $\tilde\Q\in\c P_2(\tilde\Omega)$ for the control $\Q$. As in the Definition \ref{definition weak solution sde}, we equip the complete, separable space $(\tilde\Omega,\tilde{\c F})$ with the canonical process $(\tilde X,\tilde\xi,\tilde W)$. We now want to apply the transfer theorem to the random variables $(\tilde X_{s-},\tilde\xi_{s-})$ and $(\tilde X,\tilde\xi,\tilde W)$ under $\Q$ to transfer them to the probability space $(\tilde\Omega,\tilde{\c F},\tilde\P)$. By considering a suitable extension instead if necessary, we can assume that $(\tilde\Omega,\tilde{\c F},\tilde\P)$ supports an $\c U[0,1]$-random variable $U$ which is independent of $(\tilde X,\tilde\xi)$.
    Now since
    \[
    \tilde\P_{(\tilde X_{s-},\tilde\xi_{s-})} = m^\P_{s-} = \tilde\Q_{(\tilde X_{s-},\tilde\xi_{s-})},
    \]
    we can use the transfer theorem from \cite[Theorem 6.10]{kallenberg2002foundations} to obtain a measurable function $\psi:\R^d\times\R^l\times [0,1]\to D^0\times C([0,T];\R^m)$ such that the transferred random variables
    \[
    (\tilde X^\Q,\tilde\xi^\Q,\tilde W^\Q) \coloneqq \psi(\tilde X_{s-},\tilde \xi_{s-},U)
    \]
    satisfy
    \[
    \tilde\P_{((\tilde X_{s-},\tilde\xi_{s-}), (\tilde X^\Q,\tilde \xi^\Q,\tilde W^\Q))} = \tilde\Q_{((\tilde X_{s-},\tilde\xi_{s-}),(\tilde X,\tilde\xi,\tilde W))}.
    \]
    So we can construct the concatenated processes
    \[
    (\breve X_r,\breve \xi_r,\breve W_r) \coloneqq \begin{cases}
        (\tilde X_r,\tilde \xi_r,\tilde W_r){,}& r < s{,}\\
        (\tilde X^\Q_r,\tilde \xi^\Q_r,W^\P_{s-} + (\tilde W^\Q_r - \tilde W^\Q_{s-})){,}&r \geq s.
    \end{cases}
    \]
    Then due to the concatenation, $(\breve X,\breve \xi,\breve W)$ satisfies the SDE \eqref{eq diffusion x} on $[t-,T]$ and due to the transfer argument, $\breve W$ is also an $\b F^{\breve X,\breve\xi,\breve W}$-Brownian motion on $[t,T]$. Hence $\tilde\P_{(\breve X,\breve \xi,\breve W)}$ is a weak solution to the SDE \eqref{eq diffusion x} on $[t-,T]$. Therefore, by defining
    \[
    \breve\P \coloneqq \tilde\P_{(\breve X,\breve \xi)}\in \c P_2(D^0),
    \]
    we complete our construction of the probability measure $\breve\P$ on $D^0$ since by construction $\breve\P \in \c P(t,m)$.
\end{enumerate}
\end{proof}

\subsection{The quasi-variational inequality}\label{section qvi}
In this section we characterise the value function in terms of a quasi-variational inequality (QVI) in the Wasserstein space $\c P_2 = \c P_2(\R^d\times\R^l)$. The main idea behind a QVI is that we can split the state space into two regions: the \emph{intervention region}, where we have to act and modify the current control state $\xi$, and the \emph{continuation region}, where it is optimal to hold the control fixed. 

The the proof of our main result uses an important result from \cite{cosso2021master} and hence requires the following additional assumptions. 

\begin{standingAssumption}
\begin{enumerate}[label=(B\arabic*)]
    \item The functions $b$ and $\sigma$ are bounded.
    \item The functions $f,g$ and $c$ are bounded locally in $\xi$ uniformly in $t,m,x$.
    \item The functions $f,g$ and $c$ are locally Lipschitz in $\xi$ uniformly in $t,m,x$ and globally Lipschitz in $m,x$ uniformly in $t,\xi$.
    \item The functions $b,\sigma,\gamma,f$ and $c$ are $\alpha$-Hölder continuous in $t$, for some $\alpha\in (0,1]$, locally uniformly in $\xi$ uniformly in $m,x$.
    \item \label{assumption sigma indepdent of m}The function $\sigma$ does not depend on $m$ and $\sigma \in C^{1,2}_b$, by which we mean that $\partial_t \sigma$, $D_{(x,\xi)} \sigma$ and $D^2_{(x,\xi)(x,\xi)} \sigma$ exist and are globally bounded in $t,x$ locally uniformly in $\xi$.
\end{enumerate}
\end{standingAssumption}

A key assumptions in \cite{cosso2021master} is \ref{assumption sigma indepdent of m}, which is a strong restriction as it reduces our state dynamics to
\[
dX_t = b(t,m_t,X_t,\xi_t) dt + \sigma(t,X_t,\xi_t) dW_t + \gamma(t) d\xi_t.
\]

\subsubsection{The continuation region}
On the continuation region, the control process {is} $\xi$ constant. Hence, the state dynamics is  essentially uncontrolled and continuous. As a result, we can make use of the Itô formula for continuous measure flows obtained in \cite{cosso2021master} to derive an HJB-type equation on the Wasserstein space for the value function on the continuation region.

\paragraph{Differential calculus on the Wasserstein space}
For the reader's convenience, we now recall the main tools of differential calculus in the Wasserstein space used in this paper. We start with the definition of the so called \emph{linear derivative} or \emph{flat derivative}, see also \cite{guo2020ito}. We say a function ${U}:\c P_2\to \R$ admits a linear derivative if there exists a function $\delta_m {U}:\c P_2\times\R^d\times\R^l\to \R$ such that
\[
{U}(m') - {U}(m) = \int_0^1 \int_{\R^d\times\R^l} \delta_m {U}(\lambda m' + (1-\lambda)m, x,\xi)(m'-m)(dx,d\xi) d\lambda.
\]
If it exists, the linear derivative $\delta_m {U}$ is uniquely defined up to an additive constant.  For functions of the form
\begin{equation}\label{eq example differentiable function}
    {U}(m) = \int_{\R^d\times \R^l} \psi(x,\xi) m(dx,d\xi),
\end{equation}
the linear derivative is given, up to an additive constant, by
\[
\delta_m {U}(m,x,\xi) = \psi(x,\xi).
\]

Another way to define differential calculus in the Wasserstein space is given by the so called \emph{$L$-derivatives}; for details we refer to \cite{carmona2018probabilistic}. For functions of the form \eqref{eq example differentiable function} the $L$-derivative is given by
\[
\partial_\mu {U}(t,m)(x,\xi) = \partial_{(x,\xi)} \psi(x,\xi) = \partial_{(x,\xi)} \delta_m {U}(m,x,\xi).
\]
A similar relation between the two derivatives can be established under quite general assumptions; see \cite{carmona2018probabilistic} for details.

We will work with the space $C_2^{1,2}$ { defined as follows.}

{
\begin{definition}
    We define $C_2^{1,2} \coloneqq C_2^{1,2}([0,T]\times\c P_2)$ as the space of all functions ${U}:[0,T]\times \c P_2\to\R$ such that
    \begin{enumerate}[label=(\roman*)]
        \item both the time derivative $\partial_t {U}(t,m)$ and the linear derivative $\delta_m {U}(t,m)$ exist and are jointly continuous in $(t,m)$,
        \item the derivatives $\partial_{(x,\xi)} \delta_m {U}(t,m,x,\xi)$ and $\partial_{(x,\xi)(x,\xi)}^2 \delta_m {U}(t,m,x,\xi)$ as well as the $L$-derivatives $\delta_\mu U(t,m,x,\xi)$ and $\delta_{(x,\xi)} \delta_\mu U(t,m,x,\xi)$ exist, are jointly continuous in $(t,m)$ and have quadratic growth in $(x,\xi)$ uniformly in $(t,m)$. That is, there exists a constant $C > 0$ such that for all $(t,m,x,\xi)$,
        \begin{equation}
            \begin{split}
        |\partial_{(x,\xi)} \delta_m {U}(t,m,x,\xi)| + |\partial_{(x,\xi)(x,\xi)}^2 \delta_m {U}(t,m,x,\xi)| \\
        + |\delta_\mu U(t,m,x,\xi)| + |\delta_{(x,\xi)} \delta_\mu U(t,m,x,\xi)|
        &\leq C (1 + |x|^2 + |\xi|^2).
            \end{split}
        \end{equation}
    \end{enumerate}
\end{definition}
}

{We note that} under the above assumptions the different notions of differentiability agree, that is, $\partial_\mu {U} = \partial_{(x,\xi)} \delta_m {U}$.

Under the growth assumptions on $C_2^{1,2}$, in \cite{cosso2021master} the authors proved an Itô formula for continuous measure flows. Their definition of $C_2^{1,2}$ is potentially less restrictive as they do not assume a priori that for functions in $C_2^{1,2}$ their linear derivatives exist, but instead only work with the $L$-derivatives $\partial_\mu {U}$ and $\partial_{(x,\xi)} \partial_\mu {U}$. In our work we prefer to work with the linear derivative as it will come up naturally in the intervention region as we will see in Subsection \ref{subsection intervention region}. The following corollary paraphrases \cite[Theorem 3.3]{cosso2021master} in terms of the linear derivative, adapted to our setting.

\begin{corollary}[{\cite[Theorem 3.3]{cosso2021master}}]\label{corollary ito formula}
    Let ${U}\in C_2^{1,2}$, $0\leq t\leq s\leq T$, $m\in \c P_2$ and $\P\in \c P(t,m)$. Suppose that $\xi_r = \xi_{t-}$ for all $r\in [t,s]$. Then
    \[
        {U}(s,m_s) = {U}(t,m_t) + \int_t^s \b L {U}(r,m_r) dr,
    \]
    where $\b L$ is defined as follows
    \begin{equation}
    \begin{split}
        \b L {U}(t,m) &\coloneqq \partial_t {U}(t,m) + \int_{\R^d\times\R^l} \Big(b(t,m,x,\xi) \cdot \partial_x \delta_m {U}(t,m,x,\xi)\\
        &\qquad + \frac 1 2 (\sigma\sigma^T) (t,x,\xi) : \partial_{xx}^2 \delta_m {U}(t,m,x,\xi)\Big) m(dx,d\xi),
    \end{split}
    \end{equation}
    {where $\cdot$ denotes the scalar product on $\R^d$, $:$ the Frobenius inner product on $\R^{d\times d}$ and $\sigma^T\in\R^{m\times d}$ the transpose of $\sigma\in\R^{d\times m}$.}
\end{corollary}

It is important to note that due to the rather weak quadratic growth assumptions of $C^{1,2}_2$ for the {first} and second partial derivatives, we can only use the Itô formula without jumps established in \cite{cosso2021master}. In our case this is sufficient since the process $(X,\xi)$ is continuous on the continuation region. Thus by standard theory, the DPP from Section \ref{section dpp} together with the Itô formula in Corollary \ref{corollary ito formula} suggest that $V$, given it is sufficiently smooth, should satisfy
\[
- \b L V(t,m) - \int_{\R^d\times\R^l} f(t,m,x,\xi) m(dx,d\xi) = 0\qquad\text{on }[0,T)\times\c P_2,\qquad V(T,\cdot) = g,
\]
on the continuation region.

\subsubsection{The intervention region}\label{subsection intervention region}
For the intervention region it turns out that a Markovian viewpoint (in the sense of law invariance) of the previously established jump costs (see Section \ref{subsection reward functional jump representation}) will be useful. We recall that the current state is described by $(t,m) \in [0,T]\times \c P_2$ and start with the following definition. It establishes an order on the Wasserstein space and tells which states $m' = m_t$ can be reached from the current distribution $m = m_{t-}$. The notation is borrowed from \cite{talbi2021dynamic}.

\begin{definition}
    For every fixed $t\in [0,T]$, we define the partial order $\preceq_t$ on $\c P_2$ as follows. We say that $m'\preceq_t m$, that is $m'$ is reachable at time $t$ from $m$, if and only if there exists some $\P\in \c P(t,m)$ (thus $m_{t-} = m$) such that $m_t = m'$.
\end{definition}

Following this straightforward definition, we now prove a more transparent characterisation of the set of reachable states. The idea here is similar to the one employed in Section \ref{subsection reward functional jump representation}, where we approximated jumps by interpolating paths.

\begin{definition}\label{definition interpolating paths measure flow}
    Let $t\in [0,T]$ and $m,m'\in \c P_2$. We define $\Gamma(t,m,m')$ as the set of all probability laws $\mu\in \c P_2(D^0([0,1];\R^d\times\R^l))$, which we equip with the canonical process $(Y,\zeta)$, such that
    \begin{enumerate}[label=(\roman*)]
        \item $\mu_{(Y_0,\zeta_0)} = m$ and $\mu_{(Y_1,\zeta_1)} = m'$,
        \item $\lambda\mapsto \zeta_\lambda$ is non-decreasing and càdlàg $\mu$-a.s.,
        \item $[0,1]\ni \lambda\mapsto \mu_{(Y_\lambda,\zeta_\lambda)} \in \c P_2$ is continuous,
        \item $Y_\lambda - Y_0 = \gamma(t)(\zeta_\lambda-\zeta_0)$ for all $\lambda\in [0,1]$, $\mu$-a.s.
    \end{enumerate}
    We call $\Gamma(t,m,m')$ the set of interpolating paths between $m$ and $m'$ at time $t$.
\end{definition}

\begin{lemma}\label{lemma m preceq m}
    Let $(t,m) \in [0,T]\times \c P_2$. Then, $m' \preceq_t m$ if and only if $\Gamma(t,m,m')$ is nonempty, that is, if there exists a interpolating path between $m$ and $m'$.
\end{lemma}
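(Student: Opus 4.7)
The plan is a direct two-direction proof. For $(\Rightarrow)$ the interpolating path is obtained by linearly connecting the endpoints of the jump at $t$ under any admissible control witnessing the reachability. For $(\Leftarrow)$, I would build a weak solution to \eqref{eq diffusion x} on $[t-,T]$ by executing a single jump at $t$ distributed according to the joint endpoint law of $\mu$, then freezing $\xi$ and running a standard McKean-Vlasov diffusion on $(t,T]$. No two-layer parametrisation or measurable-selection machinery is required, as we only need to exhibit one admissible control.

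For the forward direction, suppose $m' \preceq_t m$ is witnessed by $\P \in \c P(t,m)$ with $m_t = m'$. I would take $\mu$ to be the pushforward of $\P$ under the measurable map that sends a path $(X_\cdot,\xi_\cdot)$ to its linear interpolation $\lambda \mapsto (1-\lambda)(X_{t-},\xi_{t-}) + \lambda(X_t,\xi_t)$ on $[0,1]$, viewed as an element of $D^0([0,1];\R^d\times\R^l)$. Conditions (i)--(iii) of Definition \ref{definition interpolating paths measure flow} follow immediately from the affine structure in $\lambda$, the monotonicity $\xi_{t-} \le \xi_t$ which is part of admissibility, and the resulting continuity of the $\c P_2$-valued curve $\lambda \mapsto \mu_{(Y_\lambda,\zeta_\lambda)}$ (checked by Wasserstein dominated convergence using the bound $\|(Y_\lambda,\zeta_\lambda)\| \le \|(Y_0,\zeta_0)\| + \|(Y_1,\zeta_1)\|$). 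Condition (iv) is precisely the instantaneous jump identity $X_t - X_{t-} = \gamma(t)(\xi_t - \xi_{t-})$ that is built into the weak-solution Definition \ref{definition weak solution sde}.

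For the backward direction, given $\mu \in \Gamma(t,m,m')$ I would read off the endpoint joint law $\nu \coloneqq \mu_{(Y_0,\zeta_0,Y_1,\zeta_1)}$, which by (iv) is supported on quadruples satisfying $y_1 - y_0 = \gamma(t)(z_1 - z_0)$ and whose two marginals are $m$ and $m'$. On an enlarged probability space carrying $(X_{t-},\xi_{t-},X_t,\xi_t) \sim \nu$ together with an independent $m$-dimensional Brownian motion $W$ on $[t,T]$, I would set $(X_s,\xi_s) \equiv (X_{t-},\xi_{t-})$ on $[t-\varepsilon,t)$, execute the jump at $t$, and for $s \in (t,T]$ freeze $\xi_s \equiv \xi_t$ and let $X_s$ solve the frozen-control McKean-Vlasov SDE
\[
dX_s = b(s, m_s, X_s, \xi_t)\,ds + \sigma(s, m_s, X_s, \xi_t)\,dW_s,
\]
where $m_s$ denotes the joint law of $(X_s,\xi_s)$. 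Weak existence under the standing Lipschitz assumption (C1) is classical. The resulting law of $(X,\xi)$ on $D^0$ then lies in $\c P(t,m)$ and satisfies $m_t = m'$, proving $m' \preceq_t m$.

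The only real technicality will be to confirm that this concatenation is a genuine weak solution in the sense of Definition \ref{definition weak solution sde}, i.e.\ that $W$ remains a Brownian motion for the canonical filtration $\b F^{X,\xi,W}$. This follows from the chosen independence between $W$ and $(X_{t-},\xi_{t-},X_t,\xi_t)$ combined with the fact that $(X,\xi)$ is deterministic (indeed constant) on $[t-\varepsilon,t)$ and is constructed adapted on $(t,T]$ to the filtration generated by $(X_t,\xi_t)$ and the increments $W_{\cdot}-W_t$.
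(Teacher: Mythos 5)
Your proposal is correct and follows essentially the same route as the paper's own proof. For the forward direction you take $Y_\lambda = (1-\lambda)X_{t-}+\lambda X_t$ while the paper writes $Y_\lambda = X_{t-}+\gamma(t)(\zeta_\lambda-\xi_{t-})$, but these coincide under the weak-solution jump identity $X_t - X_{t-}=\gamma(t)(\xi_t-\xi_{t-})$, which you correctly invoke; and for the converse direction both arguments read off the endpoint data from $\mu$, impose the jump at $t$, freeze $\xi$, and run the frozen McKean--Vlasov dynamics on $(t,T]$ with an independent Brownian motion, checking that the filtration condition in Definition~\ref{definition weak solution sde} holds.
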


\begin{proof}
    If $m'\preceq_t m$, then there exists some $\P \in \c P(t,m)$ such that $m_{t-} = m$ and $m_t = m'$. Now we define
    \[
    \zeta_\lambda \coloneqq (1-\lambda) \xi_{t-} + \lambda \xi_t,\qquad Y_\lambda \coloneqq X_{t-} + \gamma(t) (\zeta_\lambda - \xi_{t-}),
    \]
    and we directly verify that $\mu_\lambda \coloneqq \P_{(Y_\lambda,\zeta_\lambda)}$ indeed has the desired property.
    
    Conversely, given such a path $\mu\in \Gamma(t,m,m')$ from $m$ to $m'$ with the canonical process $(Y,\zeta)$, we can construct a fitting control process $\xi$ via
    \[
    \xi_s \coloneqq \zeta_0 \1_{[0-,t)}(s) + \zeta_1 \1_{[t,T]}(s).
    \]
    To define the corresponding state process, we need an independent Brownian motion $W$, for which we extend the probability space if needed. Then we can define the state process as $X_s \coloneqq Y_0$ on $[0-,t)$ and on $[t,T]$ as the solution to
    \[
    dX_{{s}} = b({{s}},m_{{s}},X_{{s}},\xi_{{s}}) d{{s}} + \sigma({{s}},m_{{s}},X_{{s}},\xi_{{s}}) dW_{{s}},\qquad X_t = Y_1.
    \]
    
    Since $\mu\in \Gamma(t,m,m')$, we have that $Y_1 = Y_0 + \gamma(t)(\zeta_1-\zeta_0)$, and thus $X_t = X_{t-} + \gamma(t)(\xi_t-\xi_{t-})$. Hence $\P_{(X,\xi,W)}$ is a weak solution to \eqref{eq diffusion x} on $[t-,T]$ and thus $\P_{(X,\xi)} \in \c P(t,m)$ with $m_{t-} = m$ and $m_t = m'$.
\end{proof}

In terms of the functions $\Xi(t,\bar m)$ introduced in Definition \ref{definition c l2} it follows from Definition \ref{definition interpolating paths measure flow} that 
\[
    \Gamma(t,m,m') = \bigcup_{\bar m \in\Pi(m,m')}\Xi(t,\bar m),
\]    
where $\Pi(m,m')$ denotes the set of all couplings $\bar m \in \c P_2(\R^d\times\R^l\times\R^d\times\R^l)$ with marginal measures $m$ and $m'$. In view of Section \ref{subsection reward functional jump representation}, the natural extension of the minimal jump costs to a distributional level jumps is given as follows.

\begin{definition}\label{definition c m}
    Let $t\in [0,T]$ and $m,m'\in \c P_2$ with $m'\preceq_t m$. We define the minimal jump costs from $m$ to $m'$ at time $t$ as
    \begin{equation}
    \begin{split}
        &C_m(t,m,m')\\
        &\coloneqq \inf_{\bar m \in \Pi(m,m')} C_{\c P_2}(t,\bar m)\\
        &= \inf_{\mu \in \Gamma(t,m,m')} \E^{\mu}\bigg[\int_{J^c_{[0,1]}(\zeta)} c(t,\mu_\lambda,Y_\lambda,\zeta_\lambda) d\zeta_\lambda + \sum_{J^d_{[0,1]}(\zeta)} C_{D^0}(t,\mu_\lambda,Y_{\lambda-},\zeta_{\lambda-},\zeta_\lambda)  \bigg],
    \end{split}
    \end{equation}
    where $\mu_\lambda \coloneqq \mu_{(Y_\lambda,\zeta_\lambda)}$. 
\end{definition}

For the intervention region, we characterised the set of all reachable states via the partial order $\preceq_t$, and the corresponding jump costs by $C_m$. So from a control perspective, if it is favourable to act immediately, there should exist a more favourable state $m' \preceq_t m$ with $V(t,m) = V(t,m') - C_m(t,m,m')$ that one should move to. At the same time, by dynamic programming, $V(t,m) \geq V(t,{m'}) - C_m(t,m,m')$ for every $m'\preceq_t m$. Therefore, we expect that the following holds on the intervention region:
\[
V(t,m) - {\sup_{m'\preceq_t m}} [V(t,m') - C_m(t,m,m')] = 0.
\]
Along with the HJB equation from the continuation region this yields the following QVI on the Wasserstein space, which we would expect the value function to fulfil under sufficient regularity assumptions,
\begin{equation}
\begin{split}
    &\min\bigg\{-\b L {U}(t,m) - \int_{\R^d\times\R^l}f(t,m,x,\xi) m(dx,d\xi),\\
    &\qquad
    {U}(t,m) - {\sup_{m'\preceq_t m}} [{U}(t,{m'}) - C_m(t,m,m')] \bigg\} = 0,\qquad \text{ on }[0,T)\times\c P_2.\label{eq first qvi}
\end{split}
\end{equation}

However, the above QVI is ill-behaved. Every function ${U}$ is a subsolution to this QVI since
\[
{U}(t,m) = {U}(t,m) -  C_m(t,m,{m}) \leq {\sup_{m'\preceq_t m}} [{U}(t,m') - C_m(t,m,m')].
\]
This is a common for singular control problems. The main approach is to replace this intervention part of the QVI by a better behaved expression involving the derivative of ${U}$. The tools needed for this are provided by the following key lemma. 

The lemma characterises the optimality of a function with respect to jumps of the measure flow ${s}\mapsto m_{{s}}$ in terms of the linear derivative. A similar result has already been obtained by \cite{talbi2021dynamic} and used in \cite{talbi2022viscosity} for a mean-field optimal stopping problem. Using similar arguments as in \cite{talbi2021dynamic} the following lemma extends the previously obtained results by allowing for jump costs that may depend on the connecting path.

\begin{lemma}\label{lemma optimality}
    Let ${U}\in C_2^{1,2}$ and $t\in [0,T]$. Assume that for all $m\in \c P_2$ we have that
    \begin{equation}\label{eq key lemma C inequality}
        {U}(t,m) + C_m(t,m,m') \geq {U}(t,m')\qquad\text{for all }m'\preceq_t m.
    \end{equation}
    Then it also holds for all $m\in\c P_2$ that
    \begin{equation}\label{eq key lemma derivative inequality}
        \partial_{(x,\xi)} \delta_m {U}(t,m,x,\xi) \cdot (\gamma(t),1) \leq c(t,m,x,\xi)\qquad\text{for all }(x,\xi)\in\R^d\times\R^l.
    \end{equation}
\end{lemma}

\begin{proof}
Let us denote for every $m\in \c P_2$ the set of points, for which the claim \eqref{eq key lemma derivative inequality} does not hold, by
\[
\c N_m = \bigl\{(x,\xi)\in \R^d\times\R^l \bigm\vert \partial_{(x,\xi)} \delta_m {U}(t,m,x,\xi) \cdot (\gamma(t),1) > c(t,m,x,\xi) \bigr\}.
\]
Then it is sufficient to prove that $\c N_m$ is empty for all $m\in\c P_2$. We proceed in two steps.

\begin{enumerate}[wide]
    \item[\emph{Step 1: $m(\c N_m) = 0$ for all $m\in\c P_2$.}]
    Let us assume to the contrary that there exists an $m\in\c P_2$ with $m(\c N_m) > 0$. We now consider for each $\varepsilon > 0$ the state-control configuration
    \[
    Y_\varepsilon = X_{t-} + \1_{\c N_m}(X_{t-},\xi_{t-}) \gamma(t) \varepsilon,\qquad \zeta_\varepsilon = \xi_{t-} + \1_{\c N_m}(X_{t-},\xi_{t-}) \varepsilon,
    \]
    which corresponds to moving just the particles in $\c N_m$ by $\varepsilon$. We denote the law of $(Y_\varepsilon,\zeta_\varepsilon)$ by $m^{\c N_m}_\varepsilon$.
    By construction, 
    \[
    [0,1] \ni \lambda \mapsto m^{\c N_m}_{\lambda \varepsilon} \in \c P_2
    \]
    is a continuous path from $m$ to $m^{\c N_m}_\varepsilon$ and in $\Gamma(t,m,m^{\c N_m}_\varepsilon)$.
    In particular, the costs of moving along this path from $m$ to $m^{\c N_m}_\varepsilon$ are given by
    \[
    \E\bigg[\int_0^1 c(t,m^{\c N_m}_{\lambda\varepsilon},Y_{\lambda\varepsilon},\zeta_{\lambda\varepsilon}) d\zeta_{\lambda\varepsilon}\bigg] = \int_0^\varepsilon \int_{\c N_m} c(t,m_\lambda^{\c N_m},x + \gamma(t)\lambda, \xi +\lambda) m(dx,d\xi) d\lambda.
    \]
    Now using that by construction and Lemma \ref{lemma m preceq m}, $m^{\c N_m}_\varepsilon \preceq_t m$, we obtain that
    \[
    {U}(t,m^{\c N_m}_\varepsilon) - {U}(t,m) \leq C_m(t,m,m^{\c N_m}_\varepsilon) \leq  \int_0^\varepsilon \int_{\c N_m} c(t,m_\lambda^{\c N_m},x + \gamma(t)\lambda, \xi +\lambda) m(dx,d\xi) d\lambda.
    \]
    At the same time, we can use the definition of the linear derivative to rewrite the left-hand side as
    \begin{equation}
    \begin{split}
        &{U}(t,m^{\c N_m}_\varepsilon) - {U}(t,m) \\
        &= \int_0^1 \int_{\R^d\times \R^l} \delta_m {U}(t,m^{\c N_m}_{\lambda\varepsilon},x,\xi) (m^{\c N_m}_{\lambda\varepsilon} - m)(dx,d\xi) d\lambda\\
        &= \int_0^\varepsilon \int_{\c N_m} \frac{\delta_m {U}(t,m^{\c N_m}_\lambda, x+\gamma(t)\lambda,\xi+\lambda) - \delta_m {U} (t,m^{\c N_m}_\lambda,x,\xi)}{\varepsilon} m(dx,d\xi) d\lambda\\
        &= \int_0^\varepsilon \int_{\c N_m} \partial_{(x,\xi)} \delta_m {U}(t,m^{\c N_m}_\lambda,x+\gamma(t)\tilde\lambda,\xi+\tilde\lambda)\cdot (\gamma(t),1) m(dx,d\xi) d\lambda,
    \end{split}
    \end{equation}
    for some $\tilde \lambda \in (0,\lambda)$ depending on $x,\xi,\lambda$. Together with the previous estimate, we obtain that
    \begin{equation}
    \begin{split}
        \frac 1 \varepsilon \int_0^\varepsilon \int_{\c N_m} \Big( \partial_{(x,\xi)} \delta_m {U}(t,m^{\c N_m}_\lambda,x+\gamma(t)\tilde\lambda,\xi+\tilde\lambda)\cdot (\gamma(t),1)&\\
        - c(t,m_\lambda^{\c N_m},x + \gamma(t)\lambda, \xi +\lambda) \Big) m(dx,d\xi) d\lambda& \leq 0.
    \end{split}
    \end{equation}
    Since $\partial_{(x,\xi)} \delta_m {U}$ and $c$ have at most quadratic growth in $x,\xi$ locally uniform in $m$, and due to the compactness of $\{m^{\c N_m}_\varepsilon \mid \varepsilon \in [0,1] \}$ and the continuity of the integrand in $\lambda = 0$, we can use Lebesgue's differentiation theorem to obtain
    \[
    \int_{\c N_m} \Big(\partial_{(x,\xi)} \delta_m {U}(t,m,x,\xi) \cdot (\gamma(t),1) - c(t,m,x,\xi) \Big) m(dx,d\xi) \leq 0,
    \]
    and thus $m(\c N_m) = 0$.
    \item[\emph{Step 2: $\c N_m$ is empty for all $m\in\c P_2$.}]
    Let us assume to the contrary that $\c N_m$ is not empty, so there exists an $(x,\xi)\in \c N_m$. We now look at the following approximating sequence for $m$,
    \[
    m_\lambda \coloneqq \lambda \delta_{(x,\xi)} + (1-\lambda)m \in \c P_2,\qquad\lambda\in [0,1],
    \]
    and we recall from the previous step that $m_\lambda(\c N_{m_\lambda}) = 0$ and thus $\delta_{(x,\xi)}(\c N_{m_\lambda}) = 0$ for all $\lambda\in [0,1]$. Since $c$ and $\delta_m {U}$ are continuous, this implies now $\delta_{(x,\xi)}(\c N_m) = 0$ and hence $(x,\xi)\notin\c N_m$.
\end{enumerate}
\end{proof}

\begin{remark}
    If we restrict ourselves to functions ${U}\in C^{1,2}_2$ that satisfy the slightly stronger conditions of $\partial_{(x,\xi)} \delta_m {U}$ being of linear growth in $x,\xi$ locally uniformly in $(t,m)$ and $\partial_{(x,\xi)(x,\xi)}^2 \delta_m {U}$ being uniformly bounded in $(x,\xi)$ locally uniformly in $(t,m)$, then a simple application of the Itô formula proven in \cite{guo2020ito,talbi2021dynamic} allows for a reverse direction of the preceding lemma: given $t\in [0,T]$, if \eqref{eq key lemma derivative inequality} holds for all $m\in \c P_2$, then \eqref{eq key lemma C inequality} also holds for all $m\in\c P_2$.
\end{remark}

\subsubsection{Characterisation of the value function}

Using the characterisation of the intervention region in terms of the linear derivative leads us to the following QVI
\begin{equation}
\label{eq singular qvi}
\begin{split}
    &\min\bigg\{-\b L {U}(t,m) - \int_{\R^d\times\R^l} f(t,m,x,\xi) m(dx,d\xi),\\
    &\qquad\inf_{(x,\xi)\in\R^d\times \R^l} \Big[c(t,m,x,\xi) - \partial_{(x,\xi)} \delta_m {U}(t,m,x,\xi)\cdot (\gamma(t),1)\Big] \bigg\} = 0\qquad\text{on }[0,T)\times\c P_2,\\
    &\min \bigg\{{U}(T,m) - \int_{\R^d\times\R^l} g(m,x,\xi) m(dx,d\xi),\\
    &\qquad\inf_{(x,\xi)\in \R^d\times\R^l} \Big[c(T,m,x,\xi) - \partial_{(x,\xi)} \delta_m {U}(T,m,x,\xi) \cdot (\gamma(t),1)\Big] \bigg\} = 0\qquad\text{on }\{T\}\times\c P_2.
\end{split}
\end{equation}
In the rest of this section, we will show that this QVI indeed characterises the value function. In general we do not expect the value function to have full $C^{1,2}_2$-regularity. Hence we will now introduce a weaker notion of solutions to the above QVI.

\begin{definition}
    A lower semi-continuous function ${U}:[0,T]\times\c P_2\to\R$ is called a \emph{viscosity supersolution} to \eqref{eq singular qvi} if for every $(\bar t,\bar m)\in [0,T]\times\c P_2$ and every test function $\varphi\in C_2^{1,2}$ with $\varphi \leq {U}$ and ${U}(\bar t,\bar m) = \varphi(\bar t,\bar m)$, we have if $\bar t < T$,
    \begin{equation}
    \begin{split}
        &\min \bigg\{-\b L\varphi(\bar t,\bar m) - \int_{\R^d\times\R^l} f(\bar t,\bar m,x,\xi) \bar m(dx,d\xi),\\
        &\qquad\inf_{(x,\xi)\in\R^d\times\R^l} \Big[c(\bar t,\bar m,x,\xi) - \partial_{(x,\xi)} \delta_m \varphi(\bar t,\bar m,x,\xi)\cdot (\gamma(\bar t),1)\Big] \bigg\} \geq 0,
    \end{split}
    \end{equation}
    and if $\bar t = T$,
    \begin{equation}
    \begin{split}
        &\min \bigg\{{U}(T,\bar m) - \int_{\R^d\times\R^l} g(\bar m,x,\xi) \bar m(dx,d\xi),\\
        &\qquad\inf_{(x,\xi)\in\R^d\times\R^l} \Big[c(T,\bar m,x,\xi) - \partial_{(x,\xi)} \delta_m \varphi(T,\bar m,x,\xi)\cdot (\gamma(T),1)\Big] \bigg\} \geq 0.
    \end{split}
    \end{equation}
\end{definition}

We are now ready to state the main result of this section. The proof proceeds in several steps. 

\begin{theorem}\label{theorem value function qvi characterisation}
    The value function $V$ is a viscosity supersolution to \eqref{eq singular qvi}, and if we further assume that $V$ is bounded, then for every other bounded, continuous viscosity supersolution ${U}$ of \eqref{eq singular qvi}, we have $V\leq {U}$. In particular, if $V$ is bounded and continuous, then it is the minimal bounded, continuous viscosity supersolution to \eqref{eq singular qvi}.
\end{theorem}

To prove the above theorem, it turns our to be very useful to take a closer look again at the set of bounded velocity controls $\c L(t,m)$, which we defined as controls of the form
\begin{equation}\label{eq connection singular and regular control}
\xi_s = \xi_{t-} + \int_t^s u_r dr,\qquad s\in [t,T],
\end{equation}
for some bounded, adapted, non-negative $(u_s)_{s\in [t,T]}$.
By viewing $u$ as our control, we can transform the bounded velocity problem into a regular control problem.
This connection between singular controls and regular controls is commonly used in the singular control literature.

\begin{definition}
    We define by
    \[
    \c L^K(t,m) \coloneqq \bigl\{\P\in\c L(t,m) \bigm\vert \xi\text{ is Lipschitz continuous with constant }K\text{ on }[t-,T],\ \P\text{-a.s.} \bigr\}
    \]
    the subset $K$-bounded velocity controls and denote by 
    \[
    V^K(t,m) \coloneqq \sup_{\P\in \c L^K(t,m)} J(t,m,\P)
    \]
    the supremum of the reward function over the set of $K$-bounded velocity controls.  
\end{definition}

In view of Theorem \ref{theorem representation of J with parametrisations} we see that the restricted value functions $V^K$ converge to the original value function as $K \to \infty$.

\begin{lemma}\label{lemma bounded velocity approximation}
    For all $(t,m) \in [0,T]\times \c P_2$, we have
    \[
    V^K(t,m)\uparrow V(t,m)\qquad\text{for }K\to\infty.
    \]
\end{lemma}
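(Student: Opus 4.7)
The plan is to derive the claim almost entirely from Theorem \ref{theorem representation of J with parametrisations}, since all the hard analytic content has been packaged there. First I would observe the easy direction: the inclusions $\c L^K(t,m) \subseteq \c L^{K+1}(t,m) \subseteq \c P(t,m)$ immediately give that $(V^K(t,m))_K$ is non-decreasing and bounded above by $V(t,m)$. In particular, the limit exists and equals the supremum, and $\lim_K V^K(t,m) \leq V(t,m)$.

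For the reverse inequality, I would fix an arbitrary singular control $\P \in \c P(t,m)$ and $\varepsilon > 0$. By Theorem \ref{theorem representation of J with parametrisations} there exists a sequence of bounded velocity controls $(\P^n)_n \subseteq \c L(t,m)$ with $\P^n \to \P$ in $\c P_2(D^0)$ such that
\[
\limsup_n J(t,m,\P^n) \geq J(t,m,\P) - \varepsilon.
\]
The key point is that each $\P^n$ actually lies in some $\c L^{K_n}(t,m)$: by definition of $\c L(t,m)$, the control under $\P^n$ admits the representation $\xi_s = \xi_{t-} + \int_t^s u_r^n\,dr$ for some bounded, adapted, non-negative process $u^n$, and setting $K_n \coloneqq \|u^n\|_{L^\infty} < \infty$ makes $\xi$ Lipschitz with constant $K_n$, $\P^n$-a.s. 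Consequently $J(t,m,\P^n) \leq V^{K_n}(t,m) \leq \sup_K V^K(t,m)$, and taking the limsup in $n$ gives $J(t,m,\P) - \varepsilon \leq \sup_K V^K(t,m)$. Sending $\varepsilon \downarrow 0$ and then taking the supremum over $\P \in \c P(t,m)$ yields $V(t,m) \leq \sup_K V^K(t,m) = \lim_K V^K(t,m)$, closing the argument.

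I do not anticipate any genuine obstacle: the statement is essentially a corollary of Theorem \ref{theorem representation of J with parametrisations}. The only subtlety worth noting is that the Lipschitz constants $K_n$ of the approximating sequence may well diverge with $n$, which is why the claim is stated as a monotone limit over $K$ rather than a uniform estimate; the argument above handles this by only using that each individual $\P^n$ belongs to \emph{some} $\c L^{K_n}$ and hence contributes to $\sup_K V^K(t,m)$.
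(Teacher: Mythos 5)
Your proposal is correct and follows essentially the same route as the paper, which presents the lemma as an immediate consequence of Theorem \ref{theorem representation of J with parametrisations} without spelling out the intermediate steps. Your write-up fills in those steps (monotonicity in $K$ for the easy inequality; the observation that every $\P^n\in\c L(t,m)$ lies in some $\c L^{K_n}(t,m)$ for the converse) accurately.
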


\begin{proof}
{
    By Theorem \ref{theorem representation of J with parametrisations}, we have
    \[
    V(t,m) = \sup_{\P\in\c P(t,m)} J(t,m,\P) = \sup_{\P\in\c L(t,m)} J(t,m,\P).
    \]
    Thus the claim follows by noting that $\c L^K(t,m)\uparrow \c L(t,m)$ as $K\to\infty$.
}
\end{proof}

For bounded velocity controls $\P \in \c L(t,m)\subseteq \c C(t,m)$, using the representation \eqref{eq connection singular and regular control}, the reward functional is given by
\begin{equation}
\begin{split}
    J(t,m,\P) &= \E^\P\bigg[\int_t^T f(s,m_s,X_s,\xi_s) ds + g(m_T,X_T,\xi_T) - \int_t^T c(s,m_s,X_s,\xi_s) d\xi_s \bigg]\\
    &= \E^\P\bigg[\int_t^T f(s,m_s,X_s,\xi_s) ds + g(m_T,X_T,\xi_T) - \int_t^T c(s,m_s,X_s,\xi_s) \cdot u_s ds \bigg].
\end{split}
\end{equation}
This allows us to consider the process $u$ as our control and hence consider the state dynamics 
\begin{equation}
\begin{split}
    dX_t &= \big(b(t,m_t,X_t,\xi_t) + \gamma(t) u_t\big) dt + \sigma(t,X_t,\xi_t) dW_t,\\
    d\xi_t &= u_t dt.
\end{split}
\end{equation}

Restricting the set of admissible controls to the set $\c L^K(t,m)$ we obtain a regular control problem with value function $V^K$.  
This allows us to apply a comparison principle and uniqueness result for the master equation that has recently been obtained by \cite{cosso2021master}. While we consider the weak formulation of the control problem, \cite{cosso2021master} work with the strong formulation where the probability space is fixed and the filtration is generated by the driving Brownian motion and an additional uniformly distributed random variable. However, for the value function itself this makes no difference; by \cite[Theorem 1]{djete_mckeanvlasov_2022} the value functions for the weak and the strong formulation coincide. Also definition of $C_2^{1,2}$ {in \cite{cosso2021master}} is slightly different and contains possibly more functions than ours. However, all of the arguments {in \cite[Theorems 3.8 and 5.1]{cosso2021master}} still hold if we replace their definition with ours.

The master equation for the regular control problem reads as follows
\begin{equation}
\begin{aligned}
    &-\partial_t v(t,m) - \int_{\R^d\times \R^l} \sup_{u\geq 0, \norm{u}\leq K} \bigg[ f(t,m,x,\xi) - c(t,m,x,\xi) \cdot u\\*
    &\qquad+ \big(b(t,m,x,\xi) + \gamma(t) u\big) \cdot \partial_x \delta_m v(t,m,x,\xi)
    + u \cdot \partial_\xi \delta_m v(t,m,x,\xi)\hspace{20pt}\\*
    &\qquad+ \frac 1 2 (\sigma\sigma^T)(t,x,\xi) : \partial_{xx}^2 \delta_m v(t,m,x,\xi) \bigg] m(dx,d\xi) = 0 &\mathllap{\text{on }[0,T)\times\c P_2,}\\*
    &v(T,m) = \int_{\R^d\times\R^l} g(m,x,\xi) m(dx,d\xi)&\mathllap{\text{on }\{T\}\times\c P_2.}
\end{aligned}
\end{equation}
Using the operator $\b L$ {(see Corollary \ref{corollary ito formula})}, the equation can be compactly rewritten as 
\begin{equation}
\label{eq singular bounded velocity hjb}
\begin{aligned}[b]
    &- \b L v(t,m) - \int_{\R^d\times\R^l} f(t,m,x,\xi) m(dx,d\xi) + \int_{\R^d\times \R^l} \inf_{u\geq 0,\norm u\leq K} \Big[\big(c(t,m,x,\xi)\hspace{-35pt}\\*
    &\qquad- \partial_{(x,\xi)} \delta_m v(t,m,x,\xi) \cdot (\gamma(t),1)\big) \cdot u \Big] m(dx,d\xi) = 0&\text{on }[0,T)\times\c P_2,\\*
    &v(T,m) = \int_{\R^d\times\R^l} g(m,x,\xi) m(dx,d\xi)&\mathllap{\text{on }\{T\}\times\c P_2.}
\end{aligned}
\end{equation}
For this equation, \cite{cosso2021master} proved the following result.

\begin{corollary}[{\cite[Theorems 3.8 and 5.1]{cosso2021master}}]\label{corollary bounded velocity}
    The value function $V^K$ is the unique bounded, continuous viscosity solution to \eqref{eq singular bounded velocity hjb}. Furthermore, for every bounded, continuous viscosity supersolution ${U}:[0,T]\times\c P_2\to \R$ to \eqref{eq singular bounded velocity hjb}, we have $V^K \leq {U}$ on $[0,T]\times\c P_2$.
\end{corollary}

We extend this characterisation using the approximation result in Lemma \ref{lemma bounded velocity approximation} to the original value function $V$.

\begin{proof}[Proof of Theorem \ref{theorem value function qvi characterisation}]
\begin{enumerate}[wide]
    \item 
    Combining Lemma \ref{lemma bounded velocity approximation} and \ref{corollary bounded velocity}, we deduce that $V$ is lower semi-continuous.
    To prove that $V$ is a viscosity supersolution to \eqref{eq singular qvi}, we first consider the case $(\bar t,\bar m)\in [0,T)\times \c P_2$. Let $\varphi\in C_2^{1,2}$ with $V\geq \varphi$ and $
    V(\bar t,\bar m) = \varphi(\bar t,\bar m)$. Further let $m\preceq_{\bar t} \bar m$, which means that there exists a $\P\in \c P(t,m)$ with $m_{\bar t-} = \bar m$ and $m_{\bar t} = m$. Using the DPP established in Theorem \ref{theorem dpp}, we obtain that
    \[
    \varphi(\bar t,\bar m) = V(\bar t,\bar m) \geq V(\bar t,m) - C_m(\bar t,\bar m,m) \geq \varphi(\bar t, m) - C_m(\bar t,\bar m,m).
    \]
    Since this holds for all $m\preceq_{\bar t}\bar m$, we can use Lemma \ref{lemma optimality} to obtain
    \[
    \partial_{(x,\xi)} \delta_m \varphi(\bar t,\bar m,x,\xi)\cdot (\gamma(t),1) \leq c(\bar t,\bar m,x,\xi)\qquad\text{for all }(x,\xi) \in \R^d\times \R^l.
    \]
    
    For the other part of the QVI, we choose $\P\in \c P(\bar t,\bar m)$ such that $\xi_s \equiv \xi_{\bar t-}$ on $s\in [\bar t,T]$. Due to the absence of jumps in $(X,\xi)$ on $[\bar t,T]$, we can now use It\^o's formula, see Corollary \ref{corollary ito formula} and get, for $\delta > 0$,
    \[
    \varphi(\bar t+\delta,m_{\bar t+\delta}) = \varphi(\bar t,\bar m) + \int_{\bar t}^{\bar t+\delta} \b L\varphi(s,m_s)ds.
    \]
    At the same time, we obtain from the DPP in Theorem \ref{theorem dpp}
    \[
    V(\bar t,\bar m) \geq V(\bar t+\delta,m_{\bar t+\delta}) + \int_{\bar t}^{\bar t+\delta} \int_{\R^d\times\R^l} f(s,m_s,x,\xi) m_s(dx,d\xi) ds.
    \]
    Together, this implies
    \begin{equation}
    \begin{split}
    0 &\leq V(\bar t+\delta,m_{\bar t+\delta}) - \varphi(\bar t+\delta,m_{\bar t+\delta})\\
    &\leq \int_{\bar t}^{\bar t+\delta} \bigg(- \b L \varphi(s,m_s) - \int_{\R^d\times\R^l} f(s,m_s,x,\xi) m_s(dx,d\xi)\bigg)  ds.
    \end{split}
    \end{equation}

    {We let $\delta\to 0$ and apply the mean value theorem, using that $s\mapsto m_s$ is continuous by construction, $f$ is continuous by Assumption \ref{assumption f,c continuous} and $\varphi\in C^{1,2}_2$. We obtain}
    \[
    -\b L \varphi(\bar t,\bar m) - \int_{\R^d\times\R^l} f(\bar t,\bar m,x,\xi) \bar m(dx,d\xi) \geq 0.
    \]
    The case $(T,\bar m)\in \{T\}\times \c P_2$ is very similar and is hence omitted.
    \item
    The second part of the proof is the minimality of $V$. So now let ${U}$ be a bounded, continuous viscosity supersolution to \eqref{eq singular qvi}. Then for every $K$, the function ${U}$ is also a bounded, continuous viscosity supersolution to \eqref{eq singular bounded velocity hjb}, since we can show
    \begin{equation}
    \begin{split}
    \int_{\R^d\times\R^l} \inf_{u\geq 0, \norm{u} \leq K} \Big[ \big(c(t,m,x,\xi) - \partial_{(x,\xi)} \delta_m v(t,m,x,\xi) \cdot (\gamma(t),1)\big) \cdot u \Big]m(dx,d\xi) \geq 0,\\
    \text{on }[0,T]\times\c P_2,
    \end{split}
    \end{equation}
    in the viscosity sense, using the supersolution property. Note that actually even equality holds since we can choose $u = 0$.
    
    Thus by the comparison principle in Corollary \ref{corollary bounded velocity} originally obtained by \cite{cosso2021master}, we deduce that
    \[
    V^K \leq {U}\qquad\text{on }[0,T]\times\c P_2.
    \]
    Now taking the limit $K\to\infty$ together with Lemma \ref{lemma bounded velocity approximation} completes the proof.
\end{enumerate}
\end{proof}


\let\c\predefinedC
\printbibliography


\end{document}